\documentclass[10pt,aps,prx,onecolumn,superscriptaddress,floatfix,nofootinbib]{revtex4-2}

\usepackage{amsmath}
\usepackage{amsthm}
\usepackage{amsfonts, amssymb, bbm, braket, accents}
\usepackage{graphicx}
\usepackage{wrapfig}
\usepackage[usenames,dvipsnames]{color}
\usepackage[makeroom]{cancel}
\usepackage{soul}
\usepackage{tikz}
\usepackage{float}
\usepackage{tcolorbox}
\makeatletter
\let\newfloat\newfloat@ltx
\makeatother
\usepackage{algorithm}
\usepackage{algpseudocode}
\usetikzlibrary{shapes.arrows} 
\usetikzlibrary{shapes.geometric}
\usetikzlibrary{calc} 
\usepackage[makeroom]{cancel}
\usepackage{verbatim}
\usepackage{pifont}
\usepackage[mathscr]{euscript}

\usepackage{booktabs}
\usepackage{xcolor}
\usepackage{siunitx}
\usepackage{colortbl}
\usepackage{algorithm}
\usepackage{algpseudocode}

\definecolor{cellmin}{rgb}{1,1,1}
\definecolor{cellmax}{rgb}{0.25,1,0.25}

\newtheorem{theorem}{Theorem}

\newtheorem{lemma}[theorem]{Lemma}
\newtheorem{corollary}[theorem]{Corollary}

\newtheorem{definition}[theorem]{Definition}

\newtheorem{remark}[theorem]{Remark}

\newcommand{\defeq}{\mathrel{:=}}

\newcommand{\norm}[1]{\left\lVert#1\right\rVert}
\newcommand{\abs}[1]{\left\lvert#1\right\rvert}
\newcommand{\eps}{\varepsilon}
\newcommand{\infi}{\infty}
\newcommand{\probP}{\text{I\kern-0.15em P}}
\newcommand{\probE}{\text{I\kern-0.15em E}}

\usepackage{graphicx}
\usepackage{enumitem}
\usepackage{xcolor}
\usepackage{tikz}
\usetikzlibrary{arrows.meta,positioning,fit}
\usepackage[breaklinks=true]{hyperref}

\definecolor{linkcolor}{HTML}{8B1E3F}
\definecolor{citecolor}{HTML}{1F4E79}
\definecolor{urlcolor}{HTML}{5E3C99}

\hypersetup{colorlinks=true,
    citecolor=citecolor,
    linkcolor=linkcolor,
    urlcolor=urlcolor
}

\usepackage[capitalise]{cleveref}
\crefformat{equation}{Eq.~(#2#1#3)}
\crefformat{section}{Sec.~#2#1#3}
\Crefformat{equation}{Equation~(#2#1#3)}
\crefformat{figure}{Fig.~#2#1#3}
\crefrangeformat{equation}{Eqs.~#3(#1)#4--#5(#2)#6}
\Crefformat{section}{Section~#2#1#3}

\begin{document}

\title{The Cost of Removing Tunability in Quantum Data Re-Uploading}

\author{Anthony Yuezhang Liu}
\affiliation{Centre for Quantum Technologies, National University of Singapore, Singapore}
\affiliation{Department of Mathematics, National University of Singapore, Singapore}

\author{Lirandë Pira}
\email{lpira@nus.edu.sg}
\affiliation{Centre for Quantum Technologies, National University of Singapore, Singapore}

\date{\today}

\begin{abstract}
Fixed encoding data re-uploading quantum circuits provide a striking example of universality emerging from a highly constrained architecture. However, universality alone is insufficient for assessing the theoretical and practical value of fixed and tunable upload circuits. The resource cost of removing tunability remains poorly understood. In this work, we establish quantitative depth-error scaling for approximating tunable upload circuits with fixed upload circuits. We show that a tunable upload circuit can be approximated by a fixed upload circuit using depth
\(
D = O_\sigma\!\left[(\log(1/\varepsilon))^\sigma\right]
\)
for every \(\sigma>1\), with a target dependent constant overhead, thereby improving the previously known polynomial dependence on \(1/\varepsilon\) with the same overhead.  Our proof is based on an auxiliary extension approximation mechanism that combines Gevrey class construction, Jackson's theorem and generalized quantum signal processing theorem.
Thus, the expressive power lost by removing tunability can be recovered using only polylogarithmic growth in circuit depth with a target dependent constant overhead. We further identify a periodic mismatch obstruction intrinsic to fixed upload approximations and use Turán-Nazarov inequalities to prove logarithmic lower bounds
\(
D = \Omega(\log(1/\varepsilon))
\)
for the approximation of mismatch class target tunable upload circuits. 
Conceptually, our analysis reveals two structural mechanisms underlying approximation in fixed upload architectures: auxiliary extensions and mismatch obstructions. These results provide a quantitative understanding of how expressivity is transferred from tunable frequencies into circuit depth, and suggest a broader framework for studying approximation complexity in quantum signal processing and related quantum learning models.
\end{abstract}

\maketitle

\newpage
\tableofcontents
\newpage

\section{Introduction}

A central theme in approximation theory and machine learning is that highly expressive models can emerge from remarkably restricted architectures. Classical universal approximation theorems show that simple neural network architectures can approximate arbitrary continuous functions despite severe structural constraints~\cite{cybenko1989approximation,hornik1991approximation}. Quantum computing has prompted the search for analogous expressive structures in quantum information processing, where computation is performed using quantum states and operations subject to quantum physical constraints~\cite{feynman1982simulating,shor1997,NielsenChuang2010}. Thus, similar phenomena have recently emerged in quantum machine learning, where parametrized quantum circuits and data re-uploading architectures exhibit universal approximation properties despite operating in low-dimensional Hilbert spaces and using highly constrained circuit structures~\cite{srinivasan2017surveyofquantumlearningtheory,benedetti2019parameterized,perezsalinas2020datareuploading,perezsalinas2021onequbit,goto2021universalapproximation,dutta2022singlequbit}. In the standard re-uploading paradigm, trainable quantum operations are interleaved with tunable upload gates. The expressive power of these models is closely connected to the ability to tune upload frequencies, allowing the circuit to explore a smaller but more problem-efficient approximation alphabet. The expressivity of different quantum machine learning model classes is a subject of extensive research~\cite{schuld2021effect,sim2019expressibility,du2020expressive,caro2020pseudodimension,caro2022generalization,manzano2025approximation}. Recently, however, it was shown that universality can be achieved under a significantly stronger architectural restriction: all tunable upload circuits may be approximated by circuits consisting of gates with one fixed upload frequency~\cite{perezsalinas2025universal}. In this fixed upload circuit model, tunable frequencies over the upload gates are removed entirely, yet, the resulting hypothesis class preserves the universal approximation property.

Universality alone, however, provides only a qualitative statement~\cite{barron1993universal,mhaskar1996neuralnetworks,yarotsky2017errorboundsapproximationsdeep}. Two architectures may represent the same function class while requiring vastly different resources to achieve a prescribed approximation accuracy. Consequently, the relevant question goes beyond whether fixed upload circuits are universal, and into the resource cost of universality in fixed upload circuits. More specifically, if tunable upload frequencies are removed from the model, it is of interest to understand how much circuit depth is required to recover the lost expressive power. Existing results give only a depth-error scale upper bound with a constant overhead and a polynomial dependence on \(\frac{1}{\varepsilon}\) for approximating single tunable upload gate~\cite{perezsalinas2025universal,yu2024nonasymptotic}. Little is known about quantitative depth-error scaling, the optimality and lower bounds of existing approximation schemes, or the approximation mechanisms responsible for their expressive power. From a practical perspective, fixed upload architectures may be attractive because they replace trainable encoding frequencies with a standardized encoding primitive. Such architectures can reduce the number of trainable parameters and may simplify implementation, calibration, or deployment in settings where data-upload operations are constrained. The key question is then whether this architectural simplification comes at a prohibitive expressive cost. Our results show that it does not: the expressive power lost by removing tunable encoding frequencies can be recovered with only polylogarithmic growth in circuit depth with target specific constant overhead.

In this work, we provide a sharper quantitative theory for approximating tunable upload circuits with fixed upload circuits. Our analysis reveals that the approximation process is governed by two complementary mechanisms developed around the rigidity of fixed upload circuits. The first is an \emph{auxiliary extension mechanism}, which enables efficient approximation by first restricting the target to a smaller domain on which uniform approximation is possible, and then re-extending artificially into the global domain with an auxiliary function compatible with the constraints of fixed upload architecture, enabling the application of classical approximation tools. The second is a \emph{mismatch obstruction mechanism}, which converts structural incompatibilities between the target and the approximating family into depth-error lower bounds. This works even if the global structural mismatch seems to be ignored by restricting target to smaller domain. These mechanisms lead to polylogarithmic depth-error upper bounds for all tunable upload circuits, and logarithmic depth-error lower bounds for mismatch class tunable upload circuits. Furthermore, these mechanisms can be used in more generalized settings, including the dense encoding fixed upload circuit structures in~\cite{perezsalinas2025universal}. Main contributions of this work can be summarized as follows.

First, we identify an auxiliary extension approximation mechanism for fixed upload circuits and combine Gevrey class extensions, Jackson's theorem and generalized quantum signal processing theorem to prove that a multivariate tunable upload circuit can be approximated by fixed upload circuits with depth
\begin{equation}  
    D= 
    O_\sigma\!\left(
    \sum_{j=1}^{L}\sum_{k=1}^{m}\abs{K_{jk}}
    +
    L\,m\left(\log\frac{L\,m\, C_\sigma}{\eps}\right)^\sigma
    \right), \sigma>1.
\end{equation} 
The growth is polylogarithmic with respect to \(1/\varepsilon\), with a target dependent constant overhead. This improves the previously known polynomial dependence on \(1/\varepsilon\) in~\cite{perezsalinas2025universal}'s Corollary 1 and extends to circuit wise theorem. Thus the expressive power lost by removing tunable upload frequencies can be recovered using only polylogarithmic growth in circuit depth, up to target specific constants. 

Second, we identify a mismatch obstruction intrinsic to fixed upload approximation targeting the mismatch class, and use Turán-Nazarov inequalities to prove logarithmic depth-error lower bounds
\begin{equation}
  D
    \ge
    \frac{1}{2\log\Lambda}
    \log\!\left(
    \frac{\Delta_4}{2\eps}
    \right)
    -
    \frac{S}{2},
\end{equation}
for mismatch class tunable upload circuits. To our knowledge, these are the first nontrivial lower bounds for fixed upload approximation.

Figure~\ref{fig:degree-sandwich} summarizes the resulting depth-error scaling. Intuitively, the upper bounds arise from constructing increasingly regular auxiliary extensions and realizing their approximants through generalized quantum signal processing (QSP)~\cite{low2017optimal,low2019hamiltonian,motlagh2024generalized,martyn2021grandunification,gilyen2019quantum}; while the lower bound follows from the impossibility of hiding a period mismatch using too few exponential modes. Under certain assumptions, these results provide a characterization of the resource cost of removing tunability in quantum data re-uploading models. From the perspective of quantum machine learning, our results clarify how expressivity is redistributed in fixed upload quantum architectures. Fixed upload circuits eliminate trainable upload frequencies while preserving universality, transferring the corresponding approximation burden into circuit depth. From the perspective of quantum signal processing and related approximations, our analysis identifies approximation mechanisms that may extend beyond fixed upload circuits and apply to a broader class of QSP-based constructions. Conceptually, our results show that removing tunable upload frequencies incurs only a modest cost: the lost expressive power can be recovered using fixed upload circuits with only polylogarithmic growth in depth, up to additive target dependent constants.

The remainder of the manuscript is organized as follows. Section~\ref{sec:background} fixes notation, defines the circuits, the generalized QSP theorem, and recalls approximation tools. Section~\ref{sec:mainresults} states the main depth-error results yielding upper and lower bounds as well as structural rigidity. Section~\ref{sec:tn-qsp} explains the two structural mechanisms behind these theorems, i.e. auxiliary extensions and mismatch obstruction. Section~\ref{sec:conclusion} concludes the manuscript. The appendices contain the rigidity proofs in Appendix~\ref{proof:rigidity}, upper-bound proofs and auxiliary-extension construction in Appendix~\ref{app:upper-bounds}, the lower-bound proofs in Appendix~\ref{app:lower-bounds}, and the multivariate function approximation corollaries in Appendix~\ref{app:function-approx}.

\begin{figure}[t]
\centering
\resizebox{0.88\linewidth}{!}{%
\begin{tikzpicture}[
    axis/.style={-{Stealth[length=2.4mm]}, thick},
    lower/.style={thick, blue!75!black},
    upper/.style={thick, green!50!black},
    finite/.style={thick, orange!85!black, dashed},
    band/.style={<->, thick},
    labelbox/.style={
      draw,
      rounded corners=2pt,
      fill=white,
      align=center,
      inner sep=3pt,
      font=\small
    }
]
\draw[axis] (0,0) -- (13.3,0) node[below right] {\(\log(1/\eps)\)};
\draw[axis] (0,0) -- (0,6.4) node[above] {fixed upload depth \(D\)};
\foreach \x in {2,4,6,8,10,12} {
  \draw (\x,0.08) -- (\x,-0.08);
}
\foreach \y in {1,2,3,4,5,6} {
  \draw (0.08,\y) -- (-0.08,\y);
}
\draw[lower] (0.8,0.65) -- (12.2,3.55);
\draw[upper] (0.8,0.95)
  .. controls (3.5,1.35) and (6.5,2.55) .. (12.2,5.45);
\draw[finite] (0.8,0.8)
  .. controls (2.6,1.3) and (4.2,3.2) .. (6.6,6.0);

\node[labelbox, text=blue!75!black] at (10.2,2.25)
  {mismatch class lower bound\\ \(D\gtrsim\log(\Delta/\eps)-S\)};
\node[labelbox, text=green!50!black] at (9.2,5.45)
  {Gevrey upper bound\\ \(D\lesssim[\log(C/\eps)]^\sigma\)};
\node[labelbox, text=orange!85!black] at (5.15,4.65)
  {finite smoothness\\polynomial rate \(O(\varepsilon^{-2})\)};

\draw[band] (9.3,3.0) -- (9.3,4.55);
\node[labelbox] at (10.85,3.80)
  {near-logarithmic gap\\ still to close};
\draw[dashed] (0,0.55) -- (12.4,0.55);
\node[labelbox, anchor=west] at (0.4,0.92)
  {add constant overhead\\ \(\sum_{j,k}|K_{jk}|\) when \(w_{jk}=K_{jk}\frac\pi2+\eta_{jk}\)};
\end{tikzpicture}%
}
\caption{depth-error scaling picture for the fixed upload approximation of tunable upload circuits. The mismatch obstruction argument gives a logarithmic lower bound when the target has a nonzero mismatch, while an auxiliary extension argument gives a polylogarithmic upper bound \(D=O([\log(1/\eps)]^\sigma)\) for every \(\sigma>1\). Finite smoothness auxiliary extension gives the weaker polynomial rate. Upper bound has a target specific constant overhead; Lower bound holds only for mismatch class target.}
\label{fig:degree-sandwich}
\end{figure}

\section{Preliminaries and Methods}\label{sec:background}
\subsection*{Notation}

We use the following notation throughout the manuscript. The scalar input is denoted by \(x\in[0,1]\) or \(t\in\mathbb R\); for multivariate inputs we write
\(
    X=[0,1]^m,
    x=(x_1,\ldots,x_m)\in X.
\)

A fixed upload gate is written as
\begin{equation}
    e^{i \frac{\pi}{2} x\sigma_z}.
\end{equation}
A tunable upload gate with frequency \(w\) is written as
\begin{equation}
    e^{iwx\sigma_z}.
\end{equation}
A residual tunable upload gate with residual frequency \(\eta\) is written as
\begin{equation}
    e^{i\eta x\sigma_z}.
\end{equation}

For a matrix \(M\), \(\norm{M}_{\mathrm{op}}\) denotes the operator norm and
\(\norm{M}_F\) the Frobenius norm. For a scalar function \(f\),
\(
    \norm{f}_\infi=\sup_x\abs{f(x)}
    \)
on the domain under discussion. For a compact domain \(Y\), \(C(Y)\) denotes the space of complex-valued continuous functions on \(Y\), equipped with the supremum norm. We write \(\mathcal C_{2\pi}^{(k)}\) for \(2\pi\)-periodic functions with \(k\) continuous derivatives, and \(\mathcal C_2^{(k)}\) for the corresponding 2-periodic class. The trigonometric polynomial spaces used in the approximation arguments are \(\mathcal L_n\) for \(2\pi\)-periodic trigonometric polynomials of order at most \(n\), and
\(
    \mathcal T_n^{(2)}
    =
    \left\{
    \sum_{\nu=-n}^{n}c_\nu e^{i\pi\nu x}:c_\nu\in\mathbb C
    \right\}
\)
for the 2-periodic normalization. 

For resource counts, \(L\) denotes the number of tunable layers in the target circuit \(U_L^w\), and \(m\) denotes the input dimension. Thus an \(m\)-variable circuit with \(L\) tunable layers contains \(L\,m\) tunable upload gates. The integer \(N\) denotes the fixed upload depth used to replace one residual tunable upload gate. The integer \(D\) denotes the total fixed upload depth, i.e., fixed upload gate count, of a fixed upload approximating circuit, and \(U_D^{mf}\) denotes an \(m\)-variable fixed upload circuit. In the one variable case, we write \(U_D^{1f}\). In the model definition below, \(M\) denotes a generic per-coordinate fixed upload repetition count. We use \(S\) for the number of exponential modes in the target circuit. Constants such as \(C_\sigma,c_\sigma,R_\sigma\) may depend on the Gevrey parameter \(\sigma>1\), but are independent of the target frequency \(w\) unless stated otherwise. We use \(O_\sigma(\cdot)\) to indicate that the hidden constant may depend on \(\sigma\).

\subsection{Circuit models and hypothesis families}

\begin{definition}[Hypothesis function]
Given a parametrized quantum circuit \(U_L(\theta,x)\) and a fixed computational reference state \(\ket{0}\), the associated hypothesis function is
\begin{equation}
    h_L(x)
    \defeq
    \langle 0|U_L(\theta,x)|0\rangle .
\end{equation}
\end{definition}
This is the convention used in Ref.~\cite[Eq.~(5)]{perezsalinas2025universal}.

\begin{definition}[Tunable upload circuits~\cite{perezsalinas2025universal,perezsalinas2021onequbit}]
Let \(x\in[0,1]^m\), let \(W=(w_1,\ldots,w_L)\) with \(w_j\in\mathbb R^m\), let \(\theta=(\theta_0,\ldots,\theta_L)\), let \(\phi=(\phi_0,\ldots,\phi_L)\), and let \(\lambda\in\mathbb R\). Following Ref.~\cite[Theorem~8]{perezsalinas2025universal}. We write the \(j\)-th tunable upload layer as
\(  A_j^w(x)
    \defeq
    e^{i\theta_j\sigma_z}
    e^{i\phi_j\sigma_y}
    e^{i\sigma_z w_j\cdot x},
\)
and a tunable upload circuit is written as:
\begin{equation}
    U_L^w(\theta,\phi,W,\lambda;x)
    \defeq
    \left(
    \prod_{j=1}^{L}A_j^w(x)
    \right)
    e^{i\theta_0\sigma_z}
    e^{i\phi_0\sigma_y}
    e^{i\lambda\sigma_z}.
\end{equation}

The corresponding output functions and tunable upload hypothesis family are
\begin{equation}
    h_L(x)
    \defeq
    \langle 0|
    U_L^w(\theta,\phi,W,\lambda;x)
    |0\rangle,
\end{equation}
\begin{equation}
    H_L^w
    \defeq
    \left\{
    h_L
    :
    \theta,\phi,W,\lambda\ \text{trainable}
    \right\}.
\end{equation}
\end{definition}

If \(w_j=(w_{j1},\ldots,w_{jm})\), then the data dependent part of the \(j\)-th layer factors as
\begin{equation}
    e^{i w_j\cdot x \sigma_z}
    =
    \prod_{k=1}^{m}e^{i w_{jk}x_{k}\sigma_z}.
\end{equation}
Thus an \(m\)-variable circuit with \(L\) tunable layers contains \(L\,m\) tunable upload gates.
This is the single qubit data re-uploading family with tunable upload gates in Ref.~\cite[Theorem~8]{perezsalinas2025universal} and\cite{perezsalinas2021onequbit}. For the rest of the manuscript, we refer to circuits of this model as tunable upload circuits. 

\begin{definition}[Fixed upload circuits~\cite{perezsalinas2025universal}]\label{fixeduploadcircuits}
Let \(x=(x_1,\dots,x_m) \in [0,1]^m\). Adapting from Ref.~\cite[Definition~2]{perezsalinas2025universal}, we write a single qubit fixed upload circuit as:
\begin{equation}
    U_D^{mf}(\theta,\phi,\lambda;x)
    \defeq
    \left(
    \prod_{j=1}^{L}\prod_{k=1}^{m}\prod_{q=1}^{M}
    e^{i\theta_{j,k,q}\sigma_z}
    e^{i\phi_{j,k,q}\sigma_y}
    e^{i \frac{\pi}{2} x_k\sigma_z}
    \right)
    e^{i\theta_0\sigma_z}
    e^{i\phi_0\sigma_y}
    e^{i\lambda\sigma_z}.
\end{equation}
The output functions and fixed upload hypothesis family are
\begin{equation}
    h_D(x)
    \defeq
    \langle 0|
    U_D^{mf}(\theta,\phi,\lambda;x)
    |0\rangle
\end{equation}
\begin{equation}
    H_D^f
    \defeq
    \left\{
    h_D
    :
    \theta,\phi,\lambda\ \text{trainable}
    \right\}.
\end{equation}
\end{definition}
Here \(D=L\,m\,M\) is the fixed upload gate count.
This is the single qubit fixed upload circuit family of Ref.~\cite[Definition~2]{perezsalinas2025universal}; its universality is stated in Ref.~\cite[Theorem~1]{perezsalinas2025universal}. We adapt the definition up to the structure and rescale the fix upload frequency for convenience.

Note that an inverse upload is counted with the same cost as a plain data gate, since
\(
    e^{-i \frac\pi 2 x\sigma_z}
    =
    X e^{i \frac{\pi}{2} x\sigma_z}X ,
\)
with the \(x\)-independent \(X\) rotations absorbed into adjacent trainable gates, as in Ref.~\cite[Theorem~6]{perezsalinas2025universal}. 

\begin{definition}[Dense fixed upload circuits~\cite{perezsalinas2025universal}]
\label{def:dense-fixed-family}
Let \(\{\ket{k}:1\le k\le m\}\) be the computational basis of an \(m\)-dimensional Hilbert space and define the dense fixed upload
\begin{equation}
    V_m(x)
    \defeq
    \sum_{k=1}^{m}
    \ket{k}\!\bra{k}\otimes e^{i\pi x_k\sigma_z/4}.
\end{equation}
For trainable unitaries \(R(\theta_j,\phi_j)\in SU(2m)\), expressed through Euler angle parameters, Ref.~\cite[Definition~3]{perezsalinas2025universal} defines dense fixed upload circuits as
\begin{equation}
    U_L(\Theta,\Phi,\lambda;x)
    \defeq
    \left(
    \prod_{j=1}^{L}R(\theta_j,\phi_j)V_m(x)
    \right)
    R(\theta_0,\phi_0)V_m(\lambda),
\end{equation}
The corresponding dense fixed upload hypothesis output function and family are
\begin{equation}
    h_L(x)
    \defeq
    \langle 0|
    U_L(\Theta,\Phi,\lambda;x)
    |0\rangle .
\end{equation}
\begin{equation}
    H_L
    \defeq
    \left\{
    h_L
    :
    \Theta,\Phi,\lambda\ \text{trainable}
    \right\}.
\end{equation}
\end{definition}
This is the multiqubit dense encoding fixed upload family of Ref.~\cite[Definition~3]{perezsalinas2025universal}; its universality is stated in Ref.~\cite[Theorem~2]{perezsalinas2025universal}.

\subsection{Generalized QSP theorem}

\begin{theorem}[Generalized QSP for fixed upload circuits \cite{motlagh2024generalized,perezsalinas2025universal}]
\label{thm:gqsp-background}
\label{thm:gqsp-realization}
Let \(P(z)=\sum_{n=-N}^{N}p_nz^n\) be a Laurent polynomial satisfying
\begin{equation}
    \sup_{\abs{z}=1}\abs{P(z)}\le 1 .
\end{equation}
Then there exists a one variable fixed upload circuit using \(O(N)\) copies of the fixed upload gate \(e^{i \frac{\pi}{2} x\sigma_z}\) whose unitary has the form
\begin{equation}
    R_N(x)=
    \begin{pmatrix}
    P(e^{i\pi x}) & -Q(e^{i\pi x})^*\\
    Q(e^{i\pi x}) & P(e^{i\pi x})^*
    \end{pmatrix},
\end{equation}
where \(Q\) is a Laurent polynomial and
\begin{equation}
    \abs{P(e^{i\pi x})}^2+\abs{Q(e^{i\pi x})}^2=1
    \qquad \text{for all }x .
\end{equation}
\end{theorem}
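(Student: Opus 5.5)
The plan is to reduce the statement to the generalized QSP theorem of Motlagh and Wiebe, stated for a signal operator $\mathrm{diag}(z,1)$, with $z=e^{i\pi x}$. The reduction has three steps.

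\textbf{Step 1: put the signal in the Motlagh and Wiebe form.} The fixed upload gate is $e^{i\frac{\pi}{2}x\sigma_z}=\mathrm{diag}(e^{i\pi x/2},e^{-i\pi x/2})$. This equals $e^{-i\pi x/2}\,\mathrm{diag}(z,1)$, where the scalar $e^{-i\pi x/2}$ is a global phase depending on $x$. The inverse gate $e^{-i\frac{\pi}{2}x\sigma_z}$ is available at the same cost by the $X$-conjugation remark above. It equals $e^{i\pi x/2}\,\mathrm{diag}(1,z)\cdot\mathrm{diag}(z^{-1},1)\cdot\dots$; more simply, it is $e^{i\pi x/2}\,\mathrm{diag}(z^{-1},1)$.

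\textbf{Step 2: shift to an ordinary polynomial and apply GQSP.} Set $\tilde P(z)=z^N P(z)$. This is an ordinary polynomial of degree at most $2N$ with $\sup_{|z|=1}|\tilde P|\le 1$. By the GQSP theorem there exist a polynomial $\tilde Q$ of degree at most $2N$ and $SU(2)$ rotations $R_0,\dots,R_{2N}$ such that
\[
R_0\prod_{j=1}^{2N}\big(\mathrm{diag}(z,1)R_j\big)=\begin{pmatrix}\tilde P(z) & \ast\\ \tilde Q(z) & \ast\end{pmatrix}.
\]
GQSP also guarantees $|\tilde P|^2+|\tilde Q|^2=1$ on the circle. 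Each $R_j$ is an $SU(2)$ rotation, so it decomposes into Euler angles $e^{i\theta\sigma_z}e^{i\phi\sigma_y}e^{i\lambda\sigma_z}$. The $\sigma_z$ factors merge with the adjacent trainable $\sigma_z$ rotations, which gives the layered form of Definition~\ref{fixeduploadcircuits}.

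\textbf{Step 3: undo the degree shift.} The real obstacle is that $\tilde P$ equals $z^N P$ rather than $P$. The plan is to spend half the uploads as inverse gates, i.e. use $N$ copies of $\mathrm{diag}(z,1)$ and $N$ copies of $\mathrm{diag}(z^{-1},1)$. A cleaner alternative is to replace each single $\mathrm{diag}(z,1)$ by the balanced $\sigma_z$-form and track the global phase exactly.
\begin{itemize}
\item Each fixed gate contributes $e^{-i\pi x/2}\,\mathrm{diag}(z,1)$.
\item The $2N$ gates therefore contribute a total scalar $e^{-i\pi x N}=z^{-N}$.
\end{itemize}
Hence the fixed upload circuit with $2N$ gates, built from the GQSP angles, equals $z^{-N}$ times the GQSP unitary. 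Its top-left entry is $z^{-N}\tilde P(z)=P(z)$ exactly, and its bottom-left entry is $Q(z):=z^{-N}\tilde Q(z)$, which is a Laurent polynomial.

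\textbf{Step 4: check the matrix form.} The product is a product of $SU(2)$ matrices, since each $e^{i\frac{\pi}{2}x\sigma_z}$ has determinant $1$. An $SU(2)$ matrix with first column $(P,Q)^T$ necessarily has the form $\begin{pmatrix}P&-Q^*\\Q&P^*\end{pmatrix}$. The normalization follows because $|z^{-N}|=1$ on the circle, so $|P|^2+|Q|^2=|\tilde P|^2+|\tilde Q|^2=1$.

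The gate count is $2N=O(N)$, which proves the claim. If any convention (for instance $SU(2)$ versus $U(2)$ rotations in GQSP) leaves a residual $z$-independent phase, it can be absorbed into the trainable $e^{i\lambda\sigma_z}$ and $e^{i\theta_0\sigma_z}$ gates.
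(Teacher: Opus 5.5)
Your proposal is correct and takes essentially the paper's route: apply GQSP to the shifted polynomial $z^N P(z)$ with $2N$ one-sided signals $\mathrm{diag}(e^{i\pi x},1)$, realize each signal by $e^{i\frac{\pi}{2}x\sigma_z}$ up to the global phase $e^{-i\pi x/2}$, and let the accumulated phase $e^{-i\pi N x}$ undo the shift, which gives $2N$ fixed uploads. One point to tighten in your closing remark: the Motlagh--Wiebe rotations lie in $U(2)$, not $SU(2)$, so passing to $SU(2)$ factors leaves a constant phase $e^{-i\beta}$ on the first column, and this phase cannot be absorbed as a global phase but is removed by right-multiplying with $e^{i\beta\sigma_z}$, which merges into the trainable $e^{i\lambda\sigma_z}$.
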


Motlagh and Wiebe's generalized QSP theorem gives necessary and sufficient conditions for the quantum circuit realization of polynomial pairs \(P,Q\) of degree \(d\) satisfying \(\abs{P}^2+\abs{Q}^2=1\) on the unit circle \cite[Theorem~3]{motlagh2024generalized}. In the same reference, Corollary~5 states that any polynomial \(P\) bounded by one on the unit circle admits such a companion \(Q\), and Theorem~6 gives the shifted Laurent form with negative powers. Ref.~\cite[Theorem~6]{perezsalinas2025universal} rewrites this generalized-QSP product as a fixed upload circuit, and the proof of Ref.~\cite[Theorem~5]{perezsalinas2025universal} uses the same realization to approximate a tunable upload gate. Applying the shifted Laurent form to \(z^NP(z)\) yields the stated \(O(N)\)-upload fixed upload circuit. The explicit gate count is \(2N\) fixed upload gates and \(2N+1\) trainable unitaries, but we will only track the asymptotic scaling in \(N\).

In generalized QSP theorem the parity constraint is removed with operations based on a one sided signal gate \(A = diag(U,I)\). In principle, if we were to conclude that any arbitrary QSP type unitary with upper left entry \(P(e^{i\pi x})\) can be realized by fixed upload circuit, we need to first conclude that a one sided signal \(diag(e^{i\pi x},1)\) is realizable by this circuit. For this purpose, \cite[Theorem~6]{perezsalinas2025universal} gave a method to produce \(diag(e^{ix},1)\) with a global phase gate \(e^{ix/2}\) or \(e^{-ix/2}\), compensated by querying \(L\) times the operator absorbed in the gates before and after\cite{perezsalinas2025universal}, along with a single data gate \(e^{ix/2 \sigma_z}\). We rescale this argument to one based on \(e^{i \frac{\pi}{2}x \sigma_z}\), producing the desired one sided signal \(diag(e^{i\pi x},1)\); therefore the generalized QSP theorem holds for \(R_N(x)\). Therefore, any QSP block with Laurent polynomials \(P(e^{i \pi x})\) on the upper left entry is obtainable with fixed upload circuits.

It should be noted that all fixed upload circuits are indeed generalized QSP circuits with matrix entries \(P(e^{i \frac\pi2 x})\) by definition, while their \(P\) has a parity constraint, because of the limitation of the signal gate, in our terms the fixed upload gate. While all generalized QSP circuits with matrix entries \(P(e^{i \pi x})\) are realizable by fixed upload circuits as defined here, it is only attainable with a certain parity condition over circuit depth. This will be discussed in \ref{rem:parity}. This should be treated carefully because in upper bound theorems we use the latter part as a realization theorem that admits desired polynomials with fixed upload circuits; while in single gate lower bound proof we use the former part that describes the exponential polynomial nature of fixed upload circuits. The constant trainable matrices can be chosen in the exact \(SU(2)\) Euler form as defined \cite{perezsalinas2025universal,motlagh2024generalized}.

\subsection{Approximation tools}

\begin{theorem}[Jackson IV theorem {\cite[Theorem~16.4]{powell1981approximation}}]
\label{thm:jackson}
Let \(f\in\mathcal C_{2\pi}^{(k)}\). Let \(\mathcal L_n\) denote the trigonometric polynomials of order at most \(n\), and define the best approximation error
\begin{equation}
    E_n(f)
    \defeq
    \inf_{T\in\mathcal L_n}
    \norm{f-T}_\infi .
\end{equation}
Then
\begin{equation}
    E_n(f)
    \le
    \left(\frac{\pi}{2n+2}\right)^k
    \norm{f^{(k)}}_\infi .
\end{equation}
\end{theorem}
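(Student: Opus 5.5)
The plan is to prove the classical chain of Jackson theorems, as in Powell's book: first the case \(k=1\) with the stated constant, then an inductive step that trades one derivative for one factor \(\pi/(2n+2)\). The whole argument takes place in \(\mathcal C_{2\pi}\) with trigonometric polynomials in \(\mathcal L_n\).

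\emph{Step 1 (the case \(k=1\)).} For \(f\in\mathcal C^{(1)}_{2\pi}\) I would write
\[
f(x)=\frac{1}{2\pi}\int_{-\pi}^{\pi}f(t)\,dt+\frac{1}{2\pi}\int_{-\pi}^{\pi}f'(x-t)\,B(t)\,dt,
\]
where \(B(t)=\pi-t\) on \((0,2\pi)\), extended \(2\pi\)-periodically, is the Bernoulli sawtooth. Any approximant \(T\in\mathcal L_n\) of the form "constant plus \(\frac{1}{2\pi}\int f'(x-t)\,b_n(t)\,dt\)" with \(b_n\in\mathcal L_n\) then satisfies
\[
\|f-T\|_\infty\le \frac{1}{2\pi}\,\|f'\|_\infty\,\|B-b_n\|_{L^1}.
\]
So the task reduces to the \(L^1\) approximation of the sawtooth by trigonometric polynomials of order \(n\). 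The sharp value is
\[
\inf_{b_n\in\mathcal L_n}\|B-b_n\|_{L^1}=\frac{2\pi^2}{2n+2},
\]
which gives \(E_n(f)\le\frac{\pi}{2n+2}\|f'\|_\infty\). To prove this I would take \(b_n\) to be the polynomial interpolating \(B\) at the \(2n+2\) equally spaced points shifted to avoid the jump. Then \(B-b_n\) changes sign exactly at those points, so \(\|B-b_n\|_{L^1}=\left|\int (B-b_n)\,s\right|\), where \(s=\operatorname{sign}\sin((n+1)t)\). Since \(s\) is orthogonal to \(\mathcal L_n\), this equals \(\left|\int B\,s\right|\), which can be computed explicitly from the Fourier series of the square wave. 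I expect this step, obtaining the exact constant \(\pi/(2n+2)\) rather than merely some constant \(C/n\), to be the main technical obstacle.

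\emph{Step 2 (reduction \(E_n(f)\le\frac{\pi}{2n+2}E_n(f')\)).} Let \(q\in\mathcal L_n\) be a best approximation to \(f'\). Because \(f'\) has mean zero, I would first show that \(q\) may be taken with zero mean. My first attempt is the integral representation of Step 1 itself: replacing \(f'\) by \(f'-q\) in that representation, the kernel error \(B-b_n\) is orthogonal to constants, since the \(b_n\) above can be chosen with \(\int(B-b_n)=0\). Hence a nonzero mean of \(q\) does not enter the bound. Next let \(p\) be the periodic antiderivative of \(q\) minus its mean, which lies in \(\mathcal L_n\), and put \(g=f-p\). Since \(g\) is periodic and \(E_n(g)=E_n(f)\), Step 1 applied to \(g\) with this refinement gives
\[
E_n(f)\le \frac{\pi}{2n+2}\,\|f'-q\|_\infty=\frac{\pi}{2n+2}\,E_n(f').
\]

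\emph{Step 3 (induction).} Iterating Step 2 on \(f,f',\dots,f^{(k-1)}\), each of which is \(2\pi\)-periodic and continuously differentiable, gives
\[
E_n(f)\le\left(\frac{\pi}{2n+2}\right)^{k-1}E_n(f^{(k-1)}).
\]
A final application of Step 1 to \(f^{(k-1)}\) yields \(E_n(f^{(k-1)})\le\frac{\pi}{2n+2}\|f^{(k)}\|_\infty\), which is the stated bound. The trivial estimate \(E_n(h)\le\|h\|_\infty\) is not needed anywhere.
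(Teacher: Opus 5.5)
Your proof is correct. It is essentially the textbook argument from Powell that the paper cites without reproducing. It has three ingredients:
\begin{itemize}
\item the sawtooth-kernel representation;
\item the bound $\|B-b_n\|_{L^1}=\pi^2/(n+1)$, obtained because $\operatorname{sign}\sin((n+1)t)$ is orthogonal to $\mathcal L_n$;
\item the zero-integral kernel error, which lets you replace $\|f'-q\|_\infty$ by $\min_c\|f'-q-c\|_\infty$ and so gives $E_n(f)\le\frac{\pi}{2n+2}E_n(f')$, after which the induction follows.
\end{itemize}
Because it is a convolution estimate, it also covers the complex-valued $f$ the paper actually applies it to.

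Two details need repair.

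\emph{The interpolation nodes.} They must not be shifted. With $s=\operatorname{sign}\sin((n+1)t)$, the jump of $B$ at $t=0$ is itself one of the $2n+2$ sign changes. So you interpolate at the $2n+1$ nodes $j\pi/(n+1)$, $1\le j\le 2n+1$. A shifted node set would force a shifted sign function, and no $b_n$ can realize that sign pattern. The unique interpolant at these nodes is odd, so $b_n(0)=0$ and $\int(B-b_n)=0$. The claim that there are no other sign changes needs an argument, and Rolle supplies it. Suppose $\pi-t-b_n(t)$ had more than $2n+1$ zeros in $(0,2\pi)$, counted with multiplicity. Then the nonzero trigonometric polynomial $1+b_n'\in\mathcal L_n$ (it has mean $1$) would have more than $2n$ zeros in a period, which is impossible.

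\emph{The mean of $q$.} A best approximation to a mean-zero $f'$ need not have mean zero. Already for $n=0$ it is the best constant, which is generally nonzero. So do not try to prove that; your zero-integral-kernel argument correctly makes it unnecessary.
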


For the \(f\in\mathcal C_2^{(k)}\) normalization used below, apply \cref{thm:jackson} to
\(g(y)=f(y/\pi)\), where \(g\in\mathcal C_{2\pi}^{(k)}\). If \(f\in\mathcal C_2^{(k)}\), then we have
\(
    \mathcal T_n^{(2)}
    \defeq
    \left\{
    \sum_{\nu=-n}^{n}c_\nu e^{i\pi\nu x}:c_\nu\in\mathbb C
    \right\}.
\)
Therefore
\begin{equation}
    \inf_{T\in\mathcal T_n^{(2)}}
    \norm{f-T}_\infi
    \le
    \frac{1}{(2n+2)^k}
    \norm{f^{(k)}}_\infi.
\end{equation}

\begin{definition}[2-periodic Gevrey-\(\sigma\) class \cite{KuhnPetersen2022PeriodicGevrey}]
\label{def:2-periodic-gevrey}
Let \(\sigma>1\). A 2-periodic smooth function \(A\in C_2^\infty[0,2]\)
belongs to \(G^\sigma_2\) if there exist constants \(C,R>0\) such that
\begin{equation}
    \norm{A^{(k)}}_\infi
    \le
    C R^k(k!)^\sigma,
    \qquad \forall k \in \mathbb{N} \cup \{0\}
\end{equation}
Notice that \(G^\sigma_2\subset C_2^r[0,2]\) for every \(r\in\mathbb N\).
\end{definition}

Gevrey classes were introduced in 1918~\cite{Gevrey1918}. Since then, these function classes are considered as one of the most important intermediate classes between smooth and analytic functions and have played an important role in differential equations. For a systematic treatment of Gevrey classes, we refer to Rodino's~\cite{Rodino1993Gevrey}. In this work we only use Gevrey class reduced to single variable case. Our definition mostly adopts the convention used in~\cite{KuhnPetersen2022PeriodicGevrey}, but on 2-torus, reduced to single variable case. For the stability of Gevrey class under certain operations, especially inverse closedness, we refer to~\cite{RainerSchindl2014Composition}.

\begin{theorem}[Turán-Nazarov inequality {\cite{nazarov1993local}}]
\label{thm:turan-nazarov}
Let
\begin{equation}
    p(x)=\sum_{j=1}^{s}a_je^{i\lambda_jx},
    \qquad \lambda_j\in\mathbb R,
\end{equation}
be an exponential polynomial with at most \(s\) terms. There is a universal
constant \(\Lambda_0>1\) such that, for every interval \(I\subset\mathbb R\)
and every measurable \(E\subset I\) of positive measure,
\begin{equation}
    \sup_{x\in I}\abs{p(x)}
    \le
    \left(\frac{\Lambda_0\abs I}{\abs E}\right)^{s-1}
    \sup_{x\in E}\abs{p(x)} .
\end{equation}
\end{theorem}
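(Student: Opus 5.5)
Since \cref{thm:turan-nazarov} is a classical result imported from \cite{nazarov1993local}, the paper only needs to invoke it. If I were to reprove it, I would first normalize and then argue by induction on the number of terms \(s\). \textbf{Normalization.} By translating and rescaling \(x\mapsto a+\abs I x\), I may assume \(I=[0,1]\). This preserves the class of \(s\)-term exponential sums: only the frequencies \(\lambda_j\) change, and they are arbitrary anyway. It also preserves the ratio \(\abs I/\abs E\). \textbf{Base case.} For \(s=1\) the modulus \(\abs{p}\) is constant, so the inequality holds with exponent \(0\).

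\textbf{Induction step.} Assume the estimate holds for sums with \(s-1\) terms. Write \(p(x)=e^{i\lambda_s x}r(x)\) and set \(q=r'\). Differentiation kills the constant mode, so \(q\) is an exponential sum with \(s-1\) terms. I would then split \(E\) into two parts. On the part of \(E\) where \(\abs{q}\) is small compared with \(\sup_I\abs{p}\), I apply the induction hypothesis to \(q\). On the complementary part, \(r\) moves quickly, so it can only be small on a set of controlled measure. This uses a Remez/Cartan-type sublevel estimate for functions with large derivative. Tracking the measures, each step costs one factor \(\Lambda_0\abs I/\abs E\), which produces the exponent \(s-1\).

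\textbf{Main obstacle.} The step I expect to be hardest is the complex-valued setting. There Rolle's theorem is unavailable, so the naive sublevel-set argument does not transfer directly. The naive fix is to pass to \(\mathrm{Re}\,p\) and \(\mathrm{Im}\,p\), which are real exponential sums, but this doubles the number of terms and degrades the exponent to \(2s-1\). To keep exponent \(s-1\) with a universal constant, I would follow Nazarov's approach. Using Turán's lemma for intervals together with a pigeonhole argument, one can find a subinterval where a single dominant mode controls \(p\). After quotienting that mode out, the problem reduces to one where the frequencies are effectively bounded by \(O(s)\) on the unit scale. There, \(p\) extends to an entire function of exponential type \(O(s)\). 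Jensen's formula then bounds the number of zeros in a neighbourhood of \(I\) by \(O(s)\). The measurable-set bound finally follows from a Cartan-type lower estimate on the product over those zeros, applied off an exceptional set of small measure. The universality of \(\Lambda_0\), meaning its independence of the \(\lambda_j\), comes exactly from this reduction.
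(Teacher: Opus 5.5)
Your opening sentence is exactly what the paper does. It gives no proof, quotes Nazarov's Theorem~I (stated for \(p=\sum_j a_je^{\mu_jx}\) with complex \(\mu_j\) and an extra factor \(e^{\abs I\max_j\abs{\mathrm{Re}\,\mu_j}}\)), and notes that this factor is \(1\) when \(\mu_j=i\lambda_j\) with \(\lambda_j\in\mathbb R\). With that specialization stated, the citation suffices. The reproof you outline, however, does not work. In your induction, no derivative-based sublevel estimate controls the piece of \(E\) where \(\abs q\) is large uniformly in the frequencies. The obstruction is oscillation, not complex values. Take \(\cos(Mx)\), a real two-term sum: its sublevel set \(\{\abs{\cos Mx}\le\delta\}\cap[0,1]\) consists of about \(M/\pi\) intervals of length about \(2\delta/M\). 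Its measure is \(O(\delta)\) only because these two \(M\)-dependences cancel. An estimate of the type ``\(\abs f\le\delta\) and \(\abs{f'}\ge\tau\) hold only on a set of measure at most \(2\delta/\tau\) times the number of monotone pieces'' gives nothing uniform here, since the number of pieces is not bounded in terms of \(s\). For the same reason, your fallback through \(\mathrm{Re}\,p\) and \(\mathrm{Im}\,p\) does not yield exponent \(2s-1\). The Rolle/Descartes zero count that makes this induction work is available for \(\sum_j c_je^{\mu_jx}\) with \emph{real} \(\mu_j\), not for real trigonometric sums. Dividing by \(e^{i\lambda_sx}\) also destroys real-valuedness in any case.

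In the Nazarov-style sketch, the step you credit with the universality of \(\Lambda_0\) is false as stated. Multiplying by \(e^{-i\lambda_jx}\) only translates the spectrum, so its diameter \(\max_k\lambda_k-\min_k\lambda_k\) is unchanged. The entire extension therefore still has exponential type at least half that diameter. For frequencies \(\{0,M\}\), whichever mode you factor out, \(1+e^{iMz}\) has about \(M/2\pi\) zeros on \([0,1]\), so Jensen cannot give \(O(s)\). A genuine reduction must handle widely separated clusters of frequencies at several scales. It must then transfer the local estimates back to \(I\) without spoiling the dependence on \(\abs I/\abs E\). That is the substance of the theorem, and your sketch does not supply it. Moreover, even with spectrum of size \(O(s)\) on the unit scale, Jensen plus Cartan produces an exponent equal to the number of zeros counted. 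That number is \(O(s)\) with a constant larger than one in general, not \(s-1\): two terms with frequency gap \(3\pi\) already have two zeros on \([0,1]\). The sharp exponent needs a further idea. None of this affects the paper, whose applications only use the interval \(E=[0,1]\) inside \(I=[0,4]\) or \([0,8]\), with the ratio absorbed into \(\Lambda\). There, Turán's original lemma for subintervals suffices. So would any bound of the form \(\Lambda^{O(s)}\), at the cost of constants.
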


For the proof of this theorem, we refer to Theorem I in Nazarov's original work~\cite{nazarov1993local}. In this work, we are only concerned with exponential polynomials with pure imaginary frequencies; therefore the original inequality simplifies to the form stated above, as a result of \(Re(i\lambda_j)=0\) for all \(j\).

\section{Main Results}\label{sec:mainresults}

\subsection{Structural rigidity of fixed upload circuits}
\begin{theorem}[Rigidity of fixed upload circuits]
\label{thm:rigidity}
Let \(U_D^{mf}\) be a fixed upload circuit as defined in \ref{fixeduploadcircuits}, then
\begin{equation}
    U_D^{mf}(0)
    =
    U_D^{mf}(4)
    =
    (-1)^DU_D^{mf}(2)
\end{equation}
Hence universal approximation of tunable upload circuits family on intervals containing \([0,2]^m\) is not possible for fixed upload circuits family.
\end{theorem}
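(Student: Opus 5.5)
The plan is to evaluate the fixed upload circuit at the diagonal points \(x=c\mathbf 1=(c,\dots,c)\) for \(c\in\{0,2,4\}\), which is how I read \(U_D^{mf}(0),U_D^{mf}(2),U_D^{mf}(4)\). Only the data gates depend on \(x\), and every data gate in \cref{fixeduploadcircuits} is \(e^{i\frac{\pi}{2}x_k\sigma_z}\). Since \(\sigma_z\) is diagonal, \(e^{i\frac{\pi}{2}c\sigma_z}=\mathrm{diag}(e^{i\pi c/2},e^{-i\pi c/2})\). This gives three cases:
\begin{equation}
    c=0:\ I,\qquad c=2:\ \mathrm{diag}(e^{i\pi},e^{-i\pi})=-I,\qquad c=4:\ \mathrm{diag}(e^{2\pi i},e^{-2\pi i})=I .
\end{equation}
The trainable gates \(e^{i\theta\sigma_z}\), \(e^{i\phi\sigma_y}\), \(e^{i\lambda\sigma_z}\) do not depend on \(x\). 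So replacing each data gate by \(\pm I\) leaves the same ordered product of trainable gates, multiplied by a scalar. That scalar is \(1\) at \(c=0\) and \(c=4\), and \((-1)^D\) at \(c=2\), where \(D=LmM\) is the total number of data gates. Because \(-I\) is central, it can be commuted out of the product. This gives \(U_D^{mf}(0)=U_D^{mf}(4)=(-1)^DU_D^{mf}(2)\) directly. Multiplying the last identity by \((-1)^D\) and using \((-1)^{2D}=1\) puts it in the stated form. Taking the \(\langle 0|\cdot|0\rangle\) entry, every \(h_D\in H_D^f\) satisfies \(h_D(\mathbf 0)=(-1)^Dh_D(2\cdot\mathbf 1)\).

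For the non-approximability claim, I will exhibit a single tunable upload target that violates this constraint by a fixed margin for every \(D\). Take \(L=1\), all \(\theta,\phi,\lambda=0\), and \(w_1=(\pi/4,0,\dots,0)\). Then \(U_1^w=e^{i\frac{\pi}{4}x_1\sigma_z}\), so the target is \(h(x)=e^{i\pi x_1/4}\). It satisfies \(h(\mathbf 0)=1\) and \(h(2\cdot\mathbf 1)=i\). Any interval domain containing \([0,2]^m\) contains both points \(\mathbf 0\) and \(2\cdot\mathbf 1\).

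Now fix any \(h_D\in H_D^f\) and write \(a=h_D(\mathbf 0)\) and \(s=(-1)^D\), so that \(h_D(2\cdot\mathbf 1)=sa\). By the triangle inequality,
\begin{equation}
    \abs{a-1}+\abs{sa-i}=\abs{a-1}+\abs{a-si}\ge\abs{1-si}=\sqrt2 .
\end{equation}
Here \(\abs{sa-i}=\abs{a-si}\) because \(s=\pm1\). Hence \(\norm{h-h_D}_\infi\ge\sqrt2/2\), uniformly in \(D\) and in all trainable parameters. The fixed upload family therefore cannot approximate \(h\) to any accuracy below \(\sqrt2/2\), which rules out universal approximation.

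There is no real obstacle here. The only points needing care are the interpretation of the scalar arguments as diagonal points of \([0,4]^m\), and using \(-I\) being central to pull the signs out of the ordered product. If one insisted on genuine scalar inputs, the case \(m=1\) is identical.
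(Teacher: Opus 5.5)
Your proof is correct and follows the paper's argument. The data gates become $I$, $-I$, $I$ at $0,2,4$, the central sign is pulled out of the ordered product, and the target $e^{i\frac{\pi}{4}x\sigma_z}$ is excluded by the triangle inequality with the same $\sqrt2/2$ margin. The differences are cosmetic and slightly in your favour: evaluating at diagonal points gives the sign $(-1)^{LmM}=(-1)^D$ directly, whereas the paper's coordinate-slice reduction only counts one coordinate's gates; and stating the obstruction for the scalar hypothesis $h_D$ is at least as strong as the paper's operator-norm version, since $\abs{h_U-h_V}\le\norm{U-V}_{\mathrm{op}}$.
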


The proof is given in Appendix subsection \ref{proof:rigidity}. This property underlies the structural analysis developed in this manuscript. The auxiliary extension mechanism arises from techniques used to circumvent this constraint. Mismatch obstruction is an application of this property. 

The above mentioned rigidity comes from the periodicity and parity structure native to the fixed upload circuit. 

\begin{remark}[Parity of fixed upload Fourier modes]
\label{rem:parity}
Let \(z=e^{i\pi x/2}\). A one variable depth-\(D\) fixed upload circuit has the form
\begin{equation}
    U_D^{f}(x)
    =
    B_De^{i\pi x\sigma_z/2}B_{D-1}e^{i\pi x\sigma_z/2}\cdots
    B_1e^{i\pi x\sigma_z/2}B_0 ,
\end{equation}
where the matrices \(B_j\) are independent of \(x\). Since
\begin{equation}
    e^{i\pi x\sigma_z/2}
    =
    z\ket{0}\!\bra{0}
    +
    z^{-1}\ket{1}\!\bra{1},
\end{equation}
expanding the product gives
\begin{equation}
    U_D^f(x)
    =
    \sum_{\alpha\in\{0,1\}^D}
    z^{D-2N_1(\alpha)}
    B_D P_{\alpha_D} B_{D-1}P_{\alpha_{D-1}}
    \cdots
    B_1P_{\alpha_1}B_0 ,
\end{equation}
where
\(
    P_0=\ket{0}\!\bra{0},
    P_1=\ket{1}\!\bra{1},
\)
and \(N(\alpha)\) denotes the number of indices \(j\) for which
\(\alpha_j=1\). Indeed, for \(\alpha =1\) the upload contributes \(-1\) power, canceling a positive power out. Thus, if \(N(\alpha)=r\), the corresponding power of \(z\) is \(D-2r\). Grouping together all terms with the same value of \(r\), we get
\begin{equation}
    U_D^f(x)
    =
    \sum_{r=0}^{D} z^{D-2r} C_r ,
\end{equation}
for some \(x\)-independent matrices \(C_r\). Therefore every scalar matrix element of \(U_D^f(x)\) is a Laurent polynomial of the form
\begin{equation}
    \langle a|U_D^f(x)|b\rangle
    =
    \sum_{r=0}^{D}
    c_{ab,r} z^{D-2r}
    =
    \sum_{\substack{n=-D\\ n\equiv D\;(\mathrm{mod}\;2)}}^{D}
    c_{ab,n} z^n
    =
    \sum_{\substack{n=-D\\ n\equiv D\;(\mathrm{mod}\;2)}}^{D}
    c_{ab,n} e^{i\pi n x/2}.
\end{equation}
Hence a depth \(D\) fixed upload circuit has at most \(D+1\) possible Fourier modes, and all exponents have the same parity as \(D\)
\begin{equation}
    n\equiv D \pmod 2 .
\end{equation}
In particular, even depth circuits contain only even powers of \(z\), hence Laurent polynomial in \(z^2=e^{i\pi x}\). Odd depth circuits contain only odd powers of \(z\), hence \(z\) times Laurent polynomial in \(z^2=e^{i\pi x}\).
\end{remark}

Consequently, even depth fixed upload circuits have scalar entries in Laurent polynomials of \(e^{i\pi x}\), while odd depth fixed upload circuits have scalar entries in \(e^{i\pi x/2}\) times Laurent polynomials of \(e^{i\pi x}\). Thus a generalized QSP realization written purely in the variable \(e^{i\pi x}\) corresponds to the even parity fixed upload circuits.

This implies that in order to realize arbitrary Laurent polynomial over \(e^{i\pi x}\) by a fixed upload circuit, we need upload gates with frequencies at least finer than \(\frac\pi2\). The theoretical realizable set is slightly larger than mere Laurent polynomial entried matrix over \(e^{i\pi x}\), as a odd parity circuit leaves a fixed upload gate to the end.

\subsection{Upper bounds from auxiliary extensions}

Restricting our target interval to \([0,1]^m\), we can prove the universality and gain quantitative depth-error upper bounds for various cases.

\begin{theorem}[Single tunable upload gate upper bound]
\label{thm:main-gevrey-upper}
Let \(w\in\mathbb R\), and choose
\begin{equation}
    w=K\frac\pi2+\eta,
    \qquad K\in\mathbb Z,\qquad
    \abs{\eta}\le\frac{\pi}{4}.
\end{equation}
Fix \(\sigma>1\). There exist constants \(C_\sigma,c_\sigma>0\), independent of \(w\), such that for every \(N\ge1\) there is a fixed upload circuit \(U_D^{1f}\), using \(D=O(\abs K+N)\) fixed upload gates, with
\begin{equation}
    \sup_{x\in[0,1]}
    \norm{U_D^{1f}(x)-e^{iw x\sigma_z}}_F
    \le
    C_\sigma\exp(-c_\sigma N^{1/\sigma}).
\end{equation}
Consequently, approximation error at most \(\eps\in(0,1)\) is achieved with
\begin{equation}
    D =
    O_\sigma\!\left(
    \abs{K}+
    \left[\log\!\left(\frac{C_\sigma}{\eps}\right)\right]^\sigma
    \right)
\end{equation}
fixed upload gates, with \(\abs{K}\) being a constant overhead.
\end{theorem}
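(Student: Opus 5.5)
The plan is to split the target gate as \(e^{iwx\sigma_z}=e^{iK\frac\pi2 x\sigma_z}\,e^{i\eta x\sigma_z}\). The first factor is realized exactly by \(\abs K\) fixed upload gates, using inverse uploads \(Xe^{i\frac\pi2x\sigma_z}X\) when \(K<0\); this accounts for the \(O(\abs K)\) overhead. It then suffices to approximate the residual gate \(e^{i\eta x\sigma_z}=\mathrm{diag}(e^{i\eta x},e^{-i\eta x})\) on \([0,1]\) with \(O(N)\) fixed uploads and error \(C_\sigma\exp(-c_\sigma N^{1/\sigma})\), uniformly in \(\abs\eta\le\pi/4\). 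The Frobenius error of the product equals that of the residual part, by unitary invariance.

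\textbf{Auxiliary extension.} The function \(e^{i\eta x}\) is not 2-periodic, so its mismatch \(e^{2i\eta}\neq1\) obstructs direct Fourier approximation (cf.\ \cref{thm:rigidity}). Instead, restrict it to \([0,1]\) and build a 2-periodic extension \(A_\eta\in G^\sigma_2\) with \(A_\eta=e^{i\eta x}\) on \([0,1]\) and \(\abs{A_\eta}=1\) everywhere. Write \(A_\eta=e^{i\eta\psi(x)}\), where \(\psi\) is a real Gevrey-\(\sigma\) function on \(\mathbb R\) with \(\psi(x)=x\) on \([0,1]\) and \(\psi(x+2)=\psi(x)\). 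Such a \(\psi\) is a 2-periodic Gevrey function equal to the identity on \([0,1]\). It can be glued from a Gevrey bump \(\chi\) (built from \(e^{-1/t^{1/(\sigma-1)}}\), which is Gevrey-\(\sigma\) exactly for \(\sigma>1\)) via \(\psi(x)=x\chi_1(x)+(x-2)\chi_2(x)\) on \([-1/2,3/2]\), where \(\chi_1,\chi_2\) are a Gevrey partition of unity. The decisive point is that \(\psi\) is independent of \(\eta\). Because Gevrey classes are stable under composition with entire functions (\cite{RainerSchindl2014Composition}), and \(\abs\eta\le\pi/4\), we obtain \(\norm{A_\eta^{(k)}}_\infi\le CR^k(k!)^\sigma\) with \(C,R\) depending only on \(\sigma\). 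The lower entry is \(\overline{A_\eta}\).

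\textbf{Jackson plus optimization over \(k\).} Apply the 2-periodic form of \cref{thm:jackson} to \(A_\eta\) with degree \(n\): \(E_n\le CR^k(k!)^\sigma/(2n+2)^k\) for every \(k\). Choosing \(k\approx (n/R')^{1/\sigma}\) and using \(k!\le k^k\) gives \(E_n\le C'\exp(-c\,n^{1/\sigma})\). This optimization is the standard Gevrey/Jackson tradeoff and is the main technical step to verify with explicit constants. Let \(T\in\mathcal T_n^{(2)}\) be the near-best approximant, and rescale it to \(P=T/(1+C'e^{-cn^{1/\sigma}})\), so that \(\sup_{\abs z=1}\abs P\le1\) while remaining \(O(e^{-cn^{1/\sigma}})\)-close to \(A_\eta\).

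\textbf{QSP realization.} \cref{thm:gqsp-realization} gives an \(O(n)\)-gate fixed upload block with upper-left entry \(P(e^{i\pi x})\) and some companion \(Q\). The remaining difficulty, which I expect to be the real obstacle, is that the target's off-diagonal entries vanish and its lower-right entry is \(e^{-i\eta x}\). Unitarity handles this: \(\abs{P}^2+\abs Q^2=1\) and \(\abs P\ge1-O(e^{-cn^{1/\sigma}})\) on \([0,1]\) force \(\abs Q=O(e^{-cn^{1/\sigma}/2})\) there. The lower-right entry \(P^*\) is automatically close to \(\overline{A_\eta}=e^{-i\eta x}\). The square root only halves \(c\), which is absorbed into \(c_\sigma\). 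Summing the four entries gives a Frobenius error of at most \(C_\sigma\exp(-c_\sigma N^{1/\sigma})\) with \(N=n\), using \(D=\abs K+O(N)\) gates. Parity adjustments from \cref{rem:parity} cost only \(O(1)\) extra gates.

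Solving \(C_\sigma e^{-c_\sigma N^{1/\sigma}}\le\eps\) gives \(N=O_\sigma([\log(C_\sigma/\eps)]^\sigma)\), which yields the stated depth.
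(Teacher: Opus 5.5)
Your proposal is correct and follows essentially the same route as the paper. The paper uses the same split \(w=K\frac\pi2+\eta\) with an exact \(\abs K\)-gate integer part and the same \(\eta\)-independent Gevrey phase: its \(A_{\eta,\sigma}(x)=e^{i\eta x}\exp(-2i\eta\chi_\sigma(x-1))\) is exactly your \(e^{i\eta\psi(x)}\) with \(\psi(x)=x-2\chi_\sigma(x-1)\), transitioning on \([1,2]\). It then proceeds as you do, with Jackson at \(k\sim N^{1/\sigma}\), normalization by \(1+\alpha_N\), GQSP realization, and \(\abs{Q}^2\le 2\delta_N\) giving the square-root Frobenius bound; your parity adjustment is unnecessary, since the GQSP block in \(e^{i\pi x}\) already uses an even number (\(2N\)) of fixed uploads.
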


\noindent The proof is given in Appendix subsection~\ref{proof:main-gevrey-upper}. In this theorem, we study how a single tunable upload gate is approximated by a fixed upload circuit. 

\begin{theorem}[circuit wise upper bound]
\label{thm:main-circuit-upper}
Let \(U_L^w\) be an \(m\)-variable tunable upload circuit with \(L\) tunable layers. Write \(w_j=(w_{j1},\ldots,w_{jm})\), and choose
\begin{equation}
    w_{jk}=K_{jk}\frac\pi2+\eta_{jk},
    \qquad K_{jk}\in\mathbb Z,\qquad
    \abs{\eta_{jk}}\le\frac{\pi}{4}.
\end{equation}
For every \(\sigma>1\), replacing each tunable upload gate \(e^{iw_{jk}x_k\sigma_z}\) by the subcircuit given in \(\Cref{thm:main-gevrey-upper}\) gives a fixed upload circuit \(U_D^{mf}\) satisfying
\begin{equation}
    \sup_x
    \norm{U_L^w(x)-U_D^{mf}(x)}_{\mathrm{op}}
    \le
    L\,m\, C_\sigma e^{-c_\sigma N^{1/\sigma}},
\end{equation}
using
\begin{equation}
    D=
    O\!\left(
    \sum_{j=1}^{L}\sum_{k=1}^{m}\abs{K_{jk}}+L\,m\,N
    \right)
\end{equation}
fixed upload gates. Hence approximation error at most \(\eps\in(0,1)\) is obtained with
\begin{equation}
    D= 
    O_\sigma\!\left(
    \sum_{j=1}^{L}\sum_{k=1}^{m}\abs{K_{jk}}
    +
    L\,m\left(\log\frac{L\,m\, C_\sigma}{\eps}\right)^\sigma
    \right).
\end{equation}
Here \(\sum_{j=1}^{L}\sum_{k=1}^{m}\abs{K_{jk}}\) is a constant overhead dependent only on the specific tunable upload circuit.
\end{theorem}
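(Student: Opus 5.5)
The plan is to obtain the circuit-wise statement from \Cref{thm:main-gevrey-upper} by gate-wise substitution and a telescoping (hybrid) estimate for products of unitaries. First, factor each tunable layer as $A_j^w(x)=e^{i\theta_j\sigma_z}e^{i\phi_j\sigma_y}\prod_{k=1}^m e^{iw_{jk}x_k\sigma_z}$, using that all the data gates commute. This exhibits $U_L^w(x)$ as a product of $x$-independent unitaries and the $Lm$ single-variable gates $e^{iw_{jk}x_k\sigma_z}$.

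Next, for each pair $(j,k)$ I would apply \Cref{thm:main-gevrey-upper} with frequency $w_{jk}=K_{jk}\frac\pi2+\eta_{jk}$ and a common $N$. This gives a one-variable fixed upload subcircuit $V_{jk}(x_k)$ with $O(|K_{jk}|+N)$ fixed upload gates $e^{i\frac\pi2 x_k\sigma_z}$ such that $\sup_{x_k\in[0,1]}\|V_{jk}(x_k)-e^{iw_{jk}x_k\sigma_z}\|_F\le C_\sigma e^{-c_\sigma N^{1/\sigma}}$. Crucially, $C_\sigma,c_\sigma$ do not depend on $w_{jk}$. Substituting $V_{jk}$ for each gate, and merging adjacent $x$-independent unitaries into the trainable Euler-angle gates, produces a circuit of the form of Definition~\ref{fixeduploadcircuits}.

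There is a bookkeeping point here. Different $(j,k)$ may need different numbers of uploads, while the definition uses a uniform repetition count $M$. I would pad with trainable gates set to identity, i.e.\ pairs $e^{i\frac\pi2x_k\sigma_z}$ and $e^{-i\frac\pi2x_k\sigma_z}$ (the inverse realized via $X$ conjugation as noted after the definition). Equivalently, I would count $D$ as the total number of fixed upload gates, which is $O(\sum_{j,k}|K_{jk}|+LmN)$. I would also check that the $V_{jk}$ are genuine unitaries, since they are realized circuits from \Cref{thm:gqsp-realization}.

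For the error, I would use the standard telescoping bound: if $U_i,V_i$ are unitaries then $\|\prod_i U_i-\prod_i V_i\|_{op}\le\sum_i\|U_i-V_i\|_{op}$. This follows by inserting hybrid products and using unitary invariance of the operator norm. Since $\|\cdot\|_{op}\le\|\cdot\|_F$, this yields $\sup_{x\in[0,1]^m}\|U_L^w(x)-U_D^{mf}(x)\|_{op}\le Lm\,C_\sigma e^{-c_\sigma N^{1/\sigma}}$. Finally, to reach error $\varepsilon$ I would choose $N=\lceil (c_\sigma^{-1}\log(LmC_\sigma/\varepsilon))^\sigma\rceil$. This gives the stated depth with the $c_\sigma^{-\sigma}$ factor absorbed into $O_\sigma$.

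The main obstacle is not analytic, since everything rests on \Cref{thm:main-gevrey-upper}. It is the structural compatibility of the substituted subcircuits with the rigid fixed upload architecture, specifically the parity issue of Remark~\ref{rem:parity}. The single-gate subcircuits may have differing parities or lengths. I would handle this by verifying that padding and the absorption of constant unitaries into neighbouring trainable rotations preserve both the exact form of Definition~\ref{fixeduploadcircuits} and the gate count up to constants.
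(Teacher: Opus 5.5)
Your proposal is correct and follows essentially the paper's own proof: substitute the single-gate blocks of \Cref{thm:main-gevrey-upper} with a common $N$, apply the telescoping bound for products of unitaries, pass from Frobenius to operator norm, and solve $L\,m\,C_\sigma e^{-c_\sigma N^{1/\sigma}}\le\varepsilon$ for $N$. The one step to drop is padding to a uniform $M$, which is not ``equivalent'' to counting total gates: padding every block up to $\max_{j,k}(|K_{jk}|+O(N))$ gives $D=O(L\,m\max_{j,k}|K_{jk}|+L\,m\,N)$ rather than $O(\sum_{j,k}|K_{jk}|+L\,m\,N)$, and cancelling pairs add uploads only in even numbers, so they cannot equalize blocks whose $K_{jk}$ have different parities; the paper simply takes $D$ to be the total fixed upload gate count, which is all the statement requires.
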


\noindent The proof is given in Appendix subsection~\ref{proof:main-circuit-upper}. 

Therefore, for any given target tunable upload circuit, the depth of fixed upload circuit it takes to approximate, with respect to error, grows like
\begin{equation}
    D=O_{\sigma}\!\left[\left(\log\frac{1}{\varepsilon}\right)^{\sigma}\right].
\end{equation}

Consequently, the resource it takes to remove tunability grows at a polylogarithmic rate with respect to \(\frac{1}{\varepsilon}\).

\begin{corollary}[Dense fixed upload circuit upper bound]
\label{cor:dense-circuit-upper}
\label{cor:measured-circuit-upper}
Let \(X=[0,1]^m\), and let
\(
    U_D^{\mathrm{dense}}(\Theta,\Phi,\lambda;x)
\)
denote a circuit using \(D\) dense fixed upload calls from the dense fixed upload family in
\(\ref{def:dense-fixed-family}\), equivalently the architecture of
Ref.~\cite[Theorem~2]{perezsalinas2025universal}. Write
\(
    h_U(x)=\langle0|U(x)|0\rangle .
\)
Suppose \(U_L^w\) is an \(m\)-variable tunable upload circuit with \(L\) tunable layers and coordinate frequencies \(w_{jk}=K_{jk}\frac\pi2+\eta_{jk}\), \(\abs{\eta_{jk}}\le\pi/4\) with output function \(h\). For every \(\sigma>1\) and every \(\eps\in(0,1)\), there exists a dense fixed upload circuit \(U_D^{\mathrm{dense}}\) such that
\begin{equation}
    \norm{h-h_{U_D^{\mathrm{dense}}}}_\infi\le\eps
\end{equation}
and total dense upload count
\begin{equation}
    O_\sigma\!\left(
    \sum_{j=1}^{L}\sum_{k=1}^{m}\abs{K_{jk}}
    +
    L\,m\left(\log\frac{L\,m\, C_\sigma}{\eps}\right)^\sigma
    \right).
\end{equation}
\end{corollary}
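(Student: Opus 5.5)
The plan is to reduce the dense statement to \Cref{thm:main-circuit-upper} by simulating each single-qubit fixed upload gate of the approximating circuit with one call to the dense upload \(V_m\). First apply \Cref{thm:main-circuit-upper} with accuracy \(\eps\) to get a single-qubit fixed upload circuit \(U_{D'}^{mf}\) with
\[
\sup_x\norm{U_L^w(x)-U_{D'}^{mf}(x)}_{\mathrm{op}}\le\eps
\]
and \(D'=O_\sigma\bigl(\sum_{j,k}\abs{K_{jk}}+L\,m(\log(L\,m\,C_\sigma/\eps))^\sigma\bigr)\). Since \(\abs{\langle0|A|0\rangle-\langle0|B|0\rangle}\le\norm{A-B}_{\mathrm{op}}\), this already gives \(\norm{h-h_{U_{D'}^{mf}}}_\infi\le\eps\). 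It then suffices to embed \(U_{D'}^{mf}\) exactly into a dense fixed upload circuit with \(O(D')\) dense calls, using the same reference state.

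For the embedding, identify the qubit with the register \(\ket{k}\otimes\mathbb C^2\) for a chosen coordinate block \(k\). The dense gate \(V_m(x)\) acts on block \(k\) as \(e^{i\pi x_k\sigma_z/4}\), so two consecutive calls \(V_m(x)\,V_m(x)\), with identity trainable unitary between them, act on block \(k\) as \(e^{i\frac\pi2 x_k\sigma_z}\). This is exactly the fixed upload gate of \Cref{fixeduploadcircuits} (the frequency rescaling of the paper is absorbed here). To upload \(x_k\) while the logical qubit sits in block \(k_0\), I would conjugate by a trainable permutation unitary in \(SU(2m)\) that swaps block \(k_0\) with block \(k\) (adjusted by a phase to lie in \(SU(2m)\)), apply the two dense calls, and swap back. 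The swaps and the single-qubit rotations \(e^{i\theta\sigma_z}e^{i\phi\sigma_y}\) (lifted as \(I\otimes\)rotation on the active block) all merge into the adjacent trainable \(R(\theta_j,\phi_j)\). The initial segment \(e^{i\theta_0\sigma_z}e^{i\phi_0\sigma_y}e^{i\lambda\sigma_z}\) is \(x\)-independent, so it is placed into \(R(\theta_0,\phi_0)\), and \(V_m(\lambda)\) is set with \(\lambda=0\) or compensated. The logical \(\ket0\) is taken to be the dense reference \(\ket0\) (relabel the basis so \(\ket{1}\otimes\ket0\) is the reference), so the outputs coincide and the dense circuit uses \(2D'\) calls.

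The main obstacle is bookkeeping rather than analysis. Three points need checking: that the block-swap conjugations and the inactive blocks never leak amplitude into the reference entry, which holds because the circuit is block-diagonal up to permutations that are undone; that the permutations can be chosen in \(SU(2m)\) and expressed through the Euler parametrization of Ref.~\cite{perezsalinas2025universal}; and that the \(\lambda\)-gate convention of \ref{def:dense-fixed-family} does not spoil exactness. If the global-phase or determinant issue causes trouble, I would multiply the swap by a diagonal phase acting only on unused blocks. Since the embedding is exact and doubles the depth, the count stays \(O_\sigma(\cdot)\) as claimed.
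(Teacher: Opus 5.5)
Your proposal is correct and follows the paper's proof: apply \Cref{thm:main-circuit-upper} at accuracy $\eps$, embed the resulting coordinate-wise fixed upload circuit into the dense architecture with constant-factor overhead, and conclude via $\abs{\langle0|(U-V)|0\rangle}\le\norm{U-V}_{\mathrm{op}}$. The only difference is that the paper simply cites Ref.~\cite[Theorem~7]{perezsalinas2025universal} for the embedding, whereas you construct it explicitly: $V_m(x)^2$ acts on block $k$ as $e^{i\frac\pi2 x_k\sigma_z}$, and the block swaps are merged into the trainable $SU(2m)$ unitaries. Your determinant concern is moot, since the block swap $P\otimes I_2$ is a product of two transpositions and already lies in $SU(2m)$.
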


\noindent The proof is given in Appendix subsection~\ref{proof:dense-circuit-upper}. This is an immediate extension of \cref{thm:main-circuit-upper}, we refer to~\cite{perezsalinas2025universal} for a detailed discussion on the direct connection between dense fixed upload circuits and fixed upload circuits.

\begin{corollary}[Bridging towards continuous class]
\label{cor:continuous-function-upper}
\label{thm:dense-fixed-upper}
\label{cor:all-continuous-function-upper}
Let \(X=[0,1]^m\), and define
\(
    C_{\le1}(X)=\{f\in C(X):\norm f_\infi\le1\}.
\)
Let \(\mathcal F\subset C_{\le1}(X)\), and assume that tunable upload hypothesis functions \(h\) are dense in \(\mathcal F\). Then both fixed upload circuits and dense fixed upload circuits hypothesis functions are dense in \(\mathcal F\).

More quantitatively, suppose \(f\in\mathcal F\) and an \(m\)-variable tunable upload circuit \(U_L^w\) with \(L\) tunable layers and coordinate frequencies \(w_{jk}=K_{jk}\frac\pi2+\eta_{jk}\), \(\abs{\eta_{jk}}\le\pi/4\), satisfies
\begin{equation}
    \norm{f-h}_\infi\le\frac{\eps}{2}.
\end{equation}
For every \(\sigma>1\), there exist a coordinate-wise fixed upload circuit \(U_D^{mf}\) and a dense fixed upload circuit \(U_D^{\mathrm{dense}}\) such that
\begin{equation}
    \norm{f-h_{U_D^{mf}}}_\infi\le\eps,
    \qquad
    \norm{f-h_{U_D^{\mathrm{dense}}}}_\infi\le\eps ,
\end{equation}
with fixed upload and dense-upload counts both bounded by
\begin{equation}
    O_\sigma\!\left(
    \sum_{j=1}^{L}\sum_{k=1}^{m}\abs{K_{jk}}
    +
    L\,m\left(\log\frac{2L\,m\, C_\sigma}{\eps}\right)^\sigma
    \right).
\end{equation}
In particular, taking \(\mathcal F=C_{\le1}(X)\) gives the results towards continuous functions.
\end{corollary}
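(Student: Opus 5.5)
The plan is a triangle inequality with an $\eps/2$ split, feeding the circuit-wise bound of \cref{thm:main-circuit-upper} and the dense bound of \cref{cor:dense-circuit-upper} into the output-function level.

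\textbf{Step 1: passing from operators to hypothesis functions.} For any two unitaries $U,V$ on the same space,
\[
\abs{\langle0|U|0\rangle-\langle0|V|0\rangle}\le\norm{U-V}_{\mathrm{op}}.
\]
Hence if $\sup_x\norm{U_L^w(x)-U_D^{mf}(x)}_{\mathrm{op}}\le\eps/2$, then $\norm{h-h_{U_D^{mf}}}_\infi\le\eps/2$.

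\textbf{Step 2: the coordinate-wise quantitative statement.} Apply \cref{thm:main-circuit-upper} with accuracy $\eps/2$. The error bound $L\,m\,C_\sigma e^{-c_\sigma N^{1/\sigma}}\le\eps/2$ is met once
\[
N=\left\lceil\bigl(c_\sigma^{-1}\log(2L\,m\,C_\sigma/\eps)\bigr)^\sigma\right\rceil .
\]
This gives depth
\[
D=O_\sigma\!\Bigl(\sum_{j,k}\abs{K_{jk}}+L\,m\bigl(\log\tfrac{2L\,m\,C_\sigma}{\eps}\bigr)^\sigma\Bigr),
\]
where $c_\sigma$ is absorbed into the $O_\sigma$ constant. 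Combining with Step 1 and the hypothesis $\norm{f-h}_\infi\le\eps/2$ through the triangle inequality yields $\norm{f-h_{U_D^{mf}}}_\infi\le\eps$.

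\textbf{Step 3: the dense quantitative statement.} Apply \cref{cor:dense-circuit-upper} with target accuracy $\eps/2$. This gives a dense circuit with $\norm{h-h_{U_D^{\mathrm{dense}}}}_\infi\le\eps/2$ and upload count of the same form with $\eps$ replaced by $\eps/2$, i.e. the displayed bound. A second triangle inequality finishes this case.

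\textbf{Step 4: density.} Fix $f\in\mathcal F$ and $\eps\in(0,1)$. By the density assumption there is a tunable upload hypothesis $h$ with $\norm{f-h}_\infi\le\eps/2$, and $h$ has some finite $L$ and frequencies $w_{jk}$. Decompose each frequency as $w_{jk}=K_{jk}\frac\pi2+\eta_{jk}$ with $\abs{\eta_{jk}}\le\pi/4$; this is always possible by rounding $2w_{jk}/\pi$ to the nearest integer. Steps 2 and 3 then produce fixed and dense circuits within $\eps$ of $f$, so both families are dense in $\mathcal F$. For $\mathcal F=C_{\le1}(X)$, the density hypothesis is the universality of tunable single-qubit re-uploading circuits from Ref.~\cite{perezsalinas2021onequbit,perezsalinas2025universal}, assumed here as given.

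\textbf{Main obstacle.} This is bookkeeping rather than a genuine difficulty. The point to watch is the passage from the Frobenius or operator norm bounds to the scalar bound in Step 1, together with confirming that the dense corollary's count is stated at accuracy $\eps/2$. Both inflate only the constants inside the logarithm, which is exactly where the factor $2$ appears in the stated count.
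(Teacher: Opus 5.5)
Your proposal is correct and follows the paper's proof essentially step for step: you apply \cref{thm:main-circuit-upper} and \cref{cor:dense-circuit-upper} at accuracy $\eps/2$, pass from operators to amplitudes via $\abs{\langle0|(U-V)|0\rangle}\le\norm{U-V}_{\mathrm{op}}$, close with the triangle inequality against $\norm{f-h}_\infi\le\eps/2$, and obtain density by first choosing a tunable approximant within $\eps/2$. Your remark that every frequency admits a decomposition $w_{jk}=K_{jk}\frac\pi2+\eta_{jk}$ with $\abs{\eta_{jk}}\le\pi/4$ (by rounding $2w_{jk}/\pi$) is a detail the paper leaves implicit, but it does not change the argument.
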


\noindent The proof is given in Appendix subsection~\ref{proof:continuous-function-upper}.

This corollary exhibits that the removal of tunability preserves the universality in continuous function class. It also gives a partial upper bound of the continuous class approximation by fixed upload circuits. But the limitation of this corollary is that it only accounts for how approximation works for the intermediate tunable circuit proxy. Further study can investigate whether it is possible to eliminate this proxy dependency, while maintaining the polylogarithmic depth-error scaling.

\subsection{Lower bounds from mismatch obstruction}

\begin{theorem}[Single residual tunable upload gate lower bound]
\label{thm:main-single-gate-lower}
  Let \(\eta\notin\frac\pi2\mathbb Z\). Assume a residual tunable upload gate is approximated by a fixed upload circuit with
\begin{equation}
    \sup_{x\in[0,1]}
    \norm{U_N^{1f}(x)-e^{i\eta x\sigma_z}}_F
    \le \eps .
\end{equation}
If \(\eps\le \sqrt2\,\sin^2 2\eta\), then
\begin{equation}
    \eps
    \ge
    2\abs{\sin2\eta}\,\Lambda^{-(2N+1)}
\end{equation}
for a universal constant \(\Lambda>1\). Equivalently,
\begin{equation}
    N
    \ge
    \frac{1}{2\log\Lambda}
    \log\!\left(\frac{2\abs{\sin2\eta}}{\eps}\right)
    -\frac12 .
\end{equation}
In particular, for fixed \(\eta\notin \frac\pi2\mathbb Z\),
\begin{equation}
    N=\Omega_\eta\!\left(\log\frac1\eps\right).
\end{equation}
\end{theorem}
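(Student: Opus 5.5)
The plan is to combine the rigidity/parity structure of \cref{rem:parity} with the Turán--Nazarov inequality (\cref{thm:turan-nazarov}). Fix a diagonal entry, say \(p(x)=\langle 0|U_N^{1f}(x)|0\rangle\), and compare it with the corresponding target entry \(e^{i\eta x}\). By \cref{rem:parity}, \(p(x)=\sum_{n\equiv N\,(2),\,|n|\le N}c_n e^{i\pi n x/2}\) has at most \(N+1\) modes. Every such mode satisfies \(e^{i\pi n(x+2)/2}=(-1)^n e^{i\pi nx/2}=(-1)^N e^{i\pi n x/2}\), so
\[
p(x+2)=(-1)^N p(x)\quad\text{for all real }x .
\]
This is the one-entry version of \cref{thm:rigidity}. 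Consider the error function \(g(x)=p(x)-e^{i\eta x}\). It is an exponential polynomial with at most \(s=N+2\) purely imaginary frequencies, and since each entry is dominated by the Frobenius norm, \(\sup_{[0,1]}|g|\le\eps\).

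The first step is to extend this smallness off \([0,1]\) using \cref{thm:turan-nazarov} with \(I=[0,3]\) and \(E=[0,1]\). This gives
\[
\sup_{x\in[0,3]}|g(x)|\le (3\Lambda_0)^{N+1}\eps .
\]
The second step is to show that \(g\) must be large somewhere in \([2,3]\), which is the mismatch obstruction. For \(x\in[2,3]\), write \(\varsigma=(-1)^N\). Then
\[
g(x)=\varsigma\big(p(x-2)-e^{i\eta(x-2)}\big)+e^{i\eta x}\big(\varsigma e^{-2i\eta}-1\big),
\]
so
\[
|g(x)|\ge |1-\varsigma e^{-2i\eta}|-\eps .
\]
Here \(|1-\varsigma e^{-2i\eta}|\) equals either \(2|\sin\eta|\) or \(2|\cos\eta|\). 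Both are at least \(2|\sin\eta\cos\eta|=|\sin2\eta|\), and both are nonzero exactly because \(\eta\notin\frac\pi2\mathbb Z\). The \(\langle1|\cdot|1\rangle\) entry gives the same bound with \(\eta\to-\eta\). Running the argument on both diagonal entries and collecting them in the Frobenius norm lets me recover the factor \(\sqrt2\) and the stated constant \(2|\sin2\eta|\). The hypothesis \(\eps\le\sqrt2\sin^22\eta\) is what guarantees that the subtracted \(\eps\) costs at most a constant fraction of the mismatch, since \(\sin^2 2\eta\le|\sin2\eta|\).

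Combining the two steps gives \(c|\sin2\eta|\le(3\Lambda_0)^{N+1}\eps\). Setting \(\Lambda=3\Lambda_0\) (enlarged if needed to absorb the numerical constant), and using \(N+1\le 2N+1\), yields \(\eps\ge 2|\sin2\eta|\Lambda^{-(2N+1)}\). Taking logarithms gives the stated bound on \(N\), and hence \(N=\Omega_\eta(\log(1/\eps))\).

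I expect the main obstacle to be the constant bookkeeping rather than the mechanism itself. This covers three points. The first is the mode count, which sets the Turán--Nazarov exponent \(s-1\); a more careful count using \(z^{2}\)-variables may be what produces the paper's \(2N+1\). The second is the sign \(\varsigma\) depending on the parity of \(N\), which decides whether the mismatch is governed by \(\sin\eta\) or \(\cos\eta\). The third is transferring between the Frobenius norm and individual entries so that the precise prefactor \(2|\sin2\eta|\) and threshold \(\sqrt2\sin^22\eta\) come out exactly. If the constants do not match, I would first try taking \(I=[0,4]\) and using the full period \(U(x+4)=U(x)\) together with the half-period relation, choosing whichever gives the stated form.
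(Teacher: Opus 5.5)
Your route is correct in substance and differs from the paper's, but one step is false as written. From \(\sup_{[0,1]}\abs{g}\le\eps\) and the mismatch relaxed to \(\abs{\sin2\eta}\), the hypothesis \(\eps\le\sqrt2\sin^22\eta\) does \emph{not} make \(\abs{\sin2\eta}-\eps\) a positive fraction of the mismatch. At \(\eta=\pi/4\) it allows \(\eps=\sqrt2>1=\abs{\sin2\eta}\). Even the unrelaxed mismatch fails: for \(\sin^2\eta=1/3\) and \(N\) even it is \(\abs{1-e^{-2i\eta}}=2/\sqrt3\approx1.155\), while \(\eps\) may reach \(8\sqrt2/9\approx1.257\). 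The inequality \(\sin^22\eta\le\abs{\sin2\eta}\) loses exactly the factor \(\sqrt2\).

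The repair is easy, and in your linear setting the threshold is not needed at all. Let \(\mu=\abs{1-(-1)^Ne^{-2i\eta}}\ge\abs{\sin2\eta}\).
\begin{itemize}
\item If \(\eps\ge\mu/2\), the conclusion \(\eps\ge2\abs{\sin2\eta}\Lambda^{-(2N+1)}\) is trivial once \(\Lambda\ge4\).
\item If \(\eps<\mu/2\), then \(\abs{g}\ge\mu/2\ge\abs{\sin2\eta}/2\) on \([2,3]\). Tur\'an--Nazarov then gives \(\eps\ge\tfrac12\abs{\sin2\eta}(3\Lambda_0)^{-(N+1)}\), which implies the stated form with \(\Lambda=12\Lambda_0\).
\end{itemize}
Alternatively, use the full period as in your fallback. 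The mismatch at \(x=4\) is then \(2\abs{\sin2\eta}\), and \(2\abs{\sin2\eta}-\sqrt2\sin^22\eta\ge(2-\sqrt2)\abs{\sin2\eta}\). Collecting both diagonal entries gains only \(\abs{p-e^{i\eta x}}\le\eps/\sqrt2\): for \(SU(2)\) they are complex conjugates, so they cannot produce the exact prefactor. That prefactor has to be absorbed into \(\Lambda\).

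The paper instead works with the quadratic functional \(g_N=1-\operatorname{Re}[p_Ne^{-i\eta x}]=\tfrac14\norm{U_N^{1f}-e^{i\eta x\sigma_z}}_F^2\). This encodes the whole Frobenius error, including the off-diagonal \(q_N\). The paper uses only \(4\)-periodicity \(p_N(4)=p_N(0)\), gets \(g_N\le\eps^2/4\) on \([0,1]\) and \(g_N(4)\ge2\sin^22\eta-\eps/\sqrt2\), and applies Tur\'an--Nazarov on \([0,4]\) with at most \(4N+3\) modes. Taking the square root then yields exactly \(2\abs{\sin2\eta}\Lambda^{-(2N+1)}\), with only the ratio \(\abs I/\abs E=4\) absorbed into \(\Lambda\). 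The threshold \(\eps\le\sqrt2\sin^22\eta\) is an artifact of that route: the mismatch of \(g_N\) is quadratic in \(\sin2\eta\), while the error at \(x=0\) enters linearly. Your linear single-entry version, which uses parity, has only \(N+2\) modes. It therefore gives the better exponent \(N+1\) and needs no threshold, at the price of matching the stated prefactor only by enlarging \(\Lambda\).
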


\noindent The proof is given in Appendix subsection~\ref{proof:main-single-gate-lower}. 

\begin{theorem}[No free circuit wise lower bound]

\label{thm:main-no-mismatch-free-bound}
There is no positive lower bound depending only on the number of tunable upload gates, their frequencies, and the target error \(\eps\) that holds for every target tunable upload circuit.
\end{theorem}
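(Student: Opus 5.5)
The plan is to exhibit, for any fixed number of gates $L m$, any frequency data, and any $\eps$, a target tunable upload circuit that is approximated to error $\eps$ by a fixed upload circuit of depth independent of $\eps$ (indeed exact, with bounded depth). Any positive lower bound $B(Lm, W, \eps)$ valid for every target would have to grow without bound as $\eps\to0$ to be nontrivial. The counterexample shows that, for the given gate count and frequencies, the required depth stays bounded along a family of targets, so no such bound can exist. The mechanism is cancellation: the trainable angles $\theta_j,\phi_j$ can be chosen so that the tunable upload gates compose into something with no mismatch, even though each individual frequency has a nonzero residual $\eta_{jk}\notin\frac\pi2\mathbb Z$.

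Concretely, I would fix arbitrary frequencies and build the target as follows.

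\begin{enumerate}[label=(\roman*)]
\item For $L\ge2$ and $m=1$, take two layers with frequencies $w$ and $-w$. Set the intervening $\phi=0$ and $\theta=0$, so that $e^{iwx\sigma_z}e^{-iwx\sigma_z}=I$. Conjugation by an $X$-type rotation (realizable as $e^{i\frac\pi2\sigma_y}$ up to $\sigma_z$ phases absorbed into the $\theta$'s) turns $e^{iwx\sigma_z}$ into $e^{-iwx\sigma_z}$. So even with equal frequencies $w,w$ one gets $e^{iwx\sigma_z}\,Y e^{iwx\sigma_z}Y^{\dagger}=I\cdot(\text{const})$.
\item Pairing the layers in this way makes the whole circuit a constant $SU(2)$ matrix.
\item A depth-$0$ fixed upload circuit with Euler angles $\theta_0,\phi_0,\lambda$ reproduces that constant matrix exactly. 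If parity or depth conventions demand an upload, use a depth-$2$ pair $e^{i\frac\pi2x\sigma_z}Ye^{i\frac\pi2x\sigma_z}Y^\dagger$, which equals $I$ up to phase.
\item For an odd number of gates, use the Euler-angle freedom to make one leftover gate $e^{i\eta x\sigma_z}$ with $\eta\in\frac\pi2\mathbb Z$. Alternatively, and more simply, choose the residual-free frequency $w=K\frac\pi2$, for which \cref{thm:main-gevrey-upper} is exact with $D=O(|K|)$.
\end{enumerate}

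The conclusion then follows by contradiction. Suppose a bound $D\ge B(Lm,W,\eps)$ held for all targets with $B\to\infty$ as $\eps\to0$. The constructed target, which has the prescribed gate count and (where needed) the prescribed frequencies, is realized exactly at a fixed finite depth $D_0$ for every $\eps$, and this contradicts the bound. The same construction also shows that the hypothesis of \cref{thm:main-single-gate-lower}, namely a nonzero mismatch, is genuinely needed circuit-wise. The rigidity identity of \cref{thm:rigidity} is satisfied by the constructed target, since it is constant in $x$.

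The main obstacle is pinning down precisely what ``depending only on the number of gates, their frequencies, and $\eps$'' quantifies over, so that the counterexample must respect arbitrary prescribed frequencies, including ones with $\eta\notin\frac\pi2\mathbb Z$. I would handle this with the pairwise cancellation in steps (i)--(ii) when $L\ge2$. For $L=1$, I would argue that no single gate is interesting, because a single mismatched gate is already covered by \cref{thm:main-single-gate-lower}, so the statement should be read for $L m\ge2$. I would check carefully that the $Y$-conjugation is expressible with the allowed $e^{i\theta\sigma_z}e^{i\phi\sigma_y}$ layers: choose $\phi=\pi/2$, which gives $e^{i\frac\pi2\sigma_y}=i\sigma_y$, and $\sigma_y\sigma_z\sigma_y=-\sigma_z$. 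That confirms the cancellation.
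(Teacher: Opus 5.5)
Your core construction is the paper's. Two tunable gates of opposite frequency with trivial interleaved rotations give \(e^{i\eta t\sigma_z}e^{-i\eta t\sigma_z}=I\), which the depth-zero fixed upload circuit reproduces with zero error for every \(\eps\). Hence any valid bound must be \(\le 0\) at that gate count and those frequencies, and this single configuration already proves the theorem. Your \(\sigma_y\)-conjugated variant, \(\sigma_y e^{iwx\sigma_z}\sigma_y e^{iwx\sigma_z}=I\) via \(\phi=\pm\pi/2\), is a correct minor extension: it shows that equal, not just opposite, frequencies can cancel.

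Drop the parts that try to respect arbitrary prescribed frequencies and odd gate counts. They fail, and the theorem does not need them.

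\begin{itemize}
\item Trainable angles never change the frequency set of the target. By the exponential mode count lemma, every entry is an exponential polynomial with frequencies among the signed sums \(\sum_\ell \pm w_\ell\). Suppose none of these sums vanishes, for instance \(w_1=0.3,\ w_2=0.7\), or an odd number of gates all of frequency \(w\neq 0\). Then no choice of \(\theta,\phi,\lambda\) makes the circuit constant. A constant exponential polynomial with only nonzero frequencies must vanish identically, and a unitary cannot.
\item Consequently your opening claim for ``any frequency data'' is false as stated, even for \(L\ge 2\).
\item The first option in (iv), making a prescribed leftover gate residual-free by Euler-angle freedom, is impossible. The second option works only because it changes the frequencies. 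That is legitimate precisely because one counterexample suffices.
\end{itemize}

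Finally, rely on the depth-zero realization, not the depth-two fallback or the \(B\to\infty\) framing. An exact realization at fixed positive depth \(D_0\) only excludes bounds exceeding \(D_0\), whereas the statement excludes every positive bound.
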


\noindent The proof is given in Appendix subsection~\ref{proof:main-no-mismatch-free-bound}. 

\begin{definition}[4-periodic mismatch class]
For an arbitrary one variable tunable upload circuit \(U_L^w:\mathbb R\to U(d)\), define
\begin{equation}
    \Delta_4(U_L^w)
    \defeq
    \norm{U_L^w(4)-U_L^w(0)}_{\mathrm{op}} .
\end{equation}
\(\Delta_4\) measures how badly the target circuit violates the continuous 4-periodic condition forced by the periodicity induced from the \(\frac{\pi}{2}\)-frequencies of fixed upload. 
\end{definition}
This criterion describes a large class of operators that are not 4-periodic. Intuitively, this can be seen as a generalization of the residualness of \(e^{i\eta x \sigma_z}\). We define it only on the one variable case because a multivariate generalization is immediate by coordinate wise one variable witnesses. We will discuss this generalization later in this manuscript.

\begin{theorem}[Mismatch class circuit wise lower bound]
\label{thm:main-circuit wise-lower}
  Let \(U_L^w:\mathbb R\to U(2)\) be a target tunable upload circuit. Assume that every scalar matrix element of \(U_L^w(t)\) is an exponential polynomial with at most \(S\) terms. In particular, an one variable tunable upload circuit with \(L\) tunable layers, equivalently \(L\) tunable upload gates, satisfies this assumption with \(S\le 2^L\) by \(\Cref{lem:finite-phase-term-count}\).

Let \(U_D^{1f}(t)\) be any one variable fixed upload circuit with fixed upload depth at most \(D\). Suppose
\begin{equation}
    \sup_{t\in[0,1]}
    \norm{U_L^w(t)-U_D^{1f}(t)}_{\mathrm{op}}
    \le \eps .
\end{equation}
If
\begin{equation}
    0<\eps\le\frac{\Delta_4(U_L^w)}{2},
\end{equation}
then
\begin{equation}
    D
    \ge
    \frac{1}{2\log\Lambda}
    \log\!\left(
    \frac{\Delta_4(U_L^w)}{2\eps}
    \right)
    -
    \frac{S}{2},
\end{equation}
where \(\Lambda>1\) is a universal constant.

In particular, if \(U_L^w\) is fixed, \(S\) is fixed, and \(\Delta_4(U_L^w)>0\), then every fixed upload approximation must have
\begin{equation}
    D=\Omega\!\left(\log\frac1\eps\right).
\end{equation}
\end{theorem}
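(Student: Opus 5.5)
The plan is to compare the target and the approximant on a larger interval than $[0,1]$. On $[0,1]$ they are $\eps$-close by hypothesis. On $[0,4]$ they must differ by about $\Delta_4(U_L^w)$, because the rigidity theorem forces the fixed upload circuit to be 4-periodic while the target is not. The Turán–Nazarov inequality bounds how large an exponential polynomial with few terms can become on $[0,4]$ given that it is small on $[0,1]$. Since the number of terms is controlled by $S$ and $D$, this yields the depth lower bound.

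\emph{Step 1 (the mismatch forces a large difference on $[0,4]$).} Set $F(t)\defeq U_L^w(t)-U_D^{1f}(t)$. By \Cref{thm:rigidity} in the one-variable case, $U_D^{1f}(4)=U_D^{1f}(0)$. The identity $U_D^{mf}(0)=U_D^{mf}(4)$ is stated there without a parity sign, but I will check it directly: by \Cref{rem:parity} all exponents are $e^{i\pi nx/2}$ with $n\in\mathbb Z$, so each entry is 4-periodic. It follows that
\[
F(4)-F(0)=U_L^w(4)-U_L^w(0),
\]
hence $\norm{F(4)}_{\mathrm{op}}+\norm{F(0)}_{\mathrm{op}}\ge\Delta_4(U_L^w)$. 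Therefore $\sup_{t\in[0,4]}\norm{F(t)}_{\mathrm{op}}\ge\Delta_4/2$.

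\emph{Step 2 (term count and passage to scalar entries).} By assumption, each entry of $U_L^w$ is an exponential polynomial with at most $S$ real frequencies. By \Cref{rem:parity}, each entry of $U_D^{1f}$ is an exponential polynomial with at most $D+1$ frequencies $\pi n/2$. So each entry $F_{ab}$ has at most $s\le S+D+1$ terms. To move between norms I use $\abs{F_{ab}(t)}\le\norm{F(t)}_{\mathrm{op}}$ and $\norm{F(t)}_{\mathrm{op}}\le\norm{F(t)}_F\le 2\max_{a,b}\abs{F_{ab}(t)}$. Combining with Step 1, some entry satisfies $\sup_{[0,4]}\abs{F_{ab}}\ge\Delta_4/4$. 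The hypothesis gives $\sup_{[0,1]}\abs{F_{ab}}\le\eps$.

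\emph{Step 3 (Turán–Nazarov and constant bookkeeping).} Apply \Cref{thm:turan-nazarov} with $I=[0,4]$ and $E=[0,1]$:
\[
\frac{\Delta_4}{4}\le(4\Lambda_0)^{S+D}\,\eps .
\]
Since $4\le 4^{S+D}$ and $S+D\le 2D+S$, setting $\Lambda\defeq 16\Lambda_0>1$ gives
\[
\frac{\Delta_4}{2\eps}\le\frac{\Delta_4}{\eps}\le 4(4\Lambda_0)^{S+D}\le\Lambda^{2D+S}.
\]
Taking logarithms yields $D\ge\frac{1}{2\log\Lambda}\log\frac{\Delta_4}{2\eps}-\frac S2$, which is the claimed bound. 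The assumption $\eps\le\Delta_4/2$ makes the logarithm nonnegative, so the bound is nonvacuous. For fixed $U_L^w$, with $S$ and $\Delta_4>0$ fixed, this gives $D=\Omega(\log(1/\eps))$. The remark that $S\le 2^L$ follows by expanding each tunable upload gate as $e^{iwt}P_0+e^{-iwt}P_1$ (\Cref{lem:finite-phase-term-count}).

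The main obstacle I expect is Step 1's use of rigidity together with the term-count bookkeeping in Step 2. I need to make sure that exact 4-periodicity holds for every depth, including odd depth; this follows from the integer exponents $n$ in \Cref{rem:parity} regardless of the parity of $D$. I also need the frequencies of $F_{ab}$ to be genuinely real, so that the simplified pure-imaginary-exponent form of Turán–Nazarov applies. If frequencies of the target coincide with fixed-upload frequencies the term count only decreases, so the bound still holds.
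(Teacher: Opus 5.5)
Your proposal is correct and follows essentially the paper's argument: exact 4-periodicity of the fixed upload circuit makes the error see the mismatch $\Delta_4(U_L^w)$ at $t=4$, and Tur\'an--Nazarov with $I=[0,4]$, $E=[0,1]$ turns this into the depth bound. The only difference is bookkeeping. The paper tests the mismatch with norming unit vectors $u,v$ to get a single scalar $h$ with $\abs{h(4)}\ge\Delta_4/2$. You instead pass to a computational-basis entry through the Frobenius norm, losing a factor $2$ that you absorb into $\Lambda=16\Lambda_0$; as a result you only need the $S$-term hypothesis for computational-basis entries rather than for $\langle u|U_L^w|v\rangle$ with arbitrary $u,v$.
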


\noindent The proof is given in Appendix subsection~\ref{proof:main-circuit wise-lower}. Notice that the lower bound contains a constant term overhead dependent on the target, that scales with \(L\), the layer of the target tunable circuit. In the most general case it scales at most exponentially.

\begin{corollary}[Multivariate lower bounds by coordinate slice]
\label{coordinateslice}
Even though the theorem is stated for one variable, the same lower bound applies to multivariate circuits with a slightly generalized mismatch condition.

If \(U_L^w:\mathbb R^m\to U(d)\) and \(U_D^{mf}:\mathbb R^m\to U(d)\), take any \(\gamma(t)=x^{(0)}+te_k\). Uniform approximation on \([0,1]^m\) implies approximation of \(U_L^w\circ\gamma\) by the restricted one variable fixed upload circuit \(U_D^{1f}\defeq U_D^{mf}\circ\gamma\) on \([0,1]\). Fixed upload gates in variables other than \(x_k\) become \(t\)-independent unitaries on this slice, while fixed upload gates in \(x_k\) give the same 4-periodic Fourier structure as above. 

Therefore \(\Cref{thm:main-circuit wise-lower}\) applies to every multivariate tunable circuit with coordinate slice mismatch with \(\Delta_4(U_L^w\circ\gamma)>0\). Therefore, for multivariate approximation, we need a mismatch condition over these coordinate slice. 

Intuitively, the \(\gamma\) composition mismatch is a generalization of from one dimensional case to multidimensional case. Continuous periodicity over \([0,4]^m\) requires the boundaries to agree. A coordinate slice mismatch witnesses the mismatch of the boundary. Therefore, nonzero multidimensional mismatch is equivalent to the existence of a bad coordinate slice, represented by \(\gamma\).
\end{corollary}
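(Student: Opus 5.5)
The corollary reduces to \cref{thm:main-circuit wise-lower} by restricting both circuits to a coordinate line. Fix $k$ and a base point $x^{(0)}$, and set $\gamma(t)=x^{(0)}+te_k$.

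\textbf{Step 1: choosing the slice.} To apply the one-variable theorem with approximation hypothesis on $[0,1]$, the segment $\gamma([0,1])$ must lie inside $[0,1]^m$. I would therefore take $x^{(0)}\in[0,1]^m$ with $x^{(0)}_k=0$. The one-variable error hypothesis then follows directly:
\begin{equation}
\sup_{t\in[0,1]}\norm{U_L^w(\gamma(t))-U_D^{mf}(\gamma(t))}_{\mathrm{op}}\le\sup_{x\in[0,1]^m}\norm{U_L^w(x)-U_D^{mf}(x)}_{\mathrm{op}}\le\eps.
\end{equation}

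\textbf{Step 2: structure of the restricted fixed upload circuit.} On the slice, every gate $e^{i\frac\pi2 x_{k'}\sigma_z}$ with $k'\neq k$ becomes the constant unitary $e^{i\frac\pi2 x^{(0)}_{k'}\sigma_z}$. I absorb these into the adjacent trainable $x$-independent matrices. The result is a one-variable product
\begin{equation}
B_{D'}e^{i\frac\pi2 t\sigma_z}B_{D'-1}\cdots e^{i\frac\pi2 t\sigma_z}B_0 ,
\end{equation}
where $D'\le D$ counts the $x_k$-uploads. The matrices $B_j$ are arbitrary $SU(2)$ elements rather than Euler-form gates, but this is harmless. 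Remark~\ref{rem:parity} and the proof of \cref{thm:main-circuit wise-lower} use only this product form, through the Fourier expansion with at most $D'+1$ modes $e^{i\pi nt/2}$. These modes are $4$-periodic, and the expansion is the input to Turán--Nazarov (\cref{thm:turan-nazarov}). The lower bound for $D'$ therefore transfers to $D\ge D'$.

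\textbf{Step 3: the target slice and its mismatch.} On the target side, $U_L^w\circ\gamma(t)$ factors into constant gates $e^{iw_{jk'}x^{(0)}_{k'}\sigma_z}$ and gates $e^{iw_{jk}t\sigma_z}$. Expanding the product as in \Cref{lem:finite-phase-term-count} shows that its scalar entries are exponential polynomials in $t$ with at most $S\le 2^L$ terms. The quantity $\Delta_4(U_L^w\circ\gamma)$ is well defined because $U_L^w$ extends to $\mathbb R^m$. Applying \cref{thm:main-circuit wise-lower} to the pair $(U_L^w\circ\gamma,\,U_D^{mf}\circ\gamma)$ then gives the same logarithmic bound, with $\Delta_4$ replaced by $\Delta_4(U_L^w\circ\gamma)$, whenever this quantity is positive.

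\textbf{Main obstacle: the interpretive "equivalence" paragraph.} One direction is clear: a nonzero slice mismatch is exactly what the proof uses. For the converse, suppose $U_L^w$ fails to be $4$-periodic in some coordinate direction, so that some $x$ has $U_L^w(x+4e_k)\neq U_L^w(x)$. I would argue as follows. Since $U_L^w$ is an exponential polynomial in each variable, the map $x^{(0)}\mapsto U_L^w(x^{(0)}+4e_k)-U_L^w(x^{(0)})$ is real-analytic. If it vanished on the face $\{x_k=0\}\cap[0,1]^m$, it would vanish on the whole hyperplane $\{x_k=0\}$. Translation invariance along $e_k$ fails in general, however, so I expect to state the converse only as a heuristic. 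Alternatively, I would define multivariate mismatch directly as the existence of a slice $\gamma$ with $\Delta_4(U_L^w\circ\gamma)>0$, which is all the lower bound needs.
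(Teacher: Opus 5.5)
Your proposal is correct and follows the paper's own argument: you restrict both circuits to $\gamma$, absorb the non-$x_k$ uploads into constant unitaries so the restricted fixed upload circuit keeps its $4$-periodic Fourier structure, and apply \cref{thm:main-circuit wise-lower} with $\Delta_4(U_L^w\circ\gamma)$; your restriction $x^{(0)}\in[0,1]^m$, $x^{(0)}_k=0$ is exactly the one the paper makes explicit only later, in \cref{cor:dense-family-lower}. The paper also treats the closing equivalence only heuristically, and under the reading its wording suggests ($U_L^w(x+4e_k)\neq U_L^w(x)$ for some $x$ on the face $\{x_k=0\}$ of $[0,4]^m$), your real-analyticity observation already proves the converse, because the face $\{x_k=0\}\cap[0,1]^m$ has nonempty interior in the hyperplane $\{x_k=0\}$.
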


\begin{corollary}[Mismatch class exponential polynomial function lower bound]
\label{cor:mismatch-function-lower}
Let \(h:\mathbb R\to\mathbb C\) be an exponential polynomial with at most \(S\) terms and 4-periodic mismatch
\begin{equation}
    \Delta_4(h)
    \defeq
    \abs{h(4)-h(0)}
    >0 .
\end{equation}
Let \(h_D\) be any hypothesis function produced by a fixed upload circuit of depth at most \(D\). Equivalently for the present bound,
\begin{equation}
    h_{D}(t)=     
    \langle 0|
    U_D^{mf}(t)
    |0\rangle
    =\sum_{n=-D}^{D}a_ne^{i\frac\pi2 nt}.
\end{equation}
If
\begin{equation}
    \sup_{t\in[0,1]}\abs{h(t)-h_D(t)}
    \le\eps,
    \qquad
    0<\eps\le\frac{\Delta_4(h)}{2},
\end{equation}
then
\begin{equation}
    D
    \ge
    \frac{1}{2\log\Lambda}
    \log\!\left(
    \frac{\Delta_4(h)}{2\eps}
    \right)
    -
    \frac{S}{2},
\end{equation}
where \(\Lambda>1\) is a universal constant.
\end{corollary}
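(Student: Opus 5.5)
The plan is to apply the Turán–Nazarov inequality (\cref{thm:turan-nazarov}) to the error function \(e(t)\defeq h(t)-h_D(t)\), viewed as an exponential polynomial on a long interval.

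\emph{Step 1: count the terms.} By hypothesis \(h\) has at most \(S\) exponential terms. By \cref{rem:parity}, \(h_D\) has at most \(2D+1\) terms \(e^{i\frac\pi2 nt}\), with \(\abs n\le D\). So \(e\) is an exponential polynomial with purely imaginary frequencies and at most \(s\le S+2D+1\) terms.

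\emph{Step 2: lower bound on a large interval from the mismatch.} Since \(h_D\) is 4-periodic, \(h_D(4)=h_D(0)\). Therefore
\[
\Delta_4(h)=\abs{h(4)-h(0)}=\abs{e(4)-e(0)}\le 2\sup_{t\in[0,4]}\abs{e(t)}.
\]
This gives \(\sup_{[0,4]}\abs e\ge \Delta_4(h)/2\).

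\emph{Step 3: transfer via Turán–Nazarov.} Apply \cref{thm:turan-nazarov} with \(I=[0,4]\) and \(E=[0,1]\), so \(\abs I/\abs E=4\):
\[
\frac{\Delta_4(h)}{2}\le\sup_{[0,4]}\abs e\le (4\Lambda_0)^{s-1}\sup_{[0,1]}\abs e\le (4\Lambda_0)^{S+2D}\,\eps .
\]
Set \(\Lambda\defeq4\Lambda_0\), which is a universal constant greater than 1. Taking logarithms gives \((2D+S)\log\Lambda\ge\log(\Delta_4(h)/(2\eps))\), and dividing by \(2\log\Lambda\) yields exactly the claimed bound
\[
D\ge\frac{1}{2\log\Lambda}\log\frac{\Delta_4(h)}{2\eps}-\frac S2 .
\]
The assumption \(\eps\le\Delta_4(h)/2\) only serves to make the right-hand logarithm nonnegative, so the statement is not vacuous.

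\emph{Main subtlety.} The delicate point is the term count when frequencies of \(h\) coincide with some \(\frac\pi2 n\). In that case the two sets of terms merge, which only decreases \(s\), so the bound \(s\le S+2D+1\) still holds. One should also note that \(h_D\) is written with exponents \(n\in[-D,D]\). In the multivariate-circuit reading \(h_D(t)=\langle0|U_D^{mf}(t)|0\rangle\) along a slice, this representation holds with \(D\) counting only the uploads in the active coordinate, which is at most the total depth. Everything else is a direct substitution, mirroring the proof of \cref{thm:main-circuit wise-lower} with the operator norm replaced by the scalar modulus.
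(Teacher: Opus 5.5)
Your proof is correct and follows the paper's argument. You bound the error's term count by \(2D+1+S\), use the 4-periodicity of \(h_D\) to force the error to reach \(\Delta_4(h)/2\) somewhere on \([0,4]\), and transfer this to \([0,1]\) by Tur\'an--Nazarov with \(\Lambda=4\Lambda_0\). The one small difference is that you get the lower bound on \(\sup_{[0,4]}\abs{e}\) from \(\abs{e(4)-e(0)}\le 2\sup_{[0,4]}\abs{e}\), whereas the paper uses \(\abs{r(4)}\ge\Delta_4(h)-\eps\ge\Delta_4(h)/2\); as you note, this makes the hypothesis \(\eps\le\Delta_4(h)/2\) unnecessary for the derivation itself.
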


\noindent The proof is given in Appendix subsection~\ref{proof:mismatch-function-lower}.

\begin{corollary}[Dense fixed upload lower bound]
\label{cor:dense-family-lower}
Let \(h^*:\mathbb R^m\to\mathbb C\) be a target function and fix a coordinate slice
\begin{equation}
    \gamma(t)=x^{(0)}+te_k,
    \qquad
    x^{(0)}\in[0,1]^m,
    \qquad
    x^{(0)}_k=0,
\end{equation}
where \(e_k\) is the \(k\)-th standard basis vector. Assume that
\(
    h^*\circ\gamma
\)
is an exponential polynomial with at most \(S\) terms and has nonzero 8-periodic mismatch
\begin{equation}
    \Delta_{8,\gamma}(h^*)
    \defeq
    \abs{h^*(\gamma(8))-h^*(\gamma(0))}
    >0 .
\end{equation}
Let \(h_{U_D^{\mathrm{dense}}}\) be any scalar amplitude hypothesis produced by a dense fixed upload circuit of \(\Cref{def:dense-fixed-family}\) using at most \(D\) dense uploads. If
\begin{equation}
    \sup_{x\in[0,1]^m}
    \abs{h^*(x)-h_{U_D^{\mathrm{dense}}}(x)}
    \le\eps,
    \qquad
    0<\eps\le\frac{\Delta_{8,\gamma}(h^*)}{2},
\end{equation}
then
\begin{equation}
    D
    \ge
    \frac{1}{2\log\Lambda}
    \log\!\left(
    \frac{\Delta_{8,\gamma}(h^*)}{2\eps}
    \right)
    -
    \frac{S}{2},
\end{equation}
where \(\Lambda>1\) is a universal constant. Consequently, no dense upload circuits of depth \(D(\eps)=o(\log(1/\eps))\) can uniformly approximate every member of any function class containing such an \(h^*\).
\end{corollary}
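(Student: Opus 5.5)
We reduce the dense-upload case to the one-variable mismatch argument behind \Cref{thm:main-circuit wise-lower} and \Cref{cor:mismatch-function-lower}, working along the slice \(\gamma\). The plan has three steps: identify the Fourier structure of the restricted hypothesis, establish 8-periodicity of it, and run the Turán--Nazarov comparison on the difference.

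\textbf{Step 1 (Fourier structure on the slice).} Restrict to \(\gamma(t)=x^{(0)}+te_k\). Each dense upload is
\begin{equation}
    V_m(\gamma(t))=\sum_{l\neq k}\ket{l}\!\bra{l}\otimes e^{i\pi x^{(0)}_l\sigma_z/4}+\ket{k}\!\bra{k}\otimes e^{i\pi t\sigma_z/4}.
\end{equation}
Writing \(u=e^{i\pi t/4}\), this equals \(E_0+uE_++u^{-1}E_-\) with \(t\)-independent matrices \(E_0,E_\pm\). The trainable unitaries \(R(\cdot)\) and the \(V_m(\lambda)\) factor do not depend on \(t\). Expanding the product of \(D\) such factors, as in \Cref{rem:parity}, gives
\begin{equation}
    g(t)\defeq h_{U_D^{\mathrm{dense}}}(\gamma(t))=\sum_{n=-D}^{D}a_n e^{i\pi nt/4}.
\end{equation}
This is an exponential polynomial with at most \(2D+1\) terms, and it is 8-periodic, so \(g(8)=g(0)\). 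This periodicity is the dense analogue of \Cref{thm:rigidity}: the upload frequency \(\pi/4\) replaces \(\pi/2\), which is why the relevant mismatch is \(\Delta_8\) rather than \(\Delta_4\).

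\textbf{Step 2 (transfer of the mismatch).} Set \(r(t)=h^*(\gamma(t))-g(t)\). It is an exponential polynomial with at most \(s\le S+2D+1\) terms. Since \(x^{(0)}_k=0\) and \(x^{(0)}\in[0,1]^m\), we have \(\gamma([0,1])\subset[0,1]^m\), so the hypothesis gives \(\sup_{t\in[0,1]}|r(t)|\le\eps\). On the other hand, 8-periodicity of \(g\) gives
\begin{equation}
    |r(8)-r(0)|=\Delta_{8,\gamma}(h^*),
\end{equation}
so \(\max(|r(8)|,|r(0)|)\ge\Delta_{8,\gamma}/2\). Because \(|r(0)|\le\eps\le\Delta_{8,\gamma}/2\), the large value must be attained at \(t=8\), and therefore \(\sup_{[0,8]}|r|\ge\Delta_{8,\gamma}-\eps\ge\Delta_{8,\gamma}/2\).

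\textbf{Step 3 (Turán--Nazarov).} Apply \Cref{thm:turan-nazarov} with \(I=[0,8]\) and \(E=[0,1]\):
\begin{equation}
    \frac{\Delta_{8,\gamma}}{2}\le(8\Lambda_0)^{s-1}\eps\le(8\Lambda_0)^{S+2D}\eps.
\end{equation}
Setting \(\Lambda\) to absorb \(8\Lambda_0\) (adjusted if needed so that it is universal and matches the constant of \Cref{thm:main-circuit wise-lower}) and taking logarithms gives \(2D\log\Lambda\ge\log(\Delta_{8,\gamma}/(2\eps))-S\log\Lambda\), which is the stated bound. The final claim then follows: if \(D(\eps)=o(\log(1/\eps))\), the inequality fails for small \(\eps\) when the target is \(h^*\).

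The main obstacle is Step 1: one must check carefully that the block-diagonal dense upload, sandwiched between arbitrary \(SU(2m)\) unitaries, produces only frequencies in \(\frac{\pi}{4}\mathbb Z\) bounded by \(D\) in absolute value, including the \(V_m(\lambda)\) factor, which is \(t\)-independent. The term-count bookkeeping \(s\le S+2D+1\) is routine.
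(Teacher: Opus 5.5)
Your proposal is correct and follows the paper's proof essentially step for step: you restrict to the slice, show the dense-upload amplitude is a Laurent polynomial in \(e^{i\pi t/4}\) with at most \(2D+1\) terms, use \(g(8)=g(0)\) to force \(|r(8)|\ge\Delta_{8,\gamma}(h^*)-\eps\ge\Delta_{8,\gamma}(h^*)/2\), and apply Tur\'an--Nazarov on \(I=[0,8]\), \(E=[0,1]\) with the fixed ratio absorbed into \(\Lambda\). Your description of the slice amplitude as 8-periodic is the accurate one (the paper's text says ``period four'' but actually uses \(g(8)=g(0)\)), and your explicit check that \(\gamma([0,1])\subset[0,1]^m\) fills in a step the paper leaves implicit.
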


\noindent The proof is given in Appendix subsection~\ref{proof:dense-family-lower}.

This corollary combines Corollary~\ref{coordinateslice}, Corollary~\ref{cor:mismatch-function-lower} and the mechanism of \cref{thm:main-circuit wise-lower}. We could see this as an example of application of the mismatch obstruction mechanism to more general problems. 

\section{Fixed Upload Approximation Mechanisms}\label{sec:tn-qsp}

This section discusses the approximation mechanisms we observed behind the upper and lower bounds. Moreover, we see a possibility to extend these structural principles into a general framework that can be useful beyond the examples presented in this work. The fixed upload imposes a rigid global structure: after setting \(z=e^{i\frac\pi2 x},\) single variable fixed upload circuits, equivalent to parity constrained generalized QSP blocks, have scalar entries that are arbitrary Laurent polynomials in \(z\), equivalently finite 4-periodic exponential polynomials in \(x\). This property is intrinsic to the rigid nature of \(\frac{\pi}{2}\)-frequencies of fixed upload gates. The approximation problem, however, is only imposed on the smaller interval \([0,1]\), or \([0,1]^m\) in multivariate case, as the rigidity constraints prohibit uniform approximation over the global domain. The upper bounds avoid the rigidity by restricting target approximation domain, then constructing a good 2-periodic auxiliary extension that allows Jackson theorem with Gevrey-\(\sigma\) class estimation, then realizing arbitrary generalized QSP circuits that have entries of Laurent polynomial over \(e^{i\pi x}\). The lower bounds exploit the same rigidity property in the opposite direction: if the target has a forced 4-periodic mismatch even outside the restricted approximation domain, then a finite mode fixed upload circuit cannot hide that mismatch obstruction too efficiently even in restricted local domain. Intuitively, our work is mainly focused on how the intrinsic structures of both the target and the hypothesis, even if they were outside the target interval of uniform approximation, affect the depth-error scaling.

\subsection{Auxiliary extensions and upper bounds}

The upper bound construction starts from the fact that the fixed upload circuits have intrinsic rigidity. Its scalar entries live naturally on the 4-periodic torus, because the basic fixed upload gate is \(e^{i\frac\pi2 x\sigma_z}\), where the \(\frac\pi2\)-frequencies lock the whole structure on the 4-torus. A target such as \(e^{i\eta x}\) is generally not continuous 4-periodic unless \(\eta\in\frac\pi2\mathbb Z\), so it cannot be approximated uniformly on the whole periodic domain by continuous 4-periodic functions with vanishing uniform error. Similarly, fixed upload circuits also have 2-\((-1)^D\)periodicity, meaning they have \(f(x)=(-1)^Df(x+2)\) over their domain. This is as restricting as the periodicity, if not more so. Therefore universal uniform approximation with arbitrary error \(\varepsilon\) is impossible over the global periodic domain, if the target class contains functions disagreeing with these two rigidity conditions. 

The way around this obstruction is to approximate only on the restricted local interval \([0,1]^m\), a subset ignoring the mismatch obstructions that occur on the boundaries and periodicity midpoints. Intuitively, it works similarly as using the injectivity of the sine wave over its first 1/4 period, thus avoiding the rigidity on half period and full period. 

Generalized QSP theorem connects each single variable fixed upload circuits with a GQSP unitary with Laurent polynomial having their upload frequencies \(\frac\pi2\), i.e. upload gates of \(e^{i\frac\pi2 x}\), they naturally have 4-periodicity and lives on 4-torus. But the rigidity of the fixed upload gate imposes parity constraint over these Laurent polynomials. In order to gain a dense function subset, we manipulate over the circuit construction and obtain realization of arbitrary Laurent polynomial over \(e^{i\pi x}\), they naturally live on the 2-torus. This is where the 2-periodicity comes from, they are the closest dense functions class that are realizable by fixed upload circuits' entries.

In order to utilize tools from classical approximation theory regarding polynomial approximation, where results are established on the whole torus, we have to re-extend the target to the global periodic domain in a meaningful way. Outside the restricted local interval one is free to choose an auxiliary function that agrees with target, for example \(e^{i\eta x}\) on \([0,1]\), but satisfies the global 2-periodic regularity needed for meeting the rigidity of generalized QSP block realizable by fixed upload circuits. This is the auxiliary extension mechanism behind our upper bounds. Intuitively, it is a scheme that tries to hide the defect of our hypothesis class by artificial manipulations. In~\cite{perezsalinas2025universal}, the universality of fixed upload circuits is proven exactly by this scheme. They chose a mirroring extension guaranteeing the regularity of the auxiliary extension up to continuity over 2-period, so that Fejér's theorem can apply.

In our upper bound theorems, auxiliary extension is chosen to be of Gevrey class. This choice guarantees the target's regularity up to infinite smoothness over 2-period and admits a bound for \(r\)-th derivative. Smoothness allows us to use Jackson's theorem over the torus for a sharp depth-error scaling. By Jackson's theorem, the more regular the extension is on the 2-periodic domain, the faster its Laurent polynomial approximants are to converge. Therefore, Gevrey class admits a satisfactory rate of this convergence, yielding the polylogarithmic depth-error scaling results. Generalized QSP theorem is essential, as it bridges the realizable Laurent polynomial approximants to feasible fixed upload circuit. In this sense, QSP discretizes the desired unitary behavior into repeated applications of a fixed upload, with the trainable freedom moved into the interleaved rotations. The single upload gate argument extends to the circuit wise upper bound by replacing each tunable upload gate separately and adding the resulting errors. This separable replacement is sufficient for the upper bound. 

Generalized QSP theorem has also been used to obtain logarithmic precision dependence for fractional powers of arbitrary unitary oracles\cite[Theorem~10]{motlagh2024generalized}. That result concerns a different computational model, allowing controlled oracle access, ancillary phase processing, and a spectral gap assumption. Its controlled query complexity does not constitute a depth bound for the ancilla free fixed upload architecture considered here. We instead use the polynomial completion component of GQSP as a synthesis theorem after constructing approximants satisfying the global periodicity constraints imposed by fixed uploads.

Aside from the theorems we proved, this general scheme works for similar problems in related fields, and in some sense it is the natural approximation scheme for general parametrized quantum circuit where the data are encoded through gates with fixed frequencies. The characteristics of these approximants make the tools along the line of auxiliary extension useful. Therefore, we see a possibility to further use the tools applied in this work in more general approximation problems with fixed upload parametrized quantum circuits.

\subsection{Mismatch obstruction and Turán-Nazarov inequality}

In the above mentioned auxiliary extension scheme for upper bound, an artificially crafted auxiliary extension allows us to say something about the upper bound, this can be understood as we are manually hiding the obstruction of the target function. From another perspective, it is also interesting to study how much this surgery to hide the obstruction costs, and whether we can extract some information from the obstructions even if it is hidden outside of our target interval. This observation enables us to derive a lower bound.

The analytic tool measuring this obstruction is the Turán-Nazarov inequality for exponential polynomials~\cite{nazarov1993local}. In the form relevant here, an exponential polynomial
\begin{equation}
    p(t)=\sum_{j=1}^{s}a_j e^{i\lambda_j t},
    \qquad \lambda_j\in\mathbb R,
\end{equation}
cannot be uniformly small on a set \(E\) and large on a surrounding interval \(I\) unless the number \(s\) of exponential modes is large:
\begin{equation}
    \sup_{t\in I}\abs{p(t)}
    \le
    \left(\frac{\Lambda_0\abs I}{\abs E}\right)^{s-1}
    \sup_{t\in E}\abs{p(t)} .
\end{equation}
Intuitively, local smallness propagates across the larger interval with a cost exponential in the number of modes. Thus, if some endpoint or off domain value is forced to remain large by some given condition, in our cases the global structure has given positive mismatch, then the exponential polynomial must contain at least logarithmically many modes. Further refinements and constants for this inequality are discussed in~\cite{FriedlandYomdin2013TuranNazarov}.

We proved that essentially all circuits we consider have the form of finite exponential polynomial of some set of frequencies. In fact, all finite quantum circuits with data upload gates have the form of finite exponential polynomials. For the tunable upload circuits this set is a finite subset of \(\mathbb{R}\), for the fixed upload circuits this set is a finite subset of \(\frac\pi2 \mathbb{Z}\). Therefore, the error can be expressed by a finite exponential polynomial, allowing us to estimate its behavior with Turán-Nazarov inequality. On the locality of \([0,1]^m\), we have the uniform approximation condition that says it is this small here. On the global feature over \([0,4]^m\), we can get a mismatch at the boundary, as long as the target is of mismatch class. Note that periodic mismatch simply records the inevitable error at the boundary of the torus. Turán-Nazarov turns any off interval mismatch into a lower bound on the number of modes, and hence into a lower bound on fixed upload depth. For fixed upload circuits, due to the rigidity, two types of mismatches can be extracted, one is the 4-periodic mismatch we used, another is the 2-\((-1)^D\)periodic mismatch. With careful definition, 2-\((-1)^D\)periodicity also provides a broad circuit subset, similar to that of 4-periodic mismatch class. However, as they are structurally similar and provide the same logarithmic depth-error growth rate, we did not explicitly talk about it, and it could be of future research interest.

We summarize this mechanism that produces lower bounds as mismatch obstruction. In fixed upload parametrized quantum circuits approximations, the obstruction exists whenever the target function or circuit violates the rigidity of the hypothesis circuits, while maintaining the exponential polynomial structure typical to finite parametrized quantum circuits. 

\section{Conclusion and Outlook}\label{sec:conclusion}
Fixed upload circuits provide a striking example of universality emerging from an extremely constrained architecture. Although recent work established that fixed upload circuits can approximate the same function classes as tunable upload circuits, the quantitative resource cost of this universality remained largely unexplored. In this work, we developed a depth-error scaling theory for fixed upload approximation and showed that the expressive power lost by removing tunable upload frequencies can be recovered with polylogarithmic growth and overhead. Beyond these quantitative bounds, our analysis reveals two structural mechanisms underlying approximation in fixed upload and QSP-type constructions. The first is an auxiliary extension mechanism, in which a target function restricted to target domain is re-extended into a globally admissible class compatible with the approximation architecture. The second is a mismatch obstruction mechanism, in which structural incompatibilities between the target and approximating family produce unavoidable depth-error lower bounds. While these techniques arise naturally in the fixed upload setting, we observe they may provide useful tools for studying approximation complexity more broadly in quantum signal processing and related quantum learning models.

Our study mainly focuses on the quantitative theory of approximation between fixed upload circuits and tunable upload circuits. The extension of our theorems to continuous function class suffer strong assumptions of appropriate conditions. For upper bound, it remains a theory between fixed upload circuits and tunable circuits, approximation of continuous function has to act through the proxy of tunable circuits, and therefore assuming strong assumptions. The question of whether the upper bounds can be carried to continuous function class under minimal assumptions is to be investigated further. For lower bound, the extension to continuous function class assumes the existence of a finite exponential polynomial representation of the target, as well as the existence of a mismatch obstruction. These assumptions are light in parametrized quantum circuits, but heavier in the realm of continuous function class. The function class for which these lower bounds are powerful is yet to be characterized. For both upper and lower bounds, the target dependent constants can be further analyzed.

From the perspective of quantum machine learning, our results clarify how expressivity is redistributed within restricted quantum architectures. Fixed upload circuits eliminate tunable upload frequencies while preserving universality, transferring the corresponding approximation burden into circuit depth. The fact that this transfer incurs only a near-logarithmic cost suggests that fixed upload circuits retain much of the expressive power of tunable upload circuits despite their substantially reduced upload flexibility. Several questions remain open. Most notably, closing the remaining gap between the logarithmic lower bound and the polylogarithmic upper bound would provide a complete characterization of the approximation complexity of fixed upload universality. It would also be interesting to determine whether stronger mismatch principles can yield sharper lower bounds, and whether the auxiliary extension framework developed here can be adapted to other QSP or Fourier-based quantum approximation schemes. We hope that the techniques introduced in this work provide a useful starting point for a broader quantitative theory of approximation complexity in quantum learning and quantum signal processing.

\vspace{0.5cm}

{\em Acknowledgments:} This research is supported by the National Research Foundation, Singapore through the National Quantum Office, hosted in A*STAR, under its Centre for Quantum Technologies Funding Initiative (S24Q2d0009). This research is also supported by A*STAR under its Young Investigator Research Grant (YIRG) M25N8c0131. We thank José Ignacio Latorre, Adrián Pérez-Salinas and Mahtab Yaghubi Rad for discussions and comments.

\clearpage

\section*{Appendices}
\begin{center}
\emph{Detailed proofs and supplementary technical material.}
\end{center}

\appendix

\section{Proofs for rigidity}
\phantomsection\label{proof:rigidity}
\begin{proof}[Proof of \cref{thm:rigidity}]
Without loss of generality, we reduce to a single variable fixed upload circuit \(U_{D}^f\). The multivariate case can be obtained by taking a coordinate slice composition. Write a fixed upload circuit with \(D\) data uploads as
\begin{equation}
    U_D^{f}(x)
    =
    B_De^{i\pi x\sigma_z/2}B_{D-1}e^{i\pi x\sigma_z/2}\cdots
    B_1e^{i\pi x\sigma_z/2}B_0 ,
\end{equation}
where each \(B_j\) is independent of \(x\). For every
coordinate \(k\),
\begin{equation}
    e^{i\pi 0\sigma_z/2}=I,\qquad
    e^{i\pi 2\sigma_z/2}=e^{i\pi\sigma_z}=-I,\qquad
    e^{i\pi 4\sigma_z/2}=e^{2i\pi\sigma_z}=I.
\end{equation}
These are all scalar multiples of the identity; therefore they commute through all trainable gates. Hence
\begin{equation}
\begin{aligned}
    U_D^{f}(0)
    &=B_DB_{D-1}\cdots B_0,\\
    U_D^{f}(2)
    &=B_D(-I)B_{D-1}\cdots B_1(-I)B_0
      =(-1)^D B_DB_{D-1}\cdots B_0,\\
    U_D^{f}(4)
    &=B_DB_{D-1}\cdots B_0 .
\end{aligned}
\end{equation}
Therefore
\begin{equation}
    U_D^{mf}(0)=U_D^{mf}(4)=(-1)^D U_D^{mf}(2).
\end{equation}

Consequently, let \(V\) be a target unitary valued function on a set\(E\) containing \([0,2]\), and suppose that \(U_D^{f}\) approximates \(V\) there with uniform operator norm error at most \(\eps\).
\begin{equation}
    \sup_{E}\norm{V(x)-U_D^{f}(x)}_{\mathrm{op}}
    \le
    \varepsilon
\end{equation}
Then inevitably,
\begin{equation}
\begin{aligned}
    \norm{V(2)-(-1)^DV(0)}_{\mathrm{op}}
    &\le
    \norm{V(2)-(-1)^DU_D^f(0)}_{\mathrm{op}}
    +
    \norm{(-1)^DU_D^f(0)-(-1)^DV(0)}_{\mathrm{op}}\\
    &=    
    \norm{V(2)-U_D^f(2)}_{\mathrm{op}}
    +
    \norm{U_D^f(0)-V(0)}_{\mathrm{op}}\\
    &\le
    2\varepsilon
\end{aligned}
\end{equation}
This means that
\begin{equation}
    min\{\norm{V(2)-V(0)}_{\mathrm{op}},\norm{V(2)+V(0)}_{\mathrm{op}}\}
    \le
    2\varepsilon
\end{equation}
Thus, we get a constant lower bound of the approximation error only dependent upon whether \(V\) has the same \(2\)-\((-1)^D\)periodicity that \(U_D^{mf}\) has. Without assuming about the nature of \(V\), we cannot get universal approximation of tunable upload circuits over \([0,2]\). 

An example would be the single residual tunable gate \(V(x)=e^{i\eta x \sigma_z}\) with \(\eta=\frac{\pi}{4} \). We would get \(\frac{\sqrt{2}}{2} \le \varepsilon\). Therefore, we can't uniformly approximate tunable circuits on \([0,2]\)
\end{proof}

\section{Proofs for upper bounds}
\label{app:upper-bounds}
\subsection{Single tunable upload gate upper bound}
We approximate one tunable upload gate over \([0,1]\) with a fixed upload circuit in this subsection. Notic that our target is
\begin{equation}
e^{iwx\sigma_z} = 
\begin{pmatrix}
    e^{iwx} & 0\\
    0 & e^{-iwx}
    \end{pmatrix},
    \quad 0 \le x \le 1.
\end{equation}
For every \(w\) we can write \(w = K\frac\pi2 + \eta\), separating \(w\) into a \(\frac\pi 2\) integer part and a residual part. The integer part can be implemented by simply adding \(|K|\) overhead gates. Therefore our only problem is to approximate the residual tunable upload gate \(e^{i\eta x \sigma_z}\) by fixed upload circuits.

Let \(r\in\mathbb N\) and \(\eta\in(0,\frac{\pi}{4})\). Call \(A\) a \(C_{2}^r[0,2]\) auxiliary extension function for \(e^{i\eta x}\) if
\begin{equation}
    A\in C_{2}^r[0,2],\qquad
    A(x+2)=A(x),\qquad
    \norm{A}_\infi\le1,
\end{equation}
and
\begin{equation}
    A(x)=e^{i\eta x},
    \qquad 0\le x\le1.
\end{equation}

Notice that an auxiliary extension function \(A\) is a \(r\)-smooth, 2-periodic extension of \(e^{i\eta x}\).

\begin{theorem}[Single residual tunable upload gate Jackson theorem]
\label{thm:single-residual}
Let \(A\) be a \(C_{2}^r[0,2]\) auxiliary extension for \(e^{i\eta x}\). For every \(N\ge1\), there exists a fixed upload circuit \(R_N^{(\eta,A)}(x)\), using \(O(N)\) copies of \(e^{i\frac\pi2 x\sigma_z}\), such that
\[
    \sup_{x\in[0,1]}
    \norm{R_N^{(\eta,A)}(x)-e^{i \eta x \sigma_z}}_F
    \le
    4\sqrt{\norm{A^{(r)}}_\infi}\,(2N+2)^{-r/2}.
\]
\end{theorem}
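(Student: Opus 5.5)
The plan is to approximate the auxiliary extension \(A\) by a 2-periodic trigonometric polynomial using Jackson's theorem. That polynomial is then realized as the upper-left entry of a generalized QSP block via \cref{thm:gqsp-background}. Finally, we check that the whole \(2\times2\) block is close to \(e^{i\eta x\sigma_z}\) on \([0,1]\) in Frobenius norm.

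\emph{Step 1 (Jackson).} Since \(A\in C_2^r[0,2]\), the 2-periodic form of \cref{thm:jackson} gives \(T\in\mathcal T_N^{(2)}\), that is \(T(x)=\sum_{\nu=-N}^{N}c_\nu e^{i\pi\nu x}\), with
\[
\norm{A-T}_\infi\le\delta\defeq\norm{A^{(r)}}_\infi(2N+2)^{-r}.
\]

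\emph{Step 2 (normalization and GQSP).} Because \(\norm{A}_\infi\le1\), we have \(\norm{T}_\infi\le1+\delta\). Set \(P(z)=\sum_\nu \frac{c_\nu}{1+\delta}z^\nu\), a Laurent polynomial of degree \(N\). On the unit circle \(\abs{P}\le1\), since \(z=e^{i\pi x}\) sweeps the whole circle as \(x\) ranges over a period. \cref{thm:gqsp-background} then gives a fixed upload circuit \(R_N^{(\eta,A)}\) with \(O(N)\) fixed uploads. Its matrix has entries \(P(e^{i\pi x})\), \(P(e^{i\pi x})^*\) on the diagonal and \(Q\), \(-Q^*\) off the diagonal, with \(\abs{P}^2+\abs{Q}^2=1\).

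\emph{Step 3 (error estimate).} For \(x\in[0,1]\) we have \(A(x)=e^{i\eta x}\), so
\[
\abs{P(e^{i\pi x})-e^{i\eta x}}\le\abs{T-A}+\abs{T}\tfrac{\delta}{1+\delta}\le2\delta .
\]
The target is \(\mathrm{diag}(e^{i\eta x},e^{-i\eta x})\), and its lower-right entry is the conjugate of the upper-left one. The two diagonal discrepancies therefore coincide in modulus, and
\[
\norm{R_N-e^{i\eta x\sigma_z}}_F^2=2\abs{P-e^{i\eta x}}^2+2\abs{Q}^2 .
\]
From \(\abs{P}\ge1-2\delta\) we get \(\abs{Q}^2=1-\abs{P}^2\le 4\delta\), assuming \(\delta\le1/2\). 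Hence the squared Frobenius norm is at most \(8\delta^2+8\delta\le16\delta\), which gives the bound \(4\sqrt{\delta}\). This is exactly \(4\sqrt{\norm{A^{(r)}}_\infi}\,(2N+2)^{-r/2}\).

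\emph{Step 4 (large \(\delta\)).} When \(\delta>1/2\), the claim holds trivially for any circuit. The difference of two unitaries in \(U(2)\) has Frobenius norm at most \(2\sqrt2\), and \(2\sqrt2<4\sqrt{1/2}\cdot\sqrt2\le4\sqrt\delta\).

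The main obstacle is Step 3. The off-diagonal companion \(Q\) is not controlled directly; it enters only through \(\abs{Q}^2=1-\abs{P}^2\). This linear-in-\(\delta\) term is what degrades the rate from \(\delta\) to \(\sqrt\delta\), producing the exponent \(r/2\) in the statement. A second point to check carefully is that the GQSP realization in the variable \(e^{i\pi x}\) is attainable with even-parity fixed upload depth, as discussed in \cref{rem:parity}, while staying within \(O(N)\) uploads.
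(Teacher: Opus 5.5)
Your proof is correct and follows the paper's argument essentially step for step: Jackson's theorem on the 2-periodic extension, normalization by $1+\delta$ so that $\sup_{|z|=1}|P(z)|\le 1$, realization via \cref{thm:gqsp-background}, and control of the off-diagonal entry only through $|Q|^2=1-|P|^2$. The paper bounds this last quantity as $1-|P|^2\le 2(1-|P|)\le 2|e^{i\eta x}-P|$, while you use $|P|\ge 1-2\delta$; the two are equivalent, and both arguments finish with the same trivial large-error case. One small slip is in Step 4: the chain $2\sqrt2<4\sqrt{1/2}\cdot\sqrt2\le 4\sqrt\delta$ is wrong as written, since its last step would need $\delta\ge 1$, but the direct comparison $2\sqrt2=4\sqrt{1/2}<4\sqrt\delta$ for $\delta>1/2$ is all you need.
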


\begin{proof}
Apply Jackson IV theorem for 2-periodic \(r\)-smooth function \ref{thm:jackson} to \(A\). There exists
\begin{equation}
    T_N(x)=\sum_{n=-N}^{N}c_ne^{i\pi n x}
\end{equation}
such that
\begin{equation}
    \alpha_N
    \defeq
    \norm{A-T_N}_\infi
    \le
    \frac{\norm{A^{(r)}}_\infi}{(2N+2)^r}.
\end{equation}
This scalar polynomial \(T_N\) is not necessarily generalized-QSP realizable, because its sup norm may be larger than one. Since \(\norm{A}_\infi\le1\),
\begin{equation}
    \norm{T_N}_\infi
    \le
    \norm{A}_\infi+\norm{A-T_N}_\infi
    \le
    1+\alpha_N .
\end{equation}
We can define a normalized N-order Laurent polynomial by
\begin{equation}
    P_N(e^{i\pi x})
    \defeq
    \frac{T_N(x)}{1+\alpha_N}.
\end{equation}
Then
\(
    \sup_{\abs z=1}\abs{P_N(z)}\le1,
\)
so \(P_N\) is realizable with generalized QSP circuits. The normalization changes the scalar approximation by at most another \(\alpha_N\):
\begin{equation}
    \norm{A-P_N(e^{i\pi\cdot})}_\infi
    =
    \norm{A-\frac{T_N}{1+\alpha_N}}_\infi    
    \le
    \frac{\norm{A-T_N}_\infi+\alpha_N\norm{A}_\infi}{1+\alpha_N}
    \le
    2\alpha_N .
\end{equation}
Set
\begin{equation}
    \delta_N
    \defeq
    \norm{A-P_N(e^{i\pi\cdot})}_\infi .
\end{equation}
Then
\begin{equation}
    \delta_N
    \le
    \frac{2\norm{A^{(r)}}_\infi}{(2N+2)^r}.
\end{equation}

By Generalized QSP theorem~\ref{thm:gqsp-background}, \(P_N\) is the upper left entry of a unitary, obtained in the form of a generalized QSP circuit, which is realizable by a one dimensional fixed upload circuit with upload gate \(e^{i\frac\pi2x \sigma_z}\)
\begin{equation}
    R_N^{(\eta,A)}(x)
    =
    \begin{pmatrix}
    P_N(e^{i\pi x}) & -Q_N(e^{i\pi x})^*\\
    Q_N(e^{i\pi x}) & P_N(e^{i\pi x})^*
    \end{pmatrix},
\end{equation}
where
\(
    \abs{P_N(e^{i\pi x})}^2+\abs{Q_N(e^{i\pi x})}^2=1 .
\)

On \([0,1]\), the auxiliary extension function equals the target function, so
\begin{equation}
    \abs{P_N(e^{i\pi x})-e^{i\eta x}}\le\delta_N,
    \qquad 0\le x\le1.
\end{equation}
The whole block is controlled by calculating any operator norm, resulting an square root scale. Indeed, because \(\abs{P_N(e^{i\pi x})}\le1\),

\begin{equation}
    \abs{Q_N(e^{i\pi x})}^2
    =
    1-\abs{P_N(e^{i\pi x})}^2    
    \le
    2\bigl(1-\abs{P_N(e^{i\pi x})}\bigr)    
    \le
    2\abs{e^{i\eta x}-P_N(e^{i\pi x})}
    \le
    2\delta_N .
\end{equation}

Writing \(p_N(x)=P_N(e^{i\pi x})\) and \(q_N(x)=Q_N(e^{i\pi x})\), we get

\begin{equation}
    \norm{R_N^{(\eta,A)}(x)-E_\eta(x)}_F^2
    =
    2\abs{p_N(x)-e^{i\eta x}}^2
    +2\abs{q_N(x)}^2    
    \le
    2\delta_N^2+4\delta_N .
\end{equation}
If \(\alpha_N\le1/2\), then \(\delta_N\le1\) and hence
\begin{equation}
    \norm{R_N^{(\eta,A)}(x)-E_\eta(x)}_F
    \le
    \sqrt{6\delta_N}
    \le
    \sqrt{12\norm{A^{(r)}}_\infi}\,(2N+2)^{-r/2}.
\end{equation}
If \(\alpha_N>1/2\), then the right hand side of the theorem is at least
\begin{equation}
    4\sqrt{\alpha_N}>2\sqrt2,
\end{equation}
while the Frobenius distance between two \(2\times2\) unitaries is at most \(2\sqrt2\). Thus the same bound holds in all cases after replacing \(\sqrt{12}\) by \(4\).
\end{proof}

\begin{corollary}[Single tunable upload gate Jackson theorem]
\label{cor:Jackson-single}
Let \(w\in\mathbb R\), and choose
\begin{equation}
    K\in\mathbb Z,\qquad
    \eta=w-K\frac\pi2,\qquad
    \abs{\eta}\le \frac{\pi}{4}.
\end{equation}
Let \(A\) be a \(C_{2}^r[0,2]\) auxiliary extension for \(e^{i\eta x}\). Then the tunable upload gate \(e^{iw x\sigma_z}\) can be approximated uniformly on \([0,1]\) to Frobenius error at most \(\eps\) by a fixed upload circuit with upload depth
\begin{equation} 
    D=
    \abs K
    +
    O\!\left[
    \left(
    \frac{\norm{A^{(r)}}_\infi}{\eps^2}
    \right)^{1/r}
    \right].
\end{equation}
\end{corollary}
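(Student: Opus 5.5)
The plan is to reduce to \cref{thm:single-residual} by factoring off the integer part of the frequency. Since \(w=K\frac\pi2+\eta\) and all these gates are diagonal, they commute, so
\begin{equation}
    e^{iwx\sigma_z}=\left(e^{i\frac\pi2 x\sigma_z}\right)^{K}e^{i\eta x\sigma_z}.
\end{equation}
For \(K\ge0\) the first factor is exactly \(K\) fixed upload gates. For \(K<0\) we use \(\abs K\) inverse uploads, each realized as \(Xe^{i\frac\pi2 x\sigma_z}X\) with the \(X\) rotations absorbed into neighbouring trainable gates. This exact factor therefore costs \(\abs K\) fixed uploads and introduces no error.

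Next we replace \(e^{i\eta x\sigma_z}\) by the circuit \(R_N^{(\eta,A)}\) of \cref{thm:single-residual}. The theorem is stated for \(\eta\in(0,\pi/4)\), so we treat the remaining cases separately. If \(\eta=0\), nothing needs to be approximated. If \(\eta<0\), we conjugate by \(X\): \(e^{i\eta x\sigma_z}=Xe^{i\abs\eta x\sigma_z}X\), and \(X\) is \(x\)-independent and gets absorbed. The case \(\eta=\pm\pi/4\) is handled by the same argument, since the construction only needs the extension \(A\). To keep the statement as given, we take \(A\) to be an extension of whichever of \(e^{\pm i\abs\eta x}\) is needed. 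The prefactor is unitary and therefore preserves Frobenius distance, so
\begin{equation}
    \sup_{x\in[0,1]}\norm{\left(e^{i\frac\pi2 x\sigma_z}\right)^{K}R_N^{(\eta,A)}(x)-e^{iwx\sigma_z}}_F\le 4\sqrt{\norm{A^{(r)}}_\infi}\,(2N+2)^{-r/2}.
\end{equation}

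Finally we solve for \(N\). It suffices that \(16\norm{A^{(r)}}_\infi(2N+2)^{-r}\le\eps^2\), which holds once
\begin{equation}
    2N+2\ge\left(\frac{16\norm{A^{(r)}}_\infi}{\eps^2}\right)^{1/r}.
\end{equation}
Choosing \(N=\lceil(16\norm{A^{(r)}}_\infi/\eps^2)^{1/r}\rceil\) works. Because \(16^{1/r}\le16\), this gives \(N=O((\norm{A^{(r)}}_\infi/\eps^2)^{1/r})\).

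By \cref{thm:gqsp-background}, \(R_N^{(\eta,A)}\) uses \(O(N)\) uploads (explicitly \(2N\)). The total depth is then \(D=\abs K+O(N)\), which is the claimed bound.

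The only delicate point is bookkeeping. We must check that the explicit \(X\) conjugations and the inverse-upload trick do not change the upload count. We must also check that the parity issue of \cref{rem:parity} is harmless. The GQSP block realizes Laurent polynomials in \(e^{i\pi x}\) with an even number of uploads, and the extra \(\abs K\) gates simply fix the overall parity. Since we are building an explicit circuit rather than matching a prescribed parity, no conflict arises.
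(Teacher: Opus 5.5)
Your proposal is correct and follows the paper's proof essentially step for step. You split off the exact factor \(\left(e^{i\frac\pi2 x\sigma_z}\right)^K\) at cost \(\abs K\) (with the \(X\)-conjugation trick for \(K<0\)), apply \cref{thm:single-residual} to the residual, use unitary invariance of the Frobenius norm, and solve \(4\sqrt{\norm{A^{(r)}}_\infi}\,(2N+2)^{-r/2}\le\eps\) for \(N\). Your extra case analysis for \(\eta\le0\) is more careful than the paper, which simply applies the theorem; that theorem's proof never uses \(\eta>0\). For \(\eta<0\) you do not need to re-choose \(A\): \(\overline{A}\) is an extension of \(e^{i\abs\eta x}\) with \(\norm{\overline{A}^{(r)}}_\infi=\norm{A^{(r)}}_\infi\).
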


\begin{proof}
The split gives
\begin{equation}
    e^{iw x\sigma_z}
    =
    e^{iK\frac\pi2 x\sigma_z}e^{i\eta x\sigma_z}
\end{equation}
The factor \((e^{i\frac\pi2 x\sigma_z})^K\) is exact and costs \(\abs K\) fixed upload gates, counting \(\bigl(e^{i\frac\pi2 x\sigma_z}\bigr)^\dagger\) as one fixed upload gate if \(K<0\), as the inverse can be realized by adding \(X\) rotations as exhibited in \ref{fixeduploadcircuits}. Define
\begin{equation}
    U_D^{1f}(x)
    \defeq
    \bigl(e^{i\frac\pi2 x\sigma_z}\bigr)^K R_N^{(\eta,A)}(x),
\end{equation}
where \(R_N^{(\eta,A)}\) is the residual replacement from Theorem~\ref{thm:single-residual}. By unitary invariance of the Frobenius norm,

\begin{equation}
    \norm{U_D^{1f}(x)-e^{iw x\sigma_z}}_F
    =
    \norm{\bigl(e^{i\frac\pi2 x\sigma_z}\bigr)^K
    \bigl(R_N^{(\eta,A)}(x)-E_\eta(x)\bigr)}_F    
    =
    \norm{R_N^{(\eta,A)}(x)-E_\eta(x)}_F .
\end{equation}

Thus it is enough to choose \(N\) so that
\begin{equation}
    4\sqrt{\norm{A^{(r)}}_\infi}\,(2N+2)^{-r/2}
    \le
    \eps .
\end{equation}
Equivalently,
\begin{equation}
    N
    =
    O\!\left[
    \left(
    \frac{\norm{A^{(r)}}_\infi}{\eps^2}
    \right)^{1/r}
    \right].
\end{equation}
Adding the exact \(\abs K\) uploads proves the corollary.
\end{proof}

\subsection{Choosing the best auxiliary extension function class}
The single tunable upload gate theorem reduces the upper bound to controlling
\begin{equation}
    \delta_N
    \lesssim
    \frac{\norm{A^{(r)}}_\infi}{(2N+2)^r}.
\end{equation}
This is a Jackson type approximation argument. The error to \(N\) scaling is determined by the regularity, or explicitly, property of the derivatives of auxiliary extension function, which we can design at our convenience. In general, the more we can say about \(\norm{A^{(r)}}_\infi\), the more we know about this depth-error scaling. Therefore, the sharpest approximation upper bound that this model can potentially obtain depends on our choice of auxiliary extension function.

In the existing literature~\cite{perezsalinas2025universal}, an mirroring auxiliary extension is chosen, which satisfies regularity up to continuity over the torus. But this does not satisfy the regularity needed for Jackson's theorem over \(r\)-smoothness, for \(r>1\). Therefore, selecting a function class that is admissible and satisfies Jackson's requirement is essential for our purpose.

One possibility of auxiliary extension function that meets the r-smoothness and periodicity requirement is to use a finite order Hermite interpolation on the phase, by extending the \(\eta x\) function over\([0,1]\) to a Hermite polynomial interpolated composite function over \([0,2]\). This gives a \(C_{2}^r[0,2]\) auxiliary extension function, but by itself it does not admit any control of \(\norm{A^{(r)}}_\infi\) as \(r\) varies. In general, an arbitrary function of mere smooth class admits no bound over the \(r\)-th derivative. These are exactly the finite smoothness methods and admits polynomial dependence on \(\frac1\varepsilon\).

Analytic functions would be ideal auxiliary extension functions, because Cauchy estimates control all derivatives, but they are unavailable for non-integer residual frequency: if a 2-periodic analytic function agrees with \(e^{i\eta x}\) on an open interval, analytic continuation forces it to be \(e^{i\eta x}\) everywhere, and 2-periodicity then requires \(e^{2i\eta}=1\), i.e. \(\eta\in\pi\mathbb Z\subset \frac\pi2 \mathbb{Z}\). This gives a contradiction.

Therefore, for our purpose, we choose something in between analytic class and smooth class. The natural choice is the 2-periodic Gevrey class \(G^\sigma_2\) from Definition~\(\ref{def:2-periodic-gevrey}\). Indeed, this class allows a bound to \(r\)-th derivative, and we will show in later sections that there exist such auxiliary extensions that satisfies this regularity condition. But first, we examine the improvement of depth-error bound for choosing Gevrey class auxiliary extensions.

\begin{lemma}[Gevrey class Jackson type approximation]
\label{lem:gevrey-jackson}
Let \(A\in G^\sigma_2\), \(\sigma>1\), then there exist constants \(C_{\sigma,A},c_{\sigma,A}>0\) such that, for every \(N\ge1\), there is \(T_N\in\mathcal T_N^{(2)}\) satisfying
\begin{equation}
    \norm{A-T_N}_\infi
    \le
    C_{\sigma}e^{-c_{\sigma}N^{1/\sigma}}.
\end{equation}
\end{lemma}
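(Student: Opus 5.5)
The plan is to combine the 2-periodic Jackson estimate stated after \cref{thm:jackson} with the Gevrey derivative bounds of Definition~\ref{def:2-periodic-gevrey}, and then optimize over the smoothness order \(k\) as a function of \(N\). Since \(A\in G^\sigma_2\subset\mathcal C_2^{(k)}\) for every \(k\), for every \(k\ge0\) and \(N\ge1\) we get
\begin{equation}
    \inf_{T\in\mathcal T_N^{(2)}}\norm{A-T}_\infi
    \le
    \frac{\norm{A^{(k)}}_\infi}{(2N+2)^k}
    \le
    C\left(\frac{R}{2N+2}\right)^k (k!)^\sigma
    \le
    C\left(\frac{R\,k^\sigma}{2N+2}\right)^k ,
\end{equation}
using \(k!\le k^k\). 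Here \(C,R\) are the Gevrey constants of \(A\). The infimum is attained in the finite-dimensional space \(\mathcal T_N^{(2)}\), so a minimizer \(T_N\) exists, or we can simply take a near-minimizer.

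Next we choose \(k\) depending on \(N\) so that the base \(R k^\sigma/(2N+2)\) is at most \(e^{-1}\). Take
\begin{equation}
    k=k(N)=\left\lfloor\left(\frac{2N+2}{eR}\right)^{1/\sigma}\right\rfloor .
\end{equation}
Then \(Rk^\sigma\le(2N+2)/e\), so the bound becomes \(Ce^{-k}\le Ce\cdot\exp\bigl(-(2N+2)^{1/\sigma}(eR)^{-1/\sigma}\bigr)\), which is at most \(Ce\,e^{-cN^{1/\sigma}}\) with \(c=(2/(eR))^{1/\sigma}\). When \(N\) is so small that \(k=0\), the bound with \(k=0\) gives \(\norm{A-0}_\infi\le C\). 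This is still of the stated form after enlarging the prefactor to \(C_\sigma=Ce\), because \(e^{-cN^{1/\sigma}}\ge e^{-1}\) whenever \((2N+2)/(eR)<1\). So \(C_\sigma=eC\) and \(c_\sigma=(2/(eR))^{1/\sigma}\) work uniformly for all \(N\ge1\).

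The only delicate point is bookkeeping rather than analysis. The constants depend on \(A\) through \(C\) and \(R\), as the statement's \(C_{\sigma,A},c_{\sigma,A}\) indicates. When this lemma is later applied to the single-gate theorem, we must check that the Gevrey extension of \(e^{i\eta x}\) has \(C,R\) bounded uniformly in \(\abs{\eta}\le\pi/4\); that is the content of the subsequent construction, not of this lemma. A secondary check is the floor in \(k\): the estimate \(e^{-\lfloor y\rfloor}\le e\,e^{-y}\) absorbs it, and no rounding issue affects the exponent \(N^{1/\sigma}\).
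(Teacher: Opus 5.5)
Your proposal is correct and follows the same route as the paper. Both apply the 2-periodic Jackson estimate at order \(k\), insert the Gevrey bound with \(k!\le k^k\), and take \(k\asymp N^{1/\sigma}\) so that the base \(Rk^\sigma/(2N+2)\) stays below one. Your explicit choice \(k=\lfloor((2N+2)/(eR))^{1/\sigma}\rfloor\), together with \(e^{-\lfloor y\rfloor}\le e\,e^{-y}\), gives constants valid for all \(N\ge1\) at once, which is slightly cleaner than the paper's choice \(k=\lfloor a_\sigma N^{1/\sigma}\rfloor\) followed by adjusting constants for finitely many small \(N\).
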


\begin{proof}
Apply Jackson's IV theorem for 2-torus \ref{thm:jackson} with derivative order \(k\). The Gevrey bound gives
\begin{equation}
    \norm{A-T_N}_\infi
    \le
    C\frac{R^k(k!)^\sigma}{(2N+2)^k}.
\end{equation}
Using \(k!\le k^k\),
\begin{equation}
    \norm{A-T_N}_\infi
    \le
    C\left(\frac{Rk^\sigma}{2N+2}\right)^k .
\end{equation}
Choose
\begin{equation}
    k=\left\lfloor a_\sigma N^{1/\sigma}\right\rfloor
\end{equation}
with \(a_\sigma>0\) small enough that \(Ra_\sigma^\sigma\le e^{-2}\). Then, for all sufficiently large \(N\),
\begin{equation}
    \frac{Rk^\sigma}{2N+2}\le e^{-2},
\end{equation}
and the error is at most \(Ce^{-2k}\). Since \(k\ge (a_\sigma/2)N^{1/\sigma}\) for all sufficiently large \(N\), this gives the claimed stretched-exponential estimate after adjusting constants for the finitely many small \(N\).
\end{proof}

\begin{remark}
  Gevrey class admits sub-exponential uniform convergence for polynomial approximation on torus.
\end{remark}

\subsection{Constructing a Gevrey auxiliary extension function}

In this subsection, we construct a Gevrey class auxiliary extension function by several layers of composition and verify that these compositions preserves the Gevrey property.

For \(\sigma>1\), set
\begin{equation}
    \alpha=\frac{1}{\sigma-1}.
\end{equation}
Then \(\sigma=1+1/\alpha\). Define
\begin{equation}
    \rho_\sigma(t)
    =
    \begin{cases}
    \exp(-t^{-\alpha}), & t>0,\\
    0, & t\le0,
    \end{cases}
\end{equation}
and
\begin{equation}
    \chi_\sigma(t)
    =
    \frac{\rho_\sigma(t)}
    {\rho_\sigma(t)+\rho_\sigma(1-t)},
    \qquad 0\le t\le1.
\end{equation}

\begin{lemma}[\(\rho_\sigma\) and \(\chi_\sigma\) are in \(G^\sigma\)]
\label{lem:flat-gevrey-smoothstep} The function \(\rho_\sigma\) belongs to \(G^\sigma\) on every compact interval. Moreover, \(\chi_\sigma\in G^\sigma([0,1])\),
\begin{equation}
    \chi_\sigma(0)=0,\qquad \chi_\sigma(1)=1,
\end{equation}
and
\begin{equation}
    \chi_\sigma^{(n)}(0)=\chi_\sigma^{(n)}(1)=0,
    \qquad n\ge1.
\end{equation}
Consequently, the extension
\begin{equation}
    \chi_\sigma(t)=0 \quad (t\le0),
    \qquad
    \chi_\sigma(t)=1 \quad (t\ge1)
\end{equation}
is Gevrey-\(\sigma\) on compact intervals.
\end{lemma}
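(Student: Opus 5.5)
My plan is to prove the three claims in order: the Gevrey bound for \(\rho_\sigma\), then the Gevrey property of \(\chi_\sigma\), then the flatness at the endpoints, which gives the glued extension.

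\textbf{Step 1: \(\rho_\sigma\) is Gevrey-\(\sigma\).} For \(t\le 0\) there is nothing to prove, so the content is a uniform bound on \(\rho_\sigma^{(n)}(t)\) for \(t>0\). I would use the Cauchy integral formula on a disk around \(t\) of radius \(\theta t\), with \(\theta\in(0,1/2)\) fixed. On that disk \(\operatorname{Re}(z^{-\alpha})\ge c\,t^{-\alpha}\) with \(c=c(\alpha,\theta)>0\). This gives
\[
\abs{\rho_\sigma^{(n)}(t)}\le n!\,(\theta t)^{-n}e^{-c t^{-\alpha}}.
\]
Maximizing \(s^{n/\alpha}e^{-cs}\) over \(s=t^{-\alpha}\) gives a bound of the form \(C^n n^{n/\alpha}\). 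Hence
\[
\abs{\rho_\sigma^{(n)}(t)}\le C R^n n!\,(n!)^{1/\alpha}=CR^n(n!)^{1+1/\alpha}=CR^n(n!)^\sigma ,
\]
using Stirling's formula. For large \(t\) on a compact interval the disk radius can simply be capped at a constant. This step also shows that every derivative of \(\rho_\sigma\) tends to \(0\) as \(t\to0^+\). Therefore \(\rho_\sigma\) is smooth across \(0\), with all derivatives vanishing there, and the bound holds on every compact interval.

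\textbf{Step 2: \(\chi_\sigma\) is Gevrey-\(\sigma\) on \([0,1]\).} The denominator \(g(t)=\rho_\sigma(t)+\rho_\sigma(1-t)\) is Gevrey-\(\sigma\), since Gevrey classes are closed under sums and affine reflections. It is also bounded below on \([0,1]\) by \(2e^{-2^\alpha}>0\), because at least one of \(t,1-t\) is \(\ge 1/2\). I would then invoke inverse closedness of Gevrey classes (\cite{RainerSchindl2014Composition}) to get \(1/g\in G^\sigma\), and closure under products (Leibniz rule together with \(\binom{n}{k}(k!)^\sigma((n-k)!)^\sigma\le (n!)^\sigma\)) to get \(\chi_\sigma=\rho_\sigma\cdot(1/g)\in G^\sigma([0,1])\). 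If a self-contained argument for \(1/g\) is preferred, the Faà di Bruno formula applied to \(u\mapsto 1/u\) on \([m,\infty)\) gives it. This composition estimate is the step I expect to be the main technical obstacle, and I would cite it rather than redo it.

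\textbf{Step 3: boundary values and the glued extension.} At \(t=0\) we have \(\rho_\sigma(0)=0\) and \(\rho_\sigma(1)>0\), so \(\chi_\sigma(0)=0\). By the symmetry \(\chi_\sigma(1-t)=1-\chi_\sigma(t)\), we get \(\chi_\sigma(1)=1\). For the derivatives, every derivative of \(\rho_\sigma\) vanishes at \(0\) by Step 1, and Leibniz applied to \(\rho_\sigma\cdot(1/g)\) gives \(\chi_\sigma^{(n)}(0)=0\) for all \(n\ge1\). The symmetry then gives \(\chi_\sigma^{(n)}(1)=0\).

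Gluing with the constants \(0\) on \((-\infty,0]\) and \(1\) on \([1,\infty)\) matches all derivatives at \(0\) and \(1\), so the extension is \(C^\infty\). Its derivative bounds are the maximum of those on the three pieces, and the constant pieces contribute zero. Hence the extension is Gevrey-\(\sigma\) on every compact interval.
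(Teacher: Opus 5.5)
Your proposal follows the paper's proof essentially step for step: Cauchy estimates on a disk of radius proportional to \(t\), optimization of \(u^{n/\alpha}e^{-cu}\) over \(u=t^{-\alpha}\), inverse and product closure of Gevrey classes via \cite{RainerSchindl2014Composition}, and flatness of \(\chi_\sigma\) inherited from \(\rho_\sigma\). Your symmetry \(\chi_\sigma(1-t)=1-\chi_\sigma(t)\) is a tidy way to handle the endpoint \(t=1\).

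Two small corrections are needed. First, the bound \(\operatorname{Re}(z^{-\alpha})\ge c\,t^{-\alpha}\) on \(\abs{z-t}\le\theta t\) requires \(\alpha\arcsin\theta<\pi/2\). So \(\theta\) must be chosen small depending on \(\alpha=1/(\sigma-1)\), as the paper does with its \(\delta\), rather than as an arbitrary fixed number in \((0,1/2)\). Second, your argument gives \(\rho_\sigma(t)+\rho_\sigma(1-t)\ge e^{-2^\alpha}\), not \(2e^{-2^\alpha}\); any positive lower bound suffices, so this does not affect the proof.
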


\begin{proof}
Away from \(t=0\), \(t\mapsto e^{-t^{-\alpha}}\) is analytic, hence Gevrey of every order. It remains to control the derivatives as \(t\downarrow0\). For \(0<t\le1\), use the holomorphic extension \(z\mapsto e^{-z^{-\alpha}}\) in a disk \(\abs{z-t}\le\delta t\), with \(\delta>0\) chosen small enough that a single branch is used and
\begin{equation}
    \operatorname{Re}(z^{-\alpha})\ge c t^{-\alpha}
\end{equation}
for some \(c>0\). Cauchy's estimate gives
\begin{equation}
    \abs{\rho_\sigma^{(n)}(t)}
    \le
    n!(\delta t)^{-n}e^{-c t^{-\alpha}}
    =
    n!\delta^{-n}t^{-n}e^{-c t^{-\alpha}} .
\end{equation}
Set \(u=t^{-\alpha}\). Then
\begin{equation}
    t^{-n}e^{-c t^{-\alpha}}
    =
    u^{n/\alpha}e^{-cu}.
\end{equation}
The maximum over \(u>0\) occurs at \(u=n/(\alpha c)\), so
\begin{equation}
    u^{n/\alpha}e^{-cu}
    \le
    \left(\frac{n}{\alpha c e}\right)^{n/\alpha}.
\end{equation}
Therefore
\begin{equation}
    \abs{\rho_\sigma^{(n)}(t)}
    \le
    n!\delta^{-n}
    \left(\frac{n}{\alpha c e}\right)^{n/\alpha}.
\end{equation}
Using \(n^n\le e^n n!\), we get
\begin{equation}
    n^{n/\alpha}\le e^{n/\alpha}(n!)^{1/\alpha},
\end{equation}
and hence
\begin{equation}
    \abs{\rho_\sigma^{(n)}(t)}
    \le
    C L^n n!(n!)^{1/\alpha}
    =
    C L^n (n!)^{1+1/\alpha}
    =
    C L^n (n!)^\sigma .
\end{equation}
After increasing \(C,L\), the same estimate holds on any compact interval. Since \(e^{-t^{-\alpha}}\) decays faster than any power of \(t\), every derivative tends to \(0\) at \(t=0^+\). Thus
\begin{equation}
    \rho_\sigma^{(n)}(0)=0,\qquad n\ge0,
\end{equation}
and \(\rho_\sigma\) is flat at the endpoint.

The translated \(t\mapsto\rho_\sigma(1-t)\) is also Gevrey-\(\sigma\) for free. The denominator
\begin{equation}
    D(t)=\rho_\sigma(t)+\rho_\sigma(1-t)
\end{equation}
is strictly positive on \([0,1]\): near \(0\), \(\rho_\sigma(1-t)>0\), while near \(1\), \(\rho_\sigma(t)>0\). Since Gevrey classes with \(\sigma>1\) are closed under multiplication and under taking inverse if nonvanishing~\cite{RainerSchindl2014Composition}, \(D^{-1}\in G^\sigma([0,1])\). Therefore
\begin{equation}
    \chi_\sigma(t)=\rho_\sigma(t)D(t)^{-1}
\end{equation}
is Gevrey-\(\sigma\) on \([0,1]\). The endpoint values are immediate, and the quotient is flat at \(0\) because \(\rho_\sigma(t)\) is flat there. Similarly,
\begin{equation}
    1-\chi_\sigma(t)
    =
    \frac{\rho_\sigma(1-t)}
    {\rho_\sigma(t)+\rho_\sigma(1-t)}
\end{equation}
is flat at \(1\). Hence all positive order derivatives of \(\chi_\sigma\) vanish at both endpoints, and the constant extension remains Gevrey-\(\sigma\).
\end{proof}

\begin{lemma}[Gevrey auxiliary extension \(A_{\eta,\sigma}\)]
\label{lem:gevrey-closure}
Let \(\sigma>1\) and \(\abs{\eta}\le\pi/4\). Define
\begin{equation}
    A_{\eta,\sigma}(x)
    \defeq
    e^{i\eta x}
    \exp\!\left(-2i\eta\,\chi_\sigma(x-1)\right)
    =
    \begin{cases}
    e^{i\eta x} \qquad 0\le x \le 1\\
    e^{i\eta x} \exp\!\left(-2i\eta\,\chi_\sigma(x-1)\right) \qquad 0\le x\le2
    \end{cases}
\end{equation}
extended periodically with 2-periodicity. Then \(A_{\eta,\sigma}\in G^\sigma_2\),
\begin{equation}
    \abs{A_{\eta,\sigma}(x)}=1,
    \qquad
    A_{\eta,\sigma}(x)=e^{i\eta x}\quad(0\le x\le1),
\end{equation}
and there exist constants \(C_\sigma,R_\sigma>0\), depending only on \(\sigma\), such that
\begin{equation}
    \norm{A_{\eta,\sigma}^{(k)}}_\infi
    \le
    C_\sigma R_\sigma^k(k!)^\sigma,
    \qquad k \in \mathbb{N} \cup \{0\}
\end{equation}
uniformly for all \(\abs{\eta}\le\pi/4\).
\end{lemma}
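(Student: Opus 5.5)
We write \(A_{\eta,\sigma}=\exp(i\phi_\eta)\) with real phase \(\phi_\eta(x)\defeq \eta x-2\eta\,\chi_\sigma(x-1)\) on \([0,2]\), where \(\chi_\sigma\) is extended by \(0\) for \(t\le0\) and \(1\) for \(t\ge1\) as in \cref{lem:flat-gevrey-smoothstep}. Then \(\abs{A_{\eta,\sigma}}=1\) is automatic, since \(\phi_\eta\) is real. The identity \(A_{\eta,\sigma}(x)=e^{i\eta x}\) on \([0,1]\) follows from \(\chi_\sigma(x-1)=0\) for \(x\le1\).

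\textbf{Periodic gluing.} We check that the 2-periodic extension is \(C^\infty\) and that the derivative bounds are uniform in \(\eta\). At \(x=2\), \(\phi_\eta(2)=2\eta-2\eta=0=\phi_\eta(0)\), so the values agree. For derivatives, near \(x=2^-\) we have \(\chi_\sigma(x-1)\) identically flat to \(1\) at \(x=2\), so \(\phi_\eta^{(n)}(2^-)\) equals the \(n\)-th derivative of \(\eta x\): \(\eta\) for \(n=1\) and \(0\) for \(n\ge2\). Near \(x=0^+\), \(\phi_\eta=\eta x\), with the same derivatives. At \(x=1\) the flatness of \(\chi_\sigma\) at \(0\) gives smooth gluing. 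Hence \(\phi_\eta\) extends to a smooth function on \(\mathbb R\) whose derivative \(\phi_\eta'\) is 2-periodic and which satisfies \(\phi_\eta(x+2)=\phi_\eta(x)\). On one period, \(\phi_\eta\) is a linear function plus \(-2\eta\chi_\sigma(\cdot-1)\), so \(\norm{\phi_\eta^{(k)}}_\infi\le \abs\eta\,\delta_{k1}+2\abs\eta\,C R^k(k!)^\sigma\le \frac{\pi}{2}(1+C)R^k(k!)^\sigma\) for \(k\ge1\), using the Gevrey bound for \(\chi_\sigma\) from \cref{lem:flat-gevrey-smoothstep} on a compact interval.

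\textbf{Composition with the exponential (main obstacle).} We need bounds on \((e^{i\phi})^{(k)}\) with constants depending only on \(\sigma\). The tool is Faà di Bruno. Since \(\abs{\exp^{(j)}(i\phi)}=1\), we get
\[
\abs{(e^{i\phi})^{(k)}}\le \sum_{\pi\in\Pi_k}\prod_{B\in\pi}\abs{\phi^{(\abs B)}}\le \sum_{\pi}\prod_{B}M R^{\abs B}(\abs B!)^\sigma ,
\]
with \(M=\frac\pi2(1+C)\). The sum over set partitions is the hardest bookkeeping step. First, pull out \(R^k\). Then use \(\prod_B(\abs B!)^\sigma\le\bigl(\prod_B \abs B!\bigr)^{\sigma}\). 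Counting partitions by block sizes \(k_1+\dots+k_j=k\) gives \(\frac{k!}{j!\prod k_i!}\) partitions per composition. The sum becomes \(\sum_j\frac{M^j}{j!}\sum_{k_1+\cdots+k_j=k}k!\prod_i(k_i!)^{\sigma-1}\le (k!)^\sigma\sum_j \frac{M^j}{j!}\binom{k-1}{j-1}\), using \(\prod_i k_i!\le k!\). This is at most \((k!)^\sigma e^{M}2^{k}\). This yields \(C_\sigma=e^{M}\) and \(R_\sigma=2R\), both independent of \(\eta\) for \(\abs\eta\le\pi/4\). Alternatively, we can cite closure of Gevrey classes under composition with analytic functions~\cite{RainerSchindl2014Composition}, but the explicit Faà di Bruno count is what gives the uniformity in \(\eta\).

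\textbf{Conclusion.} Finally, the bound holds globally by periodicity, and the \(k=0\) case is \(\norm{A_{\eta,\sigma}}_\infi=1\). Together these give \(A_{\eta,\sigma}\in G^\sigma_2\) with the stated uniform constants.
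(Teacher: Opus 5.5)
Your proof is correct. Its gluing part matches the paper's: both use the flatness of \(\chi_\sigma\) at \(0\) and \(1\) to match all derivatives at \(x=1\) and across the periodic seam \(x=2\equiv 0\). You do this for the real phase \(\phi_\eta\); the paper does it for \(A_{\eta,\sigma}\) itself, obtaining \(A_{\eta,\sigma}^{(n)}(0)=A_{\eta,\sigma}^{(n)}(2)=(i\eta)^n\).

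Where you genuinely differ is the derivative estimate. The paper writes \(A_{\eta,\sigma}\) as the analytic factor \(e^{i\eta x}\) times an analytic function of a Gevrey function. It then invokes closure of \(G^\sigma\) under multiplication and analytic composition (Rainer--Schindl). Uniformity in \(\eta\) is only asserted, on the grounds that the seminorms and composition estimates are uniform on a compact range.

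You instead write \(A_{\eta,\sigma}=e^{i\phi_\eta}\) as a single composition and run Fa\`a di Bruno by hand. Since \(\abs{\exp^{(j)}(i\phi_\eta)}=1\), your set-partition weighting \(\frac{k!}{j!\prod_i k_i!}\) applies. Together with \(\prod_i k_i!\le k!\) and the count \(\binom{k-1}{j-1}\) of compositions of \(k\) into \(j\) parts, it yields \(\norm{A_{\eta,\sigma}^{(k)}}_\infi\le e^{M}(2R)^k(k!)^\sigma\). Here \(M=\frac\pi2(1+C)\), and \(C,R\) are the Gevrey constants of \(\chi_\sigma\) on \([-1,1]\).

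This gives exactly what the paper leaves informal: explicit constants \(C_\sigma=e^M\) and \(R_\sigma=2R\). They visibly depend only on \(\sigma\) and on the bound \(\abs\eta\le\pi/4\), with no reliance on the quantitative form of an external closure theorem. The paper's route is shorter, but its uniformity claim rests on an unquantified remark. Yours makes the lemma self-contained, given the Gevrey bound for \(\chi_\sigma\) that both arguments take from the preceding lemma.

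One cosmetic point: your phase bound \(\abs\eta\,\delta_{k1}+2\abs\eta\,CR^k(k!)^\sigma\le\frac\pi2(1+C)R^k(k!)^\sigma\) at \(k=1\) needs \(\frac\pi4\le\frac\pi2 R\), i.e.\ \(R\ge\frac12\). You may assume this after enlarging \(R\).
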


\begin{proof}
By Lemma~\ref{lem:flat-gevrey-smoothstep}, the extended \(\chi_\sigma(x-1)\) is Gevrey-\(\sigma\) on \([0,2]\). In
\begin{equation}
    A_{\eta,\sigma}(x)
    =
    e^{i\eta x}
    \exp(-2i\eta\chi_\sigma(x-1)).
\end{equation}
The factor \(x\mapsto e^{i\eta x}\) is analytic, hence Gevrey-\(1\), and therefore Gevrey-\(\sigma\). Since Gevrey classes with \(\sigma>1\) are closed under analytic composition and multiplication~\cite{RainerSchindl2014Composition}, \(A_{\eta,\sigma}\in G^\sigma([0,2])\).

At \(x=1\), \(\chi_\sigma(0)=0\), so the two branches agree:
\begin{equation}
    e^{i\eta}
    =
    e^{i\eta}\exp(-2i\eta\chi_\sigma(0)).
\end{equation}
Since \(\chi_\sigma^{(s)}(0)=0\) for all \(s\ge1\), the factor \(\exp(-2i\eta\chi_\sigma(x-1))\) has value \(1\) at \(x=1\) and all its positive order derivatives vanish there. Therefore the derivatives of the right branch at \(x=1\) are exactly those of \(e^{i\eta x}\).

The same flatness gives the periodic matching. At \(x=0\),
\begin{equation}
    A_{\eta,\sigma}^{(n)}(0)=(i\eta)^n .
\end{equation}
At \(x=2\), \(\chi_\sigma(1)=1\), so
\begin{equation}
    A_{\eta,\sigma}(2)
    =
    e^{2i\eta}e^{-2i\eta}
    =
    1
    =
    A_{\eta,\sigma}(0).
\end{equation}
For derivatives, \(\chi_\sigma^{(s)}(1)=0\) for all \(s\ge1\), so that only the derivatives of \(e^{i\eta x}\) contribute:
\begin{equation}
    A_{\eta,\sigma}^{(n)}(2)
    =
    (i\eta)^n e^{2i\eta}e^{-2i\eta}
    =
    (i\eta)^n
    =
    A_{\eta,\sigma}^{(n)}(0).
\end{equation}
Thus \(A_{\eta,\sigma}\) defines a Gevrey-\(\sigma\) function on the periodic interval \(\mathbb R/2\mathbb Z\).

The constants can be chosen uniformly for \(\abs{\eta}\le\pi/2\), because the Gevrey seminorms of the phase \(\eta x-2\eta\chi_\sigma(x-1)\) are bounded uniformly on this compact \(\eta\)-interval, and the analytic composition estimates are uniform on the corresponding bounded range. Therefore
\begin{equation}
    \norm{A_{\eta,\sigma}^{(k)}}_\infi
    \le
    C_\sigma R_\sigma^k(k!)^\sigma
\end{equation}
with constants depending only on \(\sigma\). The identity on \([0,1]\) follows from \(\chi_\sigma(x-1)=0\) there, and unit modulus is immediate.
\end{proof}

\phantomsection\label{proof:main-gevrey-upper}
\begin{proof}[Proof of \Cref{thm:main-gevrey-upper}]
Choose Gevrey class auxiliary extension in Lemma~\ref{lem:gevrey-closure} for the circuit realization, apply \cref{thm:single-residual}, apply Corollary~\ref{cor:Jackson-single} to estimate the depth-error scaling and use Lemma~\ref{lem:gevrey-jackson} for the sharpened rate.
\end{proof}

\subsection{Circuit wise upper bound}

\phantomsection\label{proof:main-circuit-upper}
\begin{proof}[Proof of \Cref{thm:main-circuit-upper}]
The Frobenius bound of \Cref{thm:main-gevrey-upper} implies the same bound in operator norm for each tunable upload gate. Write the tunable upload circuit and the fixed upload circuit as identical products except that each \(e^{iw_{jk}x_k\sigma_z}\) is replaced by a fixed upload block \(R_{K_{jk},N}^{(w_{jk})}(x_k)\). The telescoping identity for products gives
\begin{equation}
    \norm{U_L^w(x)-U_D^{mf}(x)}_{\mathrm{op}}
    \le
    \sum_{j=1}^{L}\sum_{k=1}^{m}
    \norm{
    e^{iw_{jk}x_k\sigma_z}
    -
    R_{K_{jk},N}^{(w_{jk})}(x_k)
    }_{\mathrm{op}},
\end{equation}
because all surrounding factors are unitary. Taking the supremum over \(x\) and applying \Cref{thm:main-gevrey-upper} proves the error estimate. The upload count is the sum of the \(O(\abs{K_{jk}}+N)\) costs of the \(L\,m\) coordinate blocks, and the displayed choice of \(N\) is obtained by solving
\begin{equation}
    L\,m\, C_\sigma e^{-c_\sigma N^{1/\sigma}}\le\eps .
\end{equation}
\end{proof}

\phantomsection\label{proof:dense-circuit-upper}
\phantomsection\label{proof:measured-circuit-upper}
\begin{proof}[Proof of \Cref{cor:measured-circuit-upper}]
Apply \(\Cref{thm:main-circuit-upper}\) with circuit error \(\eps\). This gives a coordinate-wise fixed upload circuit \(U_D^{mf}\) satisfying
\begin{equation}
    \sup_x\norm{U_L^w(x)-U_D^{mf}(x)}_{\mathrm{op}}\le\eps
\end{equation}
with fixed upload count
\begin{equation}
    O_\sigma\!\left(
    \sum_{j=1}^{L}\sum_{k=1}^{m}\abs{K_{jk}}
    +
    L\,m\left(\log\frac{L\,m\, C_\sigma}{\eps}\right)^\sigma
    \right).
\end{equation}
The construction in Ref.~\cite[Theorem~7]{perezsalinas2025universal} realizes coordinate fixed upload blocks inside the dense architecture generated by \(V_m(x)\), up to a constant overhead in dense fixed upload calls and \(x\)-independent trainable \(SU(2m)\) unitaries. Absorbing this constant overhead into the big-\(O_\sigma\) notation gives a dense fixed upload circuit \(U_D^{\mathrm{dense}}\) with the displayed dense upload count and the same active block action as \(U_D^{mf}\).

For scalar amplitudes,
\begin{equation}
    \abs{h_U(x)-h_V(x)}
    =
    \abs{\langle0|(U(x)-V(x))|0\rangle}
    \le
    \norm{U(x)-V(x)}_{\mathrm{op}} .
\end{equation}
Applying this to the active block gives
\begin{equation}
    \norm{h-h_{U_D^{\mathrm{dense}}}}_\infi\le\eps ,
\end{equation}
as claimed.
\end{proof}

\section{Proofs for lower bounds}
\label{app:lower-bounds}
\subsection{Single residual tunable upload gate lower bound}

Here we first prove the depth-error lower bound for approximating a single residual tunable upload gate with fixed upload circuits.

For delivering the upper bound for a single tunable upload gate we used two structural ingredients. The first is the restriction of the target domain and 2-periodic auxiliary extension that fits the regularity needed for polynomial approximation, and the second is generalized QSP polynomial form natural to the one variable fixed upload circuit structure. We cannot obtain a Gevrey class target function and approximate using Jackson theorem without playing the tricks over the domain; we cannot get a realization of fixed upload circuits without using generalized QSP theorem to bridge between polynomials and quantum circuits. 

Therefore, if we were to seek for a lower bound for the single residual tunable upload gate approximation, we should ask what these structures give us. Our single residual tunable upload gate lower bound exploits these ingredients but in the opposite direction. The tool we use is the Turán-Nazarov inequality for exponential polynomials, stated in \(\Cref{thm:turan-nazarov}\). This inequality fits our purpose because it relates the size of an exponential polynomial on a subinterval to its size on a larger interval.

\phantomsection\label{proof:main-single-gate-lower}
\begin{proof}[Proof of \Cref{thm:main-single-gate-lower}]
A single variable fixed upload circuit with \(N\) fixed upload gates coincides with generalized QSP circuits of Laurent degree at most \(N\), with uniform parity over modes, thus has the form
\begin{equation}
    U_N^{1f}
    =
    R_N(x)
    =
    \begin{pmatrix}
    P_N(e^{i\frac\pi2 x}) & -Q_N(e^{i\frac\pi2 x})^*\\
    Q_N(e^{i\frac\pi2 x}) & P_N(e^{i\frac\pi2 x})^*
    \end{pmatrix},
\end{equation}
with
\begin{equation}
    P_N(z)=\sum_{n=-N}^{N}p_nz^n,
    \qquad
    \abs{P_N(e^{i\frac\pi2 x})}^2+\abs{Q_N(e^{i\frac\pi2 x})}^2=1,
\end{equation}
and write
\begin{equation}
    p_N(x)=P_N(e^{i\frac\pi2 x}),
    \qquad
    q_N(x)=Q_N(e^{i\frac\pi2 x}),
    \qquad
    f_\eta(x)=e^{i\eta x}.
\end{equation}
The generalized QSP constraint gives
\begin{equation}
    \abs{p_N(x)}^2+\abs{q_N(x)}^2=1.
\end{equation}
Therefore

\begin{equation}
    \norm{R_N(x)-e^{i\eta x\sigma_z}}_F^2
    =
    2\abs{p_N(x)-f_\eta(x)}^2
    +
    2\abs{q_N(x)}^2    
    =
    4\left(
    1-\operatorname{Re}\bigl[p_N(x)e^{-i\eta x}\bigr]
    \right).
\end{equation}

Define
\begin{equation}
    g_N(x)
    \defeq
    1-\operatorname{Re}\bigl[p_N(x)e^{-i\eta x}\bigr].
\end{equation}
Since \(\abs{p_N(x)}\le1\), one has \(g_N(x)\ge0\). The error assumption implies
\begin{equation}
    0\le g_N(x)\le\frac{\eps^2}{4},
    \qquad 0\le x\le1.
\end{equation}

The fixed upload Laurent polynomial form is 4-periodic:
\begin{equation}
    p_N(4)=p_N(0).
\end{equation}
At \(x=0\), the Frobenius error bound gives
\begin{equation}
    \abs{p_N(0)-1}\le\frac{\eps}{\sqrt2}.
\end{equation}
Consequently,

\begin{equation}
    g_N(4)
    =
    1-\operatorname{Re}\bigl[p_N(0)e^{-4i\eta}\bigr]    
    \ge
    1-\cos(4\eta)-\abs{p_N(0)-1}    
    \ge
    2\sin^2 2\eta-\frac{\eps}{\sqrt2}.
\end{equation}
Under the stated small error condition,
\begin{equation}
    g_N(4)\ge\sin^2 2\eta .
\end{equation}

Now \(g_N\) is an exponential polynomial. Write
\begin{equation}
    p_N(x)=\sum_{n=-N}^{N}c_ne^{i\frac\pi2 nx},
\end{equation}
then
\begin{equation}
    g_N(x)
    =
    1-\frac12\sum_{n=-N}^{N}
    \left(
    c_ne^{i(\frac\pi2 n-\eta)x}
    +
    \overline{c_n}e^{-i(\frac\pi2 n-\eta)x}
    \right).
\end{equation}
Thus \(g_N\) has at most \(4N+3\) exponential terms. Apply Theorem~\ref{thm:turan-nazarov} with \(I=[0,4]\) and \(E=[0,1]\), and absorb the fixed ratio \(\abs I/\abs E=4\) into a universal constant \(\Lambda>1\). Then
\begin{equation}
    \sup_{x\in[0,4]}\abs{g_N(x)}
    \le
    \Lambda^{4N+2}
    \sup_{x\in[0,1]}\abs{g_N(x)}
    \le
    \Lambda^{4N+2}\frac{\eps^2}{4}.
\end{equation}
Since \(g_N(4)\ge\sin^2 2\eta\),
\begin{equation}
    \sin^2 2\eta
    \le
    \Lambda^{4N+2}\frac{\eps^2}{4}.
\end{equation}
Taking square roots gives
\begin{equation}
    \eps
    \ge
    2\abs{\sin 2\eta}\,\Lambda^{-(2N+1)} .
\end{equation}
Rearranging proves the logarithmic lower bound.
\end{proof}

\begin{remark}[Near optimality for single residual tunable upload gate approximation]
The Gevrey construction gives, for every fixed \(\sigma>1\),
\begin{equation}
    N
    =
    O_\sigma\!\left[
    \left(\log\frac1\eps\right)^\sigma
    \right],
\end{equation}
while \Cref{thm:main-single-gate-lower} gives
\begin{equation}
    N=\Omega_\eta\!\left(\log\frac1\eps\right)
\end{equation}
for every fixed non-integer residual frequency \(\eta\). Thus the construction is near-logarithmic in fixed upload depth. The remaining gap is the exponent \(\sigma>1\) and the behavior of the constants as \(\sigma\downarrow1\).
\end{remark}

This tells us that if a tunable upload circuit approximation is separated into single tunable upload gate replacements, we can get a near-logarithmic approximation, but each non-integer residual tunable upload gate approximation is never faster than logarithmic with respect to \(\frac{1}{\varepsilon}\).

\subsection{Circuit wise lower bound}
In the upper bound theorems, once we have an upper bound for single tunable upload gate approximation, we can obtain an upper bound for the circuit wise approximation by a worst case argument. We can replace each tunable upload gate separately and telescope the product error. This separability is not true in general for all possible approximations. There is no general requirement that a circuit wise approximation decomposes into separated single upload gate approximations. Interactions among tunable upload gates may make the product easier to approximate. For this reason, a circuit wise lower bound cannot be obtained by simply adding single residual tunable upload gate lower bounds. In fact, as we will prove later, there is no free lower bound that holds for all target tunable upload circuits without an additional obstruction.

However, we can generalize the approach used to establish the lower bound for a single residual tunable upload gate and investigate whether similar reasoning can be extended to derive a circuit level lower bound. Care is required in this generalization, as the argument applies only to the class of mismatch target functions. The main idea is that the domain asked for approximation is a subset of the domain of these approximating functions, and the rigid 4-periodicity condition of the fixed upload circuits allow us to utilize the mismatch, together with Turán-Nazarov inequality, this generates a lower bound.

\begin{lemma}[Exponential mode count for finite quantum circuits]
\label{lem:finite-phase-term-count}
Let
\begin{equation}
    U(t)
    =
    B_0
    \left(
    \prod_{\ell=1}^{L} e^{i\omega_\ell t H_\ell}B_\ell
    \right)
\end{equation}
be a finite dimensional matrix valued function, where the matrices \(B_0,\ldots,B_L\) are independent of \(t\), the numbers \(\omega_\ell\in\mathbb R\), and each \(H_\ell\) is Hermitian with finite spectrum \(\operatorname{spec}(H_\ell)\). Then every scalar matrix element \(\langle u|U(t)|v\rangle\) is an exponential polynomial whose frequencies are contained in
\begin{equation}
    \left\{
    \sum_{\ell=1}^{L}\omega_\ell\lambda_\ell:
    \lambda_\ell\in\operatorname{spec}(H_\ell)
    \right\}.
\end{equation}
In particular, each scalar matrix element has at most
\begin{equation}
    S_U
    \le
    \prod_{\ell=1}^{L}
    \abs{\operatorname{spec}(H_\ell)}
\end{equation}
distinct exponential terms. For one-qubit phase uploads
\(e^{iw_\ell t\sigma_z}\), this gives \(S_U\le 2^L\).
\end{lemma}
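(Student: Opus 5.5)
The plan is to diagonalize each upload factor and expand the product, exactly as in the parity computation of \cref{rem:parity}. Because each \(H_\ell\) is Hermitian with finite spectrum, the spectral theorem gives
\begin{equation}
    H_\ell=\sum_{\lambda\in\operatorname{spec}(H_\ell)}\lambda\,\Pi_{\ell,\lambda},
    \qquad
    e^{i\omega_\ell tH_\ell}=\sum_{\lambda\in\operatorname{spec}(H_\ell)}e^{i\omega_\ell\lambda t}\,\Pi_{\ell,\lambda},
\end{equation}
where the \(\Pi_{\ell,\lambda}\) are the spectral projectors. They are \(t\)-independent, and \(t\) enters only through the scalar phases \(e^{i\omega_\ell\lambda t}\).

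Next, substitute this decomposition into \(U(t)=B_0\prod_\ell e^{i\omega_\ell tH_\ell}B_\ell\) and distribute the finite product over the finite sums. This gives
\begin{equation}
    U(t)=\sum_{(\lambda_1,\ldots,\lambda_L)}
    \exp\!\Bigl(it\sum_{\ell=1}^{L}\omega_\ell\lambda_\ell\Bigr)\,
    B_0\Pi_{1,\lambda_1}B_1\Pi_{2,\lambda_2}\cdots\Pi_{L,\lambda_L}B_L ,
\end{equation}
where the sum runs over \(\prod_\ell\operatorname{spec}(H_\ell)\). The scalar phases commute with every matrix and can be pulled out front, while the remaining matrix coefficients do not depend on \(t\).

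Now take a scalar matrix element. Applying \(\langle u|\cdot|v\rangle\) term by term shows that \(\langle u|U(t)|v\rangle\) is a finite linear combination of exponentials \(e^{i\mu t}\). Each frequency \(\mu\) has the form \(\sum_\ell\omega_\ell\lambda_\ell\), which gives the stated frequency set. Terms with coinciding \(\mu\) are grouped together, so the number of distinct exponential terms is at most the number of index tuples, namely \(\prod_\ell\abs{\operatorname{spec}(H_\ell)}\).

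For the one-qubit specialization, write \(e^{iw_\ell t\sigma_z}\) as \(e^{i\omega_\ell tH_\ell}\) with \(H_\ell=\sigma_z\). Since \(\operatorname{spec}(\sigma_z)=\{\pm1\}\), the count gives \(S_U\le2^L\). This also covers the tunable upload circuits \(U_L^w\) restricted to one variable, because the fixed factors \(e^{i\lambda\sigma_z}\), \(e^{i\theta\sigma_z}\) and \(e^{i\phi\sigma_y}\) are absorbed into the \(B_\ell\).

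No step here is a real obstacle. The only point needing care is bookkeeping: the bound must count distinct frequencies after grouping, not index tuples, and the upper bound by the tuple count is valid regardless of how much cancellation or coalescence occurs.
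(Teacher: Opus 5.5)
Your proposal is correct and follows the paper's proof essentially step for step: spectrally decompose each $e^{i\omega_\ell tH_\ell}$, distribute the product over the spectral sums, take the matrix element, and group coinciding frequencies to bound the count by $\prod_\ell\abs{\operatorname{spec}(H_\ell)}$. Your remark that $U_L^w$ fits the template once the trainable rotations are absorbed into the $B_\ell$ is correct, and it makes explicit what the paper leaves implicit when it invokes the lemma for $S\le 2^L$.
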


\begin{proof}
For each \(\ell\), write the spectral decomposition
\begin{equation}
    H_\ell
    =
    \sum_{\lambda\in\operatorname{spec}(H_\ell)}
    \lambda\,\Pi_{\ell,\lambda}.
\end{equation}
Then
\begin{equation}
    e^{i\omega_\ell tH_\ell}
    =
    \sum_{\lambda\in\operatorname{spec}(H_\ell)}
    e^{i\omega_\ell\lambda t}\Pi_{\ell,\lambda}.
\end{equation}
Substituting these decompositions into the product for \(U(t)\) gives
\begin{equation}
    U(t)
    =
    \sum_{\lambda_1,\ldots,\lambda_L}
    e^{i(\omega_1\lambda_1+\cdots+\omega_L\lambda_L)t}
    B_0\Pi_{1,\lambda_1}B_1\cdots
    \Pi_{L,\lambda_L}B_L .
\end{equation}
Taking the scalar matrix element between fixed vectors \(u,v\) preserves this finite expansion and only replaces each matrix coefficient by a scalar coefficient. Terms with the same frequency may be combined; therefore the number of distinct exponential terms is bounded by the number of spectral choices, \(\prod_{\ell=1}^{L}\abs{\operatorname{spec}(H_\ell)}\). Since \(\operatorname{spec}(\sigma_z)=\{-1,1\}\), the one qubit data upload circuit case has at most \(2^L\) terms.
\end{proof}

\phantomsection\label{proof:main-circuit wise-lower}
\begin{proof}[Proof of \Cref{thm:main-circuit wise-lower}]
Write
\begin{equation}
    \Delta=\Delta_4(U_L^w)
    =
    \norm{U_L^w(4)-U_L^w(0)}_{\mathrm{op}} .
\end{equation}
The operator norm can be tested by matrix elements. Hence there exist unit vectors \(u,v\in\mathbb C^d\) such that
\begin{equation}
    \abs{
    \langle u|
    \bigl(U_L^w(0)-U_L^w(4)\bigr)
    |v\rangle
    }
    =
    \Delta .
\end{equation}

Now look at the scalar error in exactly this matrix direction:
\begin{equation}
    h(t)
    =
    \langle u|
    \bigl(U_D^{1f}(t)-U_L^w(t)\bigr)
    |v\rangle .
\end{equation}
On the approximation interval \([0,1]\),
\begin{equation}
    \abs{h(t)}
    \le
    \norm{U_D^{1f}(t)-U_L^w(t)}_{\mathrm{op}}
    \le
    \eps .
\end{equation}
Therefore
\begin{equation}
    \sup_{t\in[0,1]}\abs{h(t)}\le\eps .
\end{equation}

The fixed upload circuit is 4-periodic entry wise. Thus
\begin{equation}
    U_D^{1f}(4)=U_D^{1f}(0).
\end{equation}
At \(t=4\), the scalar error is therefore forced to see the endpoint mismatch of the target:

\begin{equation}
    \begin{aligned}  
    \abs{h(4)}
    &=
    \abs{
    \langle u|
    \bigl(U_D^{1f}(4)-U_L^w(4)\bigr)
    |v\rangle
    } \\
    &=
    \abs{
    \langle u|
    \bigl(U_D^{1f}(0)-U_L^w(4)\bigr)
    |v\rangle
    } \\
    &\ge
    \abs{
    \langle u|
    \bigl(U_L^w(0)-U_L^w(4)\bigr)
    |v\rangle
    }
    -
    \abs{
    \langle u|
    \bigl(U_D^{1f}(0)-U_L^w(0)\bigr)
    |v\rangle
    } \\
    &\ge
    \Delta-\eps .
    \end{aligned}
\end{equation}

Since \(\eps\le\Delta/2\), this gives
\begin{equation}
    \abs{h(4)}\ge\frac{\Delta}{2}.
\end{equation}

Now count exponential terms. The fixed upload scalar \(\langle u|U_D^{1f}(t)|v\rangle\) has at most \(2D+1\) 4-periodic terms:
\begin{equation}
    e^{i\frac\pi2 nt},\qquad -D\le n\le D.
\end{equation}

By assumption, the target scalar \(\langle u|U_L^w(t)|v\rangle\) has at most \(S\) exponential terms. Therefore \(h(t)\) is an exponential polynomial with at most\((2D+1)+S\) terms.

Apply Theorem~\ref{thm:turan-nazarov} to \(h\) with
\(
    I=[0,4],
    E=[0,1].
\)
Since \(\abs I/\abs E=4\), we may absorb this fixed ratio into a universal constant \(\Lambda>1\). Because \(h\) has at most \(2D+1+S\) terms,
\begin{equation}
    \sup_{t\in[0,4]}\abs{h(t)}
    \le
    \Lambda^{2D+S}
    \sup_{t\in[0,1]}\abs{h(t)}
    \le
    \Lambda^{2D+S}\eps .
\end{equation}
But \(\abs{h(4)}\le\sup_{[0,4]}\abs{h}\) and \(\abs{h(4)}\ge\Delta/2\). Hence
\begin{equation}
    \frac{\Delta}{2}
    \le
    \Lambda^{2D+S}\eps .
\end{equation}
Equivalently,
\begin{equation}
    \log\!\left(\frac{\Delta}{2\eps}\right)
    \le
    (2D+S)\log\Lambda .
\end{equation}
Rearranging proves
\begin{equation}
    D
    \ge
    \frac{1}{2\log\Lambda}
    \log\!\left(\frac{\Delta}{2\eps}\right)
    -
    \frac{S}{2}.
\end{equation}

\begin{remark}
This proof is a reasonable generalization of the single residual gate case.
\end{remark}
\end{proof}

\subsection{No universal lower bound}

The theorem above applies to any tunable upload circuit with a mismatch obstruction, but the bound depends on the mismatch obstruction \(\Delta_4(U_L^w)\). One cannot remove this dependence.

\phantomsection\label{proof:main-no-mismatch-free-bound}
\begin{proof}[Proof of \Cref{thm:main-no-mismatch-free-bound}]
Take two non-integer residual tunable upload gates with opposite signs:
\begin{equation}
    U_L^w(t)
    =
    e^{i\eta t\sigma_z}
    e^{-i\eta t\sigma_z},
    \qquad
    \eta\notin \frac{\pi}{4}\mathbb Z .
\end{equation}
The whole product is exactly
\begin{equation}
    U_L^w(t)=I.
\end{equation}
It is therefore approximated with zero error by the depth-zero fixed upload circuit
\begin{equation}
    U_0^{1f}(t)=I.
\end{equation}
So even though the individual gates contain non-integer residual frequencies, the final circuit has no approximation hardness at all. This proves that a valid circuit wise lower bound must depend on a circuit wise obstruction such as \(\Delta_4(U_L^w)\), or else it must impose an additional no cancellation or isolation assumption.
\end{proof}

\section{Beyond: Approximating multivariate functions} \label{app:function-approx}
This section proves several function level statements in the main results section. It should be noted that conditions on the target functions are assumed to be narrower than arbitrary continuous function. In some sense, these are the most generalized versions of the theory we observed targeting functions with specific restrictions.

\phantomsection\label{proof:continuous-function-upper}
\begin{proof}[Proof of \Cref{cor:continuous-function-upper}]
First apply \(\Cref{thm:main-circuit-upper}\) with circuit error \(\eps/2\). This gives a coordinate-wise fixed upload circuit \(U_D^{mf}\) with
\begin{equation}
    \sup_x
    \norm{U_L^w(x)-U_D^{mf}(x)}_{\mathrm{op}}
    \le
    \frac{\eps}{2}
\end{equation}
and fixed upload count
\begin{equation}
    O_\sigma\!\left(
    \sum_{j=1}^{L}\sum_{k=1}^{m}\abs{K_{jk}}
    +
    L\,m\left(\log\frac{2L\,m\, C_\sigma}{\eps}\right)^\sigma
    \right).
\end{equation}
Since
\begin{equation}
    \abs{h_U(x)-h_V(x)}
    =
    \abs{\langle0|(U(x)-V(x))|0\rangle}
    \le
    \norm{U(x)-V(x)}_{\mathrm{op}},
\end{equation}
we have
\begin{equation}
    \norm{h-h_{U_D^{mf}}}_\infi
    \le
    \frac{\eps}{2}.
\end{equation}
Combining this with \(\norm{f-h}_\infi\le\eps/2\) gives
\begin{equation}
    \norm{f-h_{U_D^{mf}}}_\infi\le\eps .
\end{equation}

Second, apply \(\Cref{cor:dense-circuit-upper}\) with accuracy \(\eps/2\). This gives
\begin{equation}
    \norm{h-h_{U_D^{\mathrm{dense}}}}_\infi
    \le
    \frac{\eps}{2}
\end{equation}
with the same asymptotic dense-upload count. The triangle inequality gives
\begin{equation}
    \norm{f-h_{U_D^{\mathrm{dense}}}}_\infi\le\eps .
\end{equation}
The density statement follows by first choosing a tunable upload approximant within \(\eps/2\), then applying the appropriate coordinate or dense fixed upload replacement. The final assertion is the special case \(\mathcal F=C_{\le1}(X)\).
\end{proof}

\phantomsection\label{proof:mismatch-function-lower}
\begin{proof}[Proof of \Cref{cor:mismatch-function-lower}]
Let
\begin{equation}
    r(t)=h_D(t)-h(t).
\end{equation}
On \([0,1]\), \(\abs{r(t)}\le\eps\). Since \(h_D\) is 4-periodic,
\begin{equation}
    h_D(4)=h_D(0).
\end{equation}
Therefore
\begin{equation}
\begin{aligned}
    \abs{r(4)}
    &=
    \abs{h_D(0)-h(4)}    \\
    &\ge
    \abs{h(0)-h(4)}
    -
    \abs{h_D(0)-h(0)}    \\
    &\ge
    \Delta_4(h)-\eps
    \ge
    \frac{\Delta_4(h)}{2}.
\end{aligned}
\end{equation}
A scalar amplitude hypothesis from a depth-\(D\) fixed upload circuit has at most \(2D+1\) 4-periodic exponential terms, and \(h\) has at most \(S\) terms, so \(r\) has at most \(2D+1+S\) terms. Apply \(\Cref{thm:turan-nazarov}\) with \(I=[0,4]\) and \(E=[0,1]\), absorbing the fixed ratio \(\abs I/\abs E=4\) into a universal constant \(\Lambda>1\). Then
\begin{equation}
    \frac{\Delta_4(h)}{2}
    \le
    \sup_{t\in[0,4]}\abs{r(t)}
    \le
    \Lambda^{2D+S}\eps .
\end{equation}
Taking logarithms and rearranging proves the claim.
\end{proof}

\phantomsection\label{proof:dense-family-lower}
\begin{proof}[Proof of \Cref{cor:dense-family-lower}]
Let
\begin{equation}
    r(t)
    =
    h_{U_D^{\mathrm{dense}}}(\gamma(t))-h^*(\gamma(t)).
\end{equation}
The assumed uniform approximation on \([0,1]^m\) gives
\begin{equation}
    \abs{r(t)}\le\eps,
    \qquad 0\le t\le1.
\end{equation}

Along the slice \(\gamma(t)=x^{(0)}+te_k\), each dense upload
\begin{equation}
    V_m(\gamma(t))
    =
    \sum_{j=1}^{m}
    \ket{j}\!\bra{j}\otimes e^{i\pi \gamma_j(t)\sigma_z/4}
\end{equation}
has \(t\)-dependent phases only in the \(k\)-th block, where the scalar phases are
\(
    e^{\pm i\pi t/4}.
\)
All other blocks contribute only \(t\)-independent factors. Therefore a product with at most \(D\) dense uploads has scalar amplitudes of the form
\begin{equation}
    h_{U_D^{\mathrm{dense}}}(\gamma(t))
    =
    \sum_{n=-D}^{D}b_ne^{i\pi nt/4}.
\end{equation}
In particular, \(h_{U_D^{\mathrm{dense}}}\circ\gamma\) is period four and has at most \(2D+1\) exponential terms.

Using period four of the dense fixed upload amplitude on this slice,
\begin{equation}
\begin{aligned}
    \abs{r(8)}
    &=
    \abs{
    h_{U_D^{\mathrm{dense}}}(\gamma(0))
    -
    h^*(\gamma(8))
    }    \\
    &\ge
    \abs{h^*(\gamma(0))-h^*(\gamma(8))}
    -
    \abs{
    h_{U_D^{\mathrm{dense}}}(\gamma(0))
    -
    h^*(\gamma(0))
    }    \\
    &\ge
    \Delta_{8,\gamma}(h^*)-\eps
    \ge
    \frac{\Delta_{8,\gamma}(h^*)}{2}.
\end{aligned}
\end{equation}

By assumption, \(h^*\circ\gamma\) has at most \(S\) exponential terms. Hence \(r\) has at most \(2D+1+S\) terms. Apply \(\Cref{thm:turan-nazarov}\) with \(I=[0,8]\) and \(E=[0,1]\), absorbing the fixed ratio \(\abs I/\abs E=8\) into a universal constant \(\Lambda>1\). Then
\begin{equation}
    \frac{\Delta_{8,\gamma}(h^*)}{2}
    \le
    \sup_{t\in[0,8]}\abs{r(t)}
    \le
    \Lambda^{2D+S}\eps .
\end{equation}
Taking logarithms and rearranging proves the stated bound. The final claim follows by applying this bound to the single member \(h^*\). 
\end{proof}
\bibliography{bibliography}

\end{document}